\documentclass[12pt, draftclsnofoot, onecolumn]{IEEEtran}

\usepackage{amsmath,amssymb,amsthm}
\usepackage{mathtools}
\usepackage{enumitem}
\usepackage{comment}
\usepackage{xcolor}
\usepackage{graphicx}
\usepackage{cite}
\usepackage{tikz}

\definecolor{redish}{rgb}{0.9, 0.17, 0.31}
\definecolor{fuchs}{rgb}{0.57, 0.36, 0.51}
\usepackage[colorlinks=true,linkcolor=fuchs]{hyperref}
\usepackage{cleveref} 

\newtheorem{theorem}{Theorem}
\newtheorem{lemma}[theorem]{Lemma}
\newtheorem{proposition}[theorem]{Proposition}
\newtheorem{corollary}[theorem]{Corollary}
\newtheorem{definition}{Definition}
\newtheorem{example}{Example}
\newtheorem{remark}{Remark}
\newtheorem{problem}{Problem}

\newtheorem{conjecture}{Conjecture}

\newcommand{\E}{\mathbb{E}}
\newcommand{\F}{\mathbb{F}}
\newcommand{\cC}{\mathcal{C}}

\newcommand{\Lam}{\Lambda}
\newcommand{\harm}[1]{H_{#1}}

\newcommand{\taufile}[1]{\tau_{#1}}

\begin{document}

\title{Coded Information Retrieval\\ for Block-Structured DNA-Based Data Storage\vspace{0.5cm}}

\author{Daniella Bar-Lev \\
\textit{\normalsize{Department of Mathematics, Universität Zürich, Switzerland}} \\
\normalsize{email: daniella.bar-lev@math.uzh.ch}
\thanks{This work was supported in part by the National Science Foundation (NSF) under Grant CCF-2212437, by the Swiss National Science Foundation under grant number 212865, and by Schmidt Sciences.}
}

\maketitle

\begin{abstract}
We study the problem of coded information retrieval for block-structured data, motivated by DNA-based storage systems where a database is partitioned into multiple files that must each be recoverable as an atomic unit. We initiate and formalize the block-structured retrieval problem, wherein $k$ information symbols are partitioned into two files $F_1$ and $F_2$ of sizes $s_1$ and $s_2 = k - s_1$. The objective is to characterize the set of achievable expected retrieval time pairs $\bigl(E_1(G), E_2(G)\bigr)$ over all $[n,k]$ linear codes with generator matrix $G$. 
We derive a family of linear lower bounds via mutual exclusivity of recovery sets, and develop a nonlinear geometric bound via column projection  that holds for every linear code. For codes with no mixed columns, this yields the hyperbolic constraint $s_1/E_1 + s_2/E_2 \le 1$, which we conjecture to hold universally whenever $\max\{s_1,s_2\} \ge 2$. We analyze explicit codes, such as the identity code, file-dedicated MDS codes, and the systematic global MDS code, and compute their exact expected retrieval times. For file-dedicated codes we prove MDS optimality within the family and verify the hyperbolic constraint. For global MDS codes, we establish dominance by the proportional local MDS allocation via a convex-ordering argument for hypergeometric distributions, simplifying and extending prior work to the asymmetric case. Finally, we characterize the limiting achievability region as $n \to \infty$: the hyperbolic boundary is asymptotically achieved by file-dedicated MDS codes, and is conjectured to be the exact boundary of the limiting achievability region.
\end{abstract}

\section{Introduction}\label{sec:intro}

\IEEEPARstart{D}{NA}-based data storage has emerged as a compelling medium for long-term archival, offering information density orders of magnitude beyond conventional magnetic and solid-state drives, together with exceptional longevity and negligible energy cost during storage~\cite{church2012next, 
goldman2013towards, grass2015robust}. Pioneering experiments 
demonstrated larger datasets, enhanced robustness via advanced ECC, and capabilities such as random access using polymerase chain reaction (PCR), further validating and extending the pipeline~\cite{bornholt2016dna, blawat2016forward, erlich2017dna, 
yazdi2017portable, organick2018random, anavy2019data,
chandak2020overcoming, bar2025scalable}.
Collectively, these milestones validate the DNA-based data storage pipeline, while highlighting the emerging computational and architectural challenges that must be addressed to scale these systems practically \cite{bar2024zettabyte}.

In a conventional DNA-based data storage pipeline, digital information is translated into sequences of biological nucleotides and chemically synthesized into DNA molecules. Due to biochemical challenges, data cannot be written as a single continuous string; instead, it is partitioned into millions of short strands (typically 250 to 300 nucleotides). As these strands are preserved together in a structureless, unordered biochemical pool, each must be augmented with an index to encode its original position within the data. During synthesis, multiple redundant noisy copies of each strand are generated. Data retrieval is performed via high-throughput sequencing, which samples molecules from this pool to produce a large multiset of reads affected by unique channel errors such as insertions, deletions, and substitutions. Recovering the original digital files from this unordered, error-prone output relies heavily on specialized error-correcting codes~\cite{doricchi2022emerging, shomorony2022information, sabary2024survey, milenkovic2024dna}.

Because the sequencing process draws strands uniformly at random from the pool, recovering the data inherently requires oversampling. Since sequencing remains a primary bottleneck of the pipeline in terms of both cost and latency, a critical performance metric for any DNA-based storage system is its \emph{coverage depth}, defined as the ratio of the total number of sequenced reads to the number of uniquely designed strands.

The mathematical structure underlying this coverage depth requirement is strongly connected to the classical \emph{Coupon Collector's Problem} (CCP) which can be formalized as follows. Given $n$ distinct coupon types drawn uniformly at random with replacement, what is the expected number of draws until every type has been seen at least once? The probabilistic theory of the CCP (see e.g. \cite{erdHos1961classical, newman1960double, flajolet1992birthday}) provides the foundational framework for analyzing such random-access retrieval systems.

The theoretical study of coverage depth for DNA storage was initiated in~\cite{bar-lev2025cover}, which formalized both the \emph{full recovery} and \emph{random-access} coverage depth problems. For full recovery, it was established that Maximum Distance Separable (MDS) codes are optimal for minimizing the expected retrieval time. However, for random access, where a user seeks to retrieve a single specific information symbol rather than decoding the entire database, the study revealed a surprising limitation: systematic MDS codes require an expected retrieval time of at least $k$, offering no improvement over an uncoded baseline. To overcome this, new coding schemes were proposed that successfully achieve expected retrieval times strictly below $k$.

Building on this foundational work, Gruica et al.~\cite{gruica2024combinatorial} introduced new combinatorial techniques to capture the structural properties that enable these random-access improvements. By leveraging column dependencies and recovery set intersections, they derived exact closed-form expressions for the expected retrieval time of arbitrary linear codes via subset counting. A central contribution of their work was the identification of \emph{recovery-balanced codes}---a geometric property satisfied by classical families such as MDS, Simplex, and Hamming codes, which strictly bounds their expected retrieval time to $k$. Crucially, they demonstrated that deliberately breaking this balance enables retrieval times strictly less than $k$, providing systematic methods for designing highly efficient random-access codes. The exact subset-counting framework from~\cite{gruica2024combinatorial} serves as a central technical tool in the present paper.

The geometric structure of codes that minimize retrieval time was subsequently analyzed in~\cite{gruica2024geometry}. This theoretical pursuit has expanded rapidly. Boruchovsky et al.~\cite{boruchovsky2025making} determined optimal constructions for specific small values of $k$, while Bodur et al.~\cite{bodur2025random} analyzed the full probability distribution of retrieval times. The fundamental limits of this problem have been further refined for small alphabets through the lens of weight distributions and duality~\cite{bertuzzo2026dna}. Concurrently, linear-time algorithms for computing exact retrieval expectations, alongside improved theoretical bounds and optimal code constructions for small parameters, were developed in~\cite{wang2026random}. Shifting to the full-recovery problem, Hanna~\cite{hanna2025reliability} investigated the reliability of MDS-coded retrieval over noisy channels, analyzing the effects of redundancy allocation within both the inner and the outer codes.

Beyond the assumption of uniform random access over a centralized pool, the coverage depth problem has recently been extended to capture more complex sequencing dynamics and system architectures. Cao and Chen~\cite{cao2025optimizing} generalized the analysis using insights from experimental DNA storage data to account for non-uniform sampling. Similarly, Grunbaum and Yaakobi~\cite{grunbaum2025general} expanded the mathematical framework to contiguous window sampling models motivated by shotgun sequencing, while Levy et al.~\cite{levy2026expected} analyzed expected recovery times in the context of DNA-based distributed storage systems. Finally, this probabilistic perspective has proven versatile enough to optimize coverage ratios in entirely different DNA-based data storage architectures, such as the combinatorial motif-based model, which encodes information using subsets of pre-synthesized shortmer sequences~\cite{preuss2024sequencing, sokolovskii2024coding}, and the composite DNA paradigm, which leverages statistical mixtures of nucleotides at each sequence position~\cite{cohen2025optimizing}.

While the literature has extensively explored the two extremes of data retrieval---accessing a single isolated strand or recovering the entire database---practical storage systems typically operate between these bounds. Real-world user data is rarely accessed as an isolated symbol. Instead, data is organized into \emph{files}, that is, atomic blocks of strands that must be recovered in their entirety. A digital image or compressed archive corresponds to a specific subspace of information strands that remains unusable until all its constituents are successfully decoded. This operational reality necessitates a shift to a \emph{block-structured retrieval} model. A coding strategy optimized for single-strand retrieval does not inherently optimize file recovery; the recovery time of a file depends critically on its dimension and on how the code distributes redundancy across complementary files sharing the storage medium. Recently, Abraham et al.~\cite{abraham2024covering} investigated this multi-file setting using local MDS codes for symmetric file partitions, demonstrating improvement over global systematic MDS codes. However, the general asymmetric case, where files are of arbitrary sizes, was left entirely uncharacterized, and whether local MDS codes are optimal among all linear codes remained open.

In this paper, we establish the theoretical foundations of block-structured coded information retrieval for DNA-based data storage. We formalize the problem wherein $k$ information symbols are partitioned into two files $F_1$ and $F_2$ of arbitrary sizes $s_1$ and $s_2 = k - s_1$, and define the \emph{achievability region} as the set of expected retrieval time pairs $(E_1(G), E_2(G))$ attainable by any $[n,k]$ linear code with generator matrix $G$. Our contributions are 
as follows.

\begin{enumerate}[label=(\roman*)]
\item We first derive a family of \emph{linear} lower bounds via mutual exclusivity of recovery sets (Section~\ref{sec:lower_bounds}). While these already carve out an infeasible region for all finite $n,k$, we show they cannot capture the true trade-off, which is \emph{hyperbolic} in nature --- motivating the nonlinear analysis
that follows. The reader interested primarily in the hyperbolic bounds may proceed directly to Section~\ref{sec:nonlinear_bound}.

\item We develop a nonlinear geometric bound via column projection that holds for every linear code (Theorem~\ref{thm:nonlinear_bound}), which for codes with no mixed columns yields the hyperbolic constraint $s_1/E_1 + s_2/E_2 \le 1$ (Corollary~\ref{cor:pure_file_hyperbola}). We conjecture this holds universally for all linear codes when $\max\{s_1,s_2\} \ge 2$ and provide supporting evidence including verification for all code families studied in this paper.

\item We compute exact expected retrieval times for three code families: the identity code (Section~\ref{ssec:identity}), file-dedicated MDS codes (Section~\ref{ssec:dedicated}), and systematic global MDS codes (Section~\ref{ssec:MDS_construction}). For file-dedicated codes, we prove that MDS is optimal within the family. For global MDS codes, we prove dominance by the proportional local MDS allocation (Proposition~\ref{prop:local_dominates_global}) by expressing both expected retrieval times as the mean of a common convex function of a hypergeometric random variable, providing a significantly simpler proof of results from~\cite{abraham2024covering} and extending them to asymmetric partitions when $k \mid n$.

\item We characterize the limiting achievability region as ${n \to \infty}$, and show that the full hyperbolic boundary $\{s_1/E_1 + s_2/E_2 = 1, E_i \ge s_i\}$ is asymptotically achieved by file-dedicated MDS codes (Theorem~\ref{thm:asymptotic_ub}), and is conjectured to be the exact boundary of the limiting achievability region (Conjecture~\ref{conj:limiting}).
\end{enumerate}

The remainder of this paper is organized as follows. Section~\ref{sec:system_model} details the system model and geometric preliminaries. Section~\ref{sec:motivation} formulates the block-structured retrieval problem. Section~\ref{sec:lower_bounds} derives the theoretical linear lower bounds. Section~\ref{sec:nonlinear_bound} develops the nonlinear bound and states the universal conjecture. Section~\ref{sec:constructions} analyzes explicit code constructions and evaluates their expected retrieval times. Section~\ref{sec:asymptotics} analyzes the limiting achievability region. Section~\ref{sec:conclusion} concludes the paper and outlines open problems.
\section{Definitions and Preliminaries}\label{sec:system_model}

Throughout this paper, $k$ and $n$ are positive integers with $k \le n$, $q$ is a prime power, $\F_q$ is the finite field with $q$ elements, and $[m] \coloneqq \{1, \ldots, m\}$. The $r$-th harmonic number is denoted by $\harm{r} \coloneqq \sum_{i=1}^{r} \frac{1}{i}$, with $\harm{0} = 0$.

\subsection{Encoding Model and File Structure}

We consider a storage system where $k$ information symbols $\mathbf{u} = (u_1, \ldots, u_k) \in \F_q^k$ are encoded into $n$ encoded symbols $(x_1, \ldots, x_n) = \mathbf{u} G$ using a generator matrix $G \in \F_q^{k \times n}$ of rank $k$. Let $g_j$ denote the $j$-th column of $G$. 

During retrieval, encoded symbols are drawn uniformly at random with replacement. Therefore, at each step, any column $g_j$ is drawn with probability $1/n$.

\begin{definition}[Recovery Set \cite{gruica2024combinatorial}]\label{def:recovery_set}
A subset of drawn indices $S \subseteq [n]$ forms a \emph{recovery set} for an information symbol $u_i$ if the standard basis vector $e_i$ lies within the span of the corresponding drawn columns, i.e., $e_i \in \langle g_j : j \in S \rangle_{\F_q}$.
\end{definition}

For simplicity, whenever the specific field $\F_q$ is clear, we will omit it from the notations. 

\begin{definition}[Subset Count \cite{gruica2024combinatorial}]\label{def:alpha_subset}
For any target subset of information symbols indexed by $I \subseteq [k]$, and for $0 \le s \le n$, the subset count $\alpha_I(s)$ is the number of $s$-subsets of $[n]$ that successfully recover all symbols in $I$:
\[
  \alpha_I(s) \;\coloneqq\; \bigl|\{S \subseteq [n] : |S| = s,\; \{e_i : i \in I\} \subseteq \langle g_j : j \in S \rangle\}\bigr|.
\]
\end{definition}

\begin{lemma}[{\cite[Remark 2]{gruica2024combinatorial}}]\label{lem:exp_formula_subset}
For any generator matrix $G \in \F_q^{k \times n}$ and target index set $I \subseteq [k]$, the expected number of uniform draws with replacement to recover all symbols in $I$ is given by:
\begin{equation}\label{eq:exp_formula_subset}
  \E[\tau_I(G)] \;=\; n\harm{n} \;-\; \sum_{s=1}^{n-1} \frac{\alpha_I(s)}{\binom{n-1}{s}}.
\end{equation}
\end{lemma}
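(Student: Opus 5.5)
We prove the formula by the standard tail-sum approach. We express $\E[\tau_I(G)]$ as a sum of tail probabilities and reduce each tail probability, which concerns the random set of distinct columns seen after $t$ draws, to uniform random subsets of fixed size. Throughout, let $\tau=\tau_I(G)$ and let $D_t\subseteq[n]$ be the set of distinct indices drawn in the first $t$ draws. Recovery depends only on the span of the drawn columns, so $\tau>t$ if and only if $D_t$ is not a recovery set for every $i\in I$.

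\textbf{Step 1 (tail sum).} We start from $\E[\tau]=\sum_{t\ge0}\Pr[\tau>t]$. We condition on $|D_t|$ and use the fact that, conditional on $|D_t|=s$, the set $D_t$ is uniform over the $\binom{n}{s}$ subsets of size $s$ (by symmetry of sampling with replacement). This gives
\[\Pr[\tau>t]=\sum_{s}\Pr[|D_t|=s]\Bigl(1-\tfrac{\alpha_I(s)}{\binom ns}\Bigr).\]
Applying the same decomposition with $I$ replaced by the full set $[n]$, i.e.\ by full coverage, whose expectation is the coupon collector value $nH_n$, we obtain
\[\E[\tau]=nH_n-\sum_{s=1}^{n-1}\frac{\alpha_I(s)}{\binom ns}\,\E[T_s].\]
Here $T_s=\#\{t:|D_t|=s\}$ is the number of draws during which exactly $s$ distinct coupons have been seen. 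Equivalently, we write $1-\alpha_I(s)/\binom ns$ as $1$ minus the success fraction and subtract. Two boundary terms must be checked: at $s=0$ we have $\alpha_I(0)=0$ when $I\neq\emptyset$, and the $s=n$ term cancels against the corresponding coupon collector term because $\alpha_I(n)=1$.

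\textbf{Step 2 (waiting time).} Once $s$ distinct coupons have been seen, the number of draws until a new one appears is geometric with success probability $(n-s)/n$. Hence $\E[T_s]=n/(n-s)$. Substituting, each summand becomes
\[\frac{\alpha_I(s)}{\binom ns}\cdot\frac{n}{n-s}.\]

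\textbf{Step 3 (simplification).} It remains to verify the identity $\binom ns\frac{n-s}{n}=\binom{n-1}{s}$, which holds since $\binom{n}{s}(n-s)/n=\frac{(n-1)!}{s!(n-1-s)!}$. This yields the displayed formula.

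The main point requiring care is the uniformity claim in Step 1: conditioned on the number of distinct draws equalling $s$, the set of drawn indices is uniform among $s$-subsets. This follows because any permutation of $[n]$ preserves the law of the draw sequence. The bookkeeping of the boundary terms $s=0$ and $s=n$ also needs attention, so that the full-recovery baseline exactly produces $nH_n$.
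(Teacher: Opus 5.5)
Your proof is correct. The paper does not prove this lemma itself; it imports it from \cite[Remark~2]{gruica2024combinatorial}. Your derivation is the standard route to that formula: the tail sum, the uniformity of $D_t$ given $|D_t|=s$ via permutation invariance, geometric holding times with mean $n/(n-s)$, and the identity $\binom{n}{s}\frac{n-s}{n}=\binom{n-1}{s}$. So there is no alternative argument in the paper to compare against.

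A few points are worth stating explicitly.

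\begin{itemize}
\item The hypotheses you flagged are genuinely needed, even though the lemma is stated for any $I\subseteq[k]$. For $I=\emptyset$ the right-hand side equals $nH_n-nH_{n-1}=1\neq 0$. If the columns of $G$ do not span every $e_i$ with $i\in I$, then $\alpha_I\equiv 0$: the right-hand side is $nH_n$ while the left-hand side is infinite. In the paper $I$ is always a nonempty file and $G$ has rank $k$, so the lemma is safe as used.
\item Since $\E[T_n]=\infty$, do not cancel the $s=n$ term after splitting. Discard it termwise, because its coefficient $1-\alpha_I(n)/\binom{n}{n}$ is zero, and do this before exchanging the sums over $t$ and $s$ (justified by Tonelli for nonnegative terms).
\item You do not need a separate appeal to the coupon collector for the baseline. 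Your Step~2 already gives $\E[T_s]=n/(n-s)$ for all $0\le s\le n-1$, including $\E[T_0]=1$, so $\sum_{s=0}^{n-1}\E[T_s]=nH_n$ directly.
\item Your phrase that $D_t$ is not a recovery set for every $i\in I$ should read that $D_t$ fails to recover at least one $e_i$ with $i\in I$. That is the reading consistent with $\alpha_I$.
\end{itemize}
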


Unlike single-strand retrieval, which was considered in e.g.~\cite{bar-lev2025cover, gruica2024geometry, boruchovsky2025making}, here we assume user data is organized into block-structured files. Let the set of symbols be partitioned into $f \ge 2$ disjoint files $F_1, \ldots, F_f$,
where $F_i$ has size $|F_i| = s_i \ge 1$ and $\sum_{i=1}^{f} s_i = k$. Each file must
be recovered as an atomic unit.

In this work we focus on the two-file case $f = 2$, which already exhibits the
essential trade-off between the retrieval times of complementary files; the general
multi-file case is discussed as an open direction in Section~\ref{sec:conclusion}.
For $f = 2$ we write
\[
  F_1 = \{u_1, \ldots, u_{s_1}\}, \qquad F_2 = \{u_{s_1+1}, \ldots, u_k\},
\]
with $|F_1| = s_1 \ge 1$, $|F_2| = s_2 \ge 1$, and $s_1 + s_2 = k$.

For algebraic analysis, recovering the symbols in a file $F_i$ is mathematically equivalent to the drawn columns spanning the $s_i$-dimensional subspace defined by the corresponding standard basis vectors: $\langle e_j : u_j \in F_i \rangle_{\F_q}$. We will interchangeably use $F_i$ to denote this target subspace.

\begin{definition}[File Retrieval Time]\label{def:file_tau}
For a generator matrix $G \in \F_q^{k \times n}$ and $i \in \{1,2\}$, let $\taufile{F_i}(G)$ denote the minimum number of random draws needed to recover \emph{all} information symbols in file $F_i$. We denote its expected value by $E_i(G) \coloneqq \E[\taufile{F_i}(G)]$.
\end{definition}

By substituting the target file into Lemma~\ref{lem:exp_formula_subset}, the expected retrieval time of file $F_i$ can be evaluated exactly as:
\begin{equation}\label{eq:file_exp}
  E_i(G) \;=\; n\harm{n} \;-\; \sum_{s=1}^{n-1} \frac{\alpha_{F_i}(s)}{\binom{n-1}{s}}.
\end{equation}

\subsection{Subspace Geometries and Column Counting}

To evaluate file-level recovery bounds, we must track the progression of the drawn columns as a growing geometric subspace. Let $\mathcal{L}(G)$ denote the \emph{subspace lattice} generated by $G$, defined as the finite collection of all subspaces spanned by subsets of columns of $G$, partially ordered by inclusion.

\begin{definition}[Column Counting Function]\label{def:column_count}
For any subspace $W \subseteq \F_q^k$, we define $N(W)$ as the exact number of columns of $G$ that are entirely contained within $W$:
\[
  N(W) \;\coloneqq\; \bigl|\{j \in [n] : g_j \in W\}\bigr|.
\]
\end{definition}

\begin{lemma}\label{lem:column_inc_exc}
For any two subspaces $A, B \subseteq \F_q^k$, the column counting function satisfies:
\[
  N(A) + N(B) \;=\; N(A \cup B) + N(A \cap B),
\]
where $A \cup B$ denotes the set-theoretic union of the two subspaces.
\end{lemma}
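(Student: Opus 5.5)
The plan is to read $N(A \cup B)$, with $A \cup B$ the set-theoretic union, as the number of columns of $G$ lying in $A$ or in $B$. Once this is fixed, the identity is ordinary inclusion--exclusion for finite sets applied to index sets of columns; the subspace structure of $A$ and $B$ plays no role.

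For any subset $X \subseteq \F_q^k$, define the index set
\[
  J(X) \coloneqq \{j \in [n] : g_j \in X\},
\]
so that $N(X) = |J(X)|$. This extends Definition~\ref{def:column_count} from subspaces to arbitrary subsets, and it is exactly the sense needed for $N(A \cup B)$, since $A \cup B$ is generally not a subspace.

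First, I verify that $J(A) \cup J(B) = J(A \cup B)$. An index $j$ belongs to $J(A) \cup J(B)$ exactly when $g_j \in A$ or $g_j \in B$, which is the condition $g_j \in A \cup B$.

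Second, I verify that $J(A) \cap J(B) = J(A \cap B)$ by the same pointwise argument: $g_j \in A$ and $g_j \in B$ exactly when $g_j \in A \cap B$. Here $A \cap B$ is again a subspace, so $N(A \cap B)$ also agrees with Definition~\ref{def:column_count} as originally stated.

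Finally, I apply the finite inclusion--exclusion identity $|X| + |Y| = |X \cup Y| + |X \cap Y|$ with $X = J(A)$ and $Y = J(B)$. Substituting the two set identities gives $N(A) + N(B) = N(A \cup B) + N(A \cap B)$.

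There is no genuine obstacle in this argument. The only subtle point is the one the statement itself flags: $N(A \cup B)$ must be read as a count over the set-theoretic union, not over the subspace sum $A + B$. With the sum in place of the union the identity would fail in general, because $A + B$ can contain columns lying in neither $A$ nor $B$; for that variant one would only get the inequality $N(A) + N(B) \le N(A+B) + N(A \cap B)$.

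One further check concerns repeated columns. If $g_j = g_{j'}$ for distinct $j, j'$, each index is still counted separately, because we count indices $j$ rather than vectors. The bijection between indices and columns is therefore respected, and multiplicities are handled correctly.
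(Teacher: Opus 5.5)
Your proof is correct and takes essentially the same approach as the paper, which likewise reduces the identity to finite inclusion--exclusion with intersection distributing over union. Working with the index sets $J(X)$ rather than the paper's multiset $C_G$ is a slightly cleaner way to handle repeated columns, but it is the same argument.
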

\begin{IEEEproof}
Let $C_G$ be the multi-set of columns of $G$. The function $N(W)$ counts the cardinality of the intersection $|C_G \cap W|$. By the fundamental inclusion-exclusion principle for finite sets, $|C_G \cap A| + |C_G \cap B| = |(C_G \cap A) \cup (C_G \cap B)| + |(C_G \cap A) \cap (C_G \cap B)|$. Since intersection distributes over union, the right hand side is exactly $|C_G \cap (A \cup B)| + |C_G \cap (A \cap B)|$, which directly yields $N(A \cup B) + N(A \cap B)$.
\end{IEEEproof}

\section{Motivation and Problem Formulation}\label{sec:motivation}

A natural baseline for block-structured retrieval is the \emph{local MDS} strategy: partition the $n$ available encoded strands into two dedicated groups of sizes $n_1$ and $n_2 = n - n_1$, with $n_i \geq s_i$, and protect each file independently with an MDS code. Since each draw from the pool of $n$ strands hits file $F_i$'s dedicated allocation with probability $n_i/n$, the expected retrieval time reduces to a generalized coupon-collector problem:
\begin{equation}\label{eq:local_mds}
    E_i\!\left(G_\mathrm{Local}^{(n_1,n_2)}\right) 
    \;=\; \sum_{j=1}^{s_i} \frac{n}{n_i - j + 1}.
\end{equation}
For fixed $n$, the integer constraint $n_1 + n_2 = n$ (with $n_i \geq s_i$) limits the local strategy to finitely many valid operating points, each reflecting a strict design trade-off between the retrieval times of the two files.

Prior work~\cite{abraham2024covering} compared the uniform local MDS code against a global systematic MDS code for the symmetric case $s_1 = s_2$, establishing that the former achieves no worse expected retrieval time for either file.

\begin{lemma}[{\cite[Lemma~2]{abraham2024covering}}]\label{lem:abraham_symmetric}
For $s_1 = s_2$ and $n_1 = n_2 = n/2$, the uniform local and global systematic MDS codes satisfy
$E_i(G_\mathrm{Global}) \ge E_i(G_\mathrm{Local})$ for $i \in \{1,2\}$.
\end{lemma}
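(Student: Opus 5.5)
The plan is to express both expected retrieval times as expectations of a common function of a hypergeometric random variable, and then compare the two distributions. Fix file $F_1$ (the case $F_2$ is symmetric), with $s=s_1=s_2=k/2$ and $n_1=n_2=n/2$. For the global systematic $[n,k]$ MDS code $G_\mathrm{Global}$, any $k$ columns are independent. Hence a drawn set $S$ of distinct columns recovers $F_1$ iff either $|S|\ge k$, or $|S|<k$ and $S$ contains all $s$ systematic columns of $F_1$. (If $|S|<k$, the span of $S$ intersects the coordinate subspace in a way forced by the MDS property: $e_i\in\langle g_j:j\in S\rangle$ with $|S|<k$ requires $g_i$ to be among the drawn columns.) I will justify this with a short rank argument. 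Then $\alpha_{F_1}(t)=\binom{n-s}{t-s}$ for $s\le t<k$ and $\binom{n}{t}$ for $t\ge k$. For the local code, $\alpha_{F_1}(t)=\sum_{j\ge s}\binom{n/2}{j}\binom{n/2}{t-j}$.

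I will then use \eqref{eq:file_exp} rather than coupon-collector formulas, since both codes are then described by subset counts. Write $\alpha(t)/\binom{n}{t}=\Pr[\text{a uniform } t\text{-subset recovers } F_1]$. A uniform $t$-subset $S$ is a sequence of distinct draws. So $E=\sum_t \frac{n}{n-t}\Pr[\text{first } t \text{ distinct draws fail}]$, i.e.\ the expected time is $\sum_{t}\frac{n}{n-t}\,(1-p_t)$, where $p_t$ is the success probability after $t$ distinct columns. It is therefore enough to show $p_t^{\mathrm{Local}}\ge p_t^{\mathrm{Global}}$ for every $t$. Let $X_t$ be the number of drawn columns among the $n/2$ columns of one half; $X_t$ is hypergeometric. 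Local success is the event $\{X_t\ge s\}$ for the half dedicated to $F_1$. Global success is $\{t\ge k\}$ or $\{$all $s$ specific columns drawn$\}$. For $t\ge k$ we have $p^{\mathrm{Global}}=1$; but then $t\ge k=2s$, so $n/2$ vs.\ $t-n/2$ must be checked. Local success can fail when $t\ge k$, which seems to contradict the inequality. So the per-$t$ comparison cannot work pointwise, and I will instead compare whole sums.

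The refined plan is the convex-ordering route. Condition on which half-allocation is realized. For the global code, the $n$ columns split into $s$ systematic columns of $F_1$ and $n-s$ others. For the local code, they split into the $n/2$ columns dedicated to $F_1$ and the $n/2$ others. In each case, the expected time equals $\E[\phi(Y)]$ for a single convex function $\phi$ and a hypergeometric-type count $Y$. I will try to derive this by computing the conditional expected additional time given the number of useful columns seen, which reduces to harmonic sums of the form $n(H_{a}-H_{b})$; these are convex in the count. Then I will invoke the standard fact that hypergeometric distributions with the same mean are convex ordered when one population is more "spread." By Jensen/convex order, the global code, which has the more dispersed variable, gets the larger expectation.

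The main obstacle is finding this common representation $\phi$ exactly: identifying the right random variable so that both codes' $E_1$ are $\E\phi$ of hypergeometrics with equal means. If that fails, the fallback is the direct approach: substitute the two $\alpha$ formulas into \eqref{eq:file_exp} and prove $\sum_t \alpha^{\mathrm{Local}}(t)/\binom{n-1}{t}\ge\sum_t\alpha^{\mathrm{Global}}(t)/\binom{n-1}{t}$. To do this I would pair terms via Vandermonde's identity and use the symmetry $s_1=s_2$ to balance the deficit at $t\ge k$ against the gains at $t<k$.
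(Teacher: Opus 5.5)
Your refined plan is the paper's route. The paper obtains the statement as the symmetric case of Proposition~\ref{prop:local_dominates_global}: it writes both retrieval times as $n$ times the mean of one convex function of two hypergeometric variables with equal means (Lemma~\ref{lem:hyp_representation}), then proves a convex order between them (Lemma~\ref{lem:hyp_convex_order}). But your proposal stops exactly where the proof starts. You leave the common representation as ``the main obstacle'' and replace the comparison with an appeal to a ``standard fact''. Both substantive steps are therefore missing.

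\textbf{The representation.} It reads $E_1(G_\mathrm{Global})=n\,\E[f(X_G)]$ and $E_1(G_\mathrm{Local})=n\,\E[f(X_L)]$ with $f(x)=\harm{s}-\harm{s-x}$. Here $X_G$ counts how many of the $s$ systematic columns of $F_1$ lie in a uniform $k$-subset of all $n$ columns. $X_L$ counts how many of $s$ fixed columns of the dedicated block lie in a uniform $s$-subset of that block's $n/2$ columns. Both means equal $2s^2/n$.

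The split you propose for the local code (dedicated versus other columns of the whole pool) does not lead there. $X_L$ lives entirely inside the block, and the dilution by $\tfrac12$ is absorbed into the common prefactor $n$. The paper also does not obtain $f$ by conditioning on useful columns seen. It uses an exact summation identity, starting from the formula you already wrote, $E=\sum_t\frac{n}{n-t}(1-p_t)$ with $p_t^{\mathrm{Global}}=\binom{t}{s}/\binom{n}{s}$ for $t<k$:
\begin{enumerate}[label=(\roman*)]
\item expand $\binom{n}{s}-\binom{t}{s}=\sum_{j\ge1}\binom{t}{s-j}\binom{n-t}{j}$ by Vandermonde;
\item use $\frac{1}{n-t}\binom{n-t}{j}=\frac{1}{j}\binom{n-t-1}{j-1}$;
\item recognize the resulting sum over $t<k$ as $\Pr[X_G\ge s-j+1]$, i.e., the probability that the $(s-j+1)$-th systematic column of $F_1$ lands among the first $k$ positions of a uniformly random ordering.
\end{enumerate}
This gives $n\sum_{r=1}^{s}\Pr[X_G\ge r]/(s-r+1)=n\,\E[f(X_G)]$. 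The same computation with $(n,k)$ replaced by $(n/2,s)$ gives the local identity.

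\textbf{The convex order.} $X_G$ and $X_L$ differ at once in population size, number of marked items and sample size. So there is no off-the-shelf result about one population being ``more spread'' that you can invoke. The paper shows that $\log\Pr[X_G=x]-\log\Pr[X_L=x]$ has nondecreasing increments in $x$. Hence the difference of the two mass functions is negative on a single block of consecutive values. Since the means agree, this forces the sign pattern $+,-,+$ (unless the laws coincide), and the cut criterion \cite[Theorem~3.A.44]{shaked2007stochastic} yields $\E[\varphi(X_L)]\le\E[\varphi(X_G)]$ for every convex $\varphi$.

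The paper assumes $k\mid n$, so that $n=k$ or $n\ge 2k$ and both laws have full support. For an arbitrary even $n$ with $k<n<2k$, you would also need to note that the support of $X_L$ misses some values at the left end. This only adds positive values of the difference there and leaves the sign pattern intact.

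\textbf{The fallback.} Pairing terms via Vandermonde and the symmetry $s_1=s_2$ is too unspecified to assess. Your own observation that $p_t^{\mathrm{Local}}<p_t^{\mathrm{Global}}$ for some $t\ge k$ shows that any such pairing has to be global rather than termwise.
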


This result has two important limitations. First, it is restricted to equal file sizes with a uniform redundancy allocation, leaving the asymmetric case entirely open. Second, it only establishes that one specific alternative (global MDS) is suboptimal relative to local MDS; it does not address whether local MDS is itself optimal among all linear codes, for either symmetric or asymmetric files. We investigate both gaps through the following examples.

Since improving $E_1$ generally comes at the cost of $E_2$, comparing two generator matrices requires care. We say that a generator matrix $G'$ \emph{dominates} $G$ if
\begin{equation}\label{eq:pareto}
    E_i(G') \;\le\; E_i(G) \quad \text{for both } i \in \{1,2\},
\end{equation}
with strict inequality for at least one $i$. A operating point that is not dominated by any other operating point is called \emph{Pareto-optimal}. Under this criterion, comparing two generator matrices that each favor a different file --- one achieving lower $E_1$ and the other lower $E_2$ --- admits no clear winner, and both may be Pareto-optimal.
\begin{example}\label{ex:asymmetric_n8_local}
Consider an asymmetric partition with $k=4$, $s_1=1$, $s_2=3$, and $n=8$, providing $n - k = 4$ redundant symbols. Evaluating~\eqref{eq:local_mds} over all valid integer allocations
$n_1 + n_2 = 8$ yields five operating points:
\begin{itemize}
    \item \textbf{Case A:} $n_1=1$ and $n_2=7$. By \eqref{eq:local_mds}, $E_1 = 8/1=8$, and
    \begin{equation*}
        E_2(G_\text{Local}^{(1,3,1,7)}) = \frac{8}{7} + \frac{8}{6} + \frac{8}{5} = \frac{428}{105} \approx 4.08.
    \end{equation*}
    \item \textbf{Case B:} $n_1=2$ and $n_2=6$. By \eqref{eq:local_mds}, $E_1 = 8/2=4$, and
    \begin{equation*}
        E_2(G_\text{Local}^{(1,3,2,6)}) = \frac{8}{6} + \frac{8}{5} + \frac{8}{4} = \frac{74}{15} \approx 4.93.
    \end{equation*}
    \item \textbf{Case C:} $n_1=3$ and $n_2=5$. By \eqref{eq:local_mds}, $E_1 = 8/3 \approx 2.67$, and
    \begin{equation*}
        E_2(G_\text{Local}^{(1,3,3,5)}) = \frac{8}{5} + \frac{8}{4} + \frac{8}{3} = \frac{94}{15} \approx 6.27.
    \end{equation*}
    \item \textbf{Case D:} $n_1=4$ and $n_2=4$. By \eqref{eq:local_mds}, $E_1 = 8/4=2$, and
    \begin{equation*}
        E_2(G_\text{Local}^{(1,3,4,4)}) = \frac{8}{4} + \frac{8}{3} + \frac{8}{2} = \frac{26}{3} \approx 8.67.
    \end{equation*}
    \item \textbf{Case E:} $n_1=5$ and $n_2=3$. By \eqref{eq:local_mds}, $E_1 = 8/5=1.6$, and
    \begin{equation*}
        E_2(G_\text{Local}^{(1,3,5,3)}) = \frac{8}{3} + \frac{8}{2} + \frac{8}{1} = \frac{44}{3} \approx 14.67.
    \end{equation*}
\end{itemize}
Moving from Case~A to Case~E, $E_1$ decreases monotonically while $E_2$ increases, and no case dominates any other in the sense of~\eqref{eq:pareto}. For comparison, the systematic $[8,4]$ MDS code with generator matrix $G_\mathrm{Global}$ achieves $E_1(G_\mathrm{Global}) = 4$ and, evaluating~\eqref{eq:file_exp} with subset counts $\alpha_{F_2}(s) = 0,0,1,70,56,28,8$ for $s=1,\ldots,7$, respectively:
\begin{align*}
    E_2(G_\mathrm{Global})
    &= 8H_8 - \!\left(
        \frac{0}{7}\hspace{-0.3ex}+\hspace{-0.3ex}\frac{0}{21}\hspace{-0.3ex}+\hspace{-0.3ex}\frac{1}{35}\hspace{-0.3ex}+\hspace{-0.3ex}\frac{70}{35}\hspace{-0.3ex}+\hspace{-0.3ex}\frac{56}{21}\hspace{-0.3ex}+\hspace{-0.3ex}\frac{28}{7}\hspace{-0.3ex}+\hspace{-0.3ex}\frac{8}{1}\right)\\
     &= \frac{106}{21} \approx 5.05.
\end{align*}
At the same value of $E_1 = 4$, global MDS achieves $E_2 \approx 5.05$,
which is strictly worse than Case~B's $E_2 \approx 4.93$. Global MDS
is therefore dominated by Case~B.
\end{example}

Example~\ref{ex:asymmetric_n8_local} suggests that local MDS may already be hard to improve upon. However, the five operating points of  Example~\ref{ex:asymmetric_n8_local} are the \emph{only} ones the local  strategy can reach: a designer requiring a value of $E_1$ strictly  between two consecutive cases has no recourse within this family. The question of whether the gaps between these discrete points are achievable by other codes --- and if so, at what cost to $E_2$ --- is entirely open from the local MDS perspective alone.

\begin{example}\label{ex:hybrid_n8}
With the same parameters ($k=4$, $s_1=1$, $s_2=3$, $n=8$), consider the code $\mathcal{C}^*$ with encoded strands
\[
\{u_1,\ u_2,\ u_3,\ u_4,\ u_1+u_2,\ u_2+u_3,\ u_3+u_4,\ u_4+u_1\}.
\]
The generator matrix is therefore
\[
G^* = \begin{pmatrix}
1 & 0 & 0 & 0 & 1 & 0 & 0 & 1 \\
0 & 1 & 0 & 0 & 1 & 1 & 0 & 0 \\
0 & 0 & 1 & 0 & 0 & 1 & 1 & 0 \\
0 & 0 & 0 & 1 & 0 & 0 & 1 & 1
\end{pmatrix}.
\]
As shown in~\cite{bar-lev2025cover}, $E_1(\mathcal{C}^*) = \frac{403}{105} \approx 3.84$.

We now compute $E_2(\mathcal{C}^*)$ for $F_2 = \{u_2,u_3,u_4\} = \langle e_2,e_3,e_4 \rangle$. The key structural observation is that exactly five columns of $G^*$ lie in $F_2$ (those with zero $e_1$-component): $\{g_2,g_3,g_4,g_6,g_7\}$, and three lie outside: $\{g_1,g_5,g_8\}$.

We compute $\alpha_{F_2}(s)$ for each $s$:
\begin{itemize}
\item $s=1,2$: The span of at most 2 columns cannot contain the 3-dimensional space $F_2$, so $\alpha_{F_2}(1)=\alpha_{F_2}(2)=0$.

\item $s=3$: The span must equal $F_2$ exactly, requiring all 3 columns to lie in $F_2$. Of the $\binom{5}{3}=10$ three-subsets of $\{g_2,g_3,g_4,g_6,g_7\}$, exactly two are rank-deficient: $\{g_2,g_3,g_6\}$ (since $g_6=g_2+g_3$) and $\{g_3,g_4,g_7\}$ (since $g_7=g_3+g_4$). Thus $\alpha_{F_2}(3)=8$.

\item $s=4$: We count by the number of columns drawn from outside $F_2$:
\begin{itemize}
    \item \emph{0 outside:} All $\binom{5}{4}=5$ four-subsets of     $F_2$-columns span $F_2$, contributing $5$.
    \item \emph{1 outside:} The 8 spanning three-subsets of $F_2$-columns, each paired with any of 3 outside columns, contribute $24$.
    \item \emph{2 outside:} Each outside pair provides a fixed 2D span with an $e_1$-component: $\{g_1,g_5\}$ yields $\langle e_1,e_2\rangle$, $\{g_1,g_8\}$ yields $\langle e_1,e_4\rangle$, and $\{g_5,g_8\}$ yields $\langle e_1+e_2, e_1+e_4\rangle$. Checking all 10 inside pairs for each case, we find 5, 5, and 8 valid pairs respectively, contributing $18$.
    \item \emph{3 outside:} The three outside columns together span $\langle e_1,e_2,e_4\rangle$. An inside column recovers $F_2$ iff it introduces $e_3$, which holds for $g_3,g_6,g_7$ only, contributing $3$.
\end{itemize}
Thus $\alpha_{F_2}(4) = 5+24+18+3 = 50$.

\item $s=5$: A 5-subset fails only if all 5 columns lie in a 3D subspace not containing $F_2$. Each of the three such subspaces contains exactly 5 of our columns: $\langle e_1,e_2,e_3\rangle = \{g_1,g_2,g_3,g_5,g_6\}$, $\langle e_1,e_3,e_4\rangle = \{g_1,g_3,g_4,g_7,g_8\}$, and $\langle e_1,e_2,e_4\rangle = \{g_1,g_2,g_4,g_5,g_8\}$, giving exactly 3 failing subsets. Thus $\alpha_{F_2}(5) = \binom{8}{5}-3 = 53$.

\item $s=6,7$: Since no proper 3D subspace contains more than 5 of our columns, every such subset spans $\mathbb{F}_q^4 \supseteq F_2$, giving $\alpha_{F_2}(6)=28$ and $\alpha_{F_2}(7)=8$.
\end{itemize}

Substituting into \eqref{eq:file_exp}:
\begin{align*}
E_2(\mathcal{C}^*) &= 8H_8 - \left(
    \frac{0}{7} + \frac{0}{21} + \frac{8}{35} + \frac{50}{35} 
    + \frac{53}{21} + \frac{28}{7} + \frac{8}{1}\right) \\
&= \frac{2283}{105} - \frac{1699}{105} 
 = \frac{584}{105} \approx 5.562.
\end{align*}

This point lies strictly between Cases~B and~C in $E_1$, a region entirely inaccessible to the local MDS family. A system requiring $E_1 < 4$ is otherwise forced to Case~C, incurring $E_2 \approx 6.27$.  ${G}^*$ meets the requirement while reducing $E_2$ by more than $11\%$ relative to Case~C. Moreover, while ${G}^*$ achieves lower $E_1$ than Case~B at the cost of higher $E_2$, neither dominates the other, and the choice between them depends entirely on the system's design requirements.
\end{example}

Taken together, these examples reveal that neither local MDS nor any other single code family is sufficient to characterize the full landscape of achievable trade-offs when $n$ is fixed. The set of all Pareto-optimal operating points is precisely the boundary of the \emph{achievability region}, that is, the set of all $(E_1, E_2)$ pairs attainable by any linear code. Every point on this boundary is optimal in the sense of~\eqref{eq:pareto}, and every point strictly outside it is unachievable. Characterizing this boundary for all file sizes, for finite $n$, and in the limit $n \to \infty$, is the central goal of this paper.

\begin{problem}[Coded File Retrieval]\label{prob:main}
Given $n$, $k$, and a partition $(s_1, \ldots, s_f)$ with $\sum_{i=1}^{f} s_i = k$,
the {achievability region} is
\begin{equation}\label{eq:achievability_region}
  \Lambda_{n,k}(s_1, \ldots, s_f) \hspace{-0.5ex}\coloneqq
  \left\{\left(E_1(G),\, \ldots,\, E_f(G)\right)
  : \substack{G \in \mathbb{F}_q^{k \times n},\\ \mathrm{rank}(G) = k}\right\}
  \hspace{-0.5ex}\subseteq \mathbb{R}^f.
\end{equation}
The objectives of this work are:
\begin{enumerate}[label=(\roman*)]
    \item Derive necessary conditions on $(E_1, \ldots, E_f)$ holding for every
    generator matrix, characterizing the non-achievable region.
    \item Construct matrices tracing the Pareto boundary of
    $\Lambda_{n,k}(s_1,\ldots,s_f)$ for finite $n$.
    \item Determine the limiting boundary as $n \to \infty$.
\end{enumerate}
\end{problem}

For the remainder of the paper, we focus on the case where $f = 2$ and write $\Lambda_{n,k}(s_1,s_2)$; see Section~\ref{sec:conclusion} for the multi-file case.

\section{Linear Lower Bounds}\label{sec:lower_bounds}

In this section we derive necessary conditions on $(E_1, E_2)$ that hold for every rank-$k$ generator matrix with parameters $(n,k)$. Each is given in closed form, and together they define a polytope of pairs that no code can achieve. Beyond delimiting the infeasible region directly, these bounds serve as the starting point for Section~\ref{sec:nonlinear_bound}: the polytope of Propositions~\ref{prop:basic_lb} and~\ref{prop:joint_lb} yields the first nonlinear bounds on $s_1/E_1 + s_2/E_2$ (Corollaries~\ref{cor:smax_bound}  and~\ref{cor:cs_bound}), which the column-geometry argument then sharpens toward the hyperbolic curve the operating points trace.


\subsection{Basic lower bounds}

\begin{proposition}\label{prop:basic_lb}
For any $(E_1, E_2) \in \Lam_{n,k}(s_1, s_2)$ and any generator matrix $G \in \F_q^{k \times n}$,
\begin{align}
  E_i(G) &\;\ge\; s_i, \qquad i = 1, 2. \label{eq:lb_each}
\end{align}
\end{proposition}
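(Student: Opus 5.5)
The plan is a dimension-counting argument. The random variable $\taufile{F_i}(G)$ is always at least $s_i$, and the expectation bound then follows at once.

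\emph{Step 1 (deterministic).} Fix any realization of the draw sequence and let $t=\taufile{F_i}(G)$. By Definition~\ref{def:file_tau}, after $t$ draws the target subspace $F_i=\langle e_j : u_j\in F_i\rangle$ is contained in $\langle g_j : j\in S_t\rangle$, where $S_t\subseteq[n]$ is the set of distinct indices drawn so far. This span has dimension at most $|S_t|$, and since draws are with replacement, $|S_t|\le t$. Since $\dim F_i=s_i$, we get $s_i\le \dim\langle g_j: j\in S_t\rangle\le |S_t|\le t$. Hence $\taufile{F_i}(G)\ge s_i$ almost surely.

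\emph{Step 2.} Taking expectations gives $E_i(G)\ge s_i$.

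\emph{Alternative route.} One can instead argue through \eqref{eq:file_exp}. Observe that $\alpha_{F_i}(s)=0$ for $s<s_i$ by the same dimension count. Then bound the sum by assuming $\alpha_{F_i}(s)\le\binom{n}{s}$. Using $\binom{n}{s}/\binom{n-1}{s}=n/(n-s)$, this gives
\[E_i\ge nH_n-\sum_{s=s_i}^{n-1}\frac{n}{n-s}=nH_n - nH_{n-s_i}=\sum_{j=0}^{s_i-1}\frac{n}{n-j}\ge s_i.\]
This version also yields the slightly stronger coupon-collector bound, namely the expected time to see $s_i$ distinct coupons.

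There is no real obstacle here. The only point needing care is that repeated draws add no dimension, which is handled by passing to the set of distinct drawn indices.
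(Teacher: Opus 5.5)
Your main argument is correct and is essentially the paper's proof. The paper shows $\dim V_t \le t$ by noting that each draw raises the dimension of the span by at most one, which is equivalent to your count of distinct drawn indices; it then takes expectations of the pathwise inequality $\taufile{F_i}(G) \ge s_i$. Your alternative route is also correct and gives the stronger bound $E_i \ge n(\harm{n}-\harm{n-s_i})$, which the paper derives separately in \eqref{eq:Ei_beta_lb} (proof of Corollary~\ref{cor:tightest_cut}) and uses there to observe that this proposition is subsumed.
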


\begin{IEEEproof}
Let $\mathcal{L}(G)$ denote the lattice of subspaces spanned by subsets of columns of $G$. For any sequence of draws, let $V_t = \langle g_{\xi_1}, \dots, g_{\xi_t} \rangle \in \mathcal{L}(G)$ be the subspace spanned by the drawn columns up to time $t$. 
Because each drawn column $g_{\xi_t}$ is a single vector in $\F_q^k$, the dimension of the spanned subspace can increase by at most $1$ per draw:
\[
  \dim(V_t) \;\le\; \dim(V_{t-1}) + 1.
\]
By induction, starting from the trivial subspace $V_0 = \{0\}$, we have the deterministic geometric bound $\dim(V_t) \le t$ for all $t \ge 0$.

By definition, the stopping time $\tau_i = \taufile{F_i}(G)$ is the minimum time $t$ such that the target file's basis is fully contained within the spanned subspace, meaning $F_i \subseteq V_{\tau_i}$. Hence,
\[
  s_i \;=\; \dim(F_i) \;\le\; \dim(V_{\tau_i}).
\]
Combining these bounds yields $s_i \le \tau_i$ for every possible realization of the random drawing sequence. Taking the expectation over the uniform drawing distribution preserves the inequality, giving $E_i(G) = \E[\tau_i] \ge s_i$.
\end{IEEEproof}

\begin{proposition}\label{prop:joint_lb}
For any $(E_1, E_2) \in \Lam_{n,k}(s_1, s_2)$,
\begin{equation}\label{eq:joint_lb}
  E_1 + E_2 \;\ge\; k + \min(s_1, s_2).
\end{equation}
\end{proposition}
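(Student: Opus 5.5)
The plan is to prove the inequality \emph{pathwise}, for every realization of the draw sequence, and then take expectations, in the same way as Proposition~\ref{prop:basic_lb}. Fix $G$ and a draw sequence $\xi_1,\xi_2,\dots$, and let $V_t=\langle g_{\xi_1},\dots,g_{\xi_t}\rangle$ as in the proof of Proposition~\ref{prop:basic_lb}. Write $\tau_i=\taufile{F_i}(G)$. The idea is to write
\[
\tau_1+\tau_2=\max(\tau_1,\tau_2)+\min(\tau_1,\tau_2)
\]
and bound the two terms separately.

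\emph{Bounding the maximum.} At time $T=\max(\tau_1,\tau_2)$ both files have been recovered, since $V_t$ is nondecreasing in $t$. Hence $F_1\subseteq V_T$ and $F_2\subseteq V_T$. Because $F_1$ and $F_2$ are spanned by complementary sets of standard basis vectors, this gives $V_T\supseteq F_1+F_2=\F_q^k$, so $\dim V_T=k$. The deterministic bound $\dim V_t\le t$ from Proposition~\ref{prop:basic_lb} then yields $\max(\tau_1,\tau_2)\ge k$. This is the ``mutual exclusivity'' phenomenon: no set of fewer than $k$ drawn columns can be a recovery set for both files simultaneously.

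\emph{Bounding the minimum.} From the pathwise statement $\tau_i\ge s_i$ established in Proposition~\ref{prop:basic_lb}, we get
\[
\min(\tau_1,\tau_2)\ge\min(s_1,s_2).
\]

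\emph{Combining.} Adding the two bounds gives
\[
\tau_1+\tau_2\ge k+\min(s_1,s_2)
\]
for every realization. Taking expectations and using linearity gives $E_1(G)+E_2(G)\ge k+\min(s_1,s_2)$ for every rank-$k$ generator matrix $G$. This is exactly \eqref{eq:joint_lb} for every point of $\Lam_{n,k}(s_1,s_2)$.

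I do not expect a genuine obstacle. The only point needing care is the step $F_1+F_2=\F_q^k$, together with the monotonicity of $V_t$ that lets one use the single time $T$ for both files.
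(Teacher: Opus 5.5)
Your proposal is correct and follows essentially the same route as the paper: the pathwise decomposition $\tau_1+\tau_2=\max(\tau_1,\tau_2)+\min(\tau_1,\tau_2)$, the bound $\max(\tau_1,\tau_2)\ge k$ because both files together span $\F_q^k$, the bound $\min(\tau_1,\tau_2)\ge\min(s_1,s_2)$ from $\tau_i\ge s_i$, and then taking expectations. Your explicit remark that $V_t$ is nondecreasing, so the single time $T$ works for both files, is a small detail the paper leaves implicit.
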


\begin{IEEEproof}
Let $\tau_1 = \taufile{F_1}(G)$ and $\tau_2 = \taufile{F_2}(G)$ be the random variables denoting the number of draws required to recover files $F_1$ and $F_2$ using some rank-$k$ matrix $G
\in\F_q^{k\times n}$, respectively. 
For any realization of the drawing process, the sum satisfies:
\begin{equation}\label{eq:rv_sum}
  \tau_1 + \tau_2 \;=\; \max(\tau_1, \tau_2) + \min(\tau_1, \tau_2).
\end{equation}

We analyze the two terms on the right-hand side individually.
First, because $F_1$ and $F_2$ partition the basis of the $k$-dimensional information space, recovering both files is structurally equivalent to recovering the entire space $\F_q^k$. Therefore, the maximum of the two stopping times is exactly the time required to recover all $k$ information symbols, i.e., 
\[
  \max(\tau_1, \tau_2) \;=\; \taufile{F_1 \cup F_2}(G).
\]
Since each draw increases the dimension of the span by at most $1$, at least $k$ draws are needed to span $\F_q^k$, and thus $\max(\tau_1, \tau_2) \ge k$ for every realization of the drawing process.

Second, the minimum of the two stopping times represents the draws needed to recover the faster of the two files. Because it requires at least $s_i$ draws to span an $s_i$-dimensional subspace, we have
\[
  \min(\tau_1, \tau_2) \;\ge\; \min(s_1, s_2) 
\]
which also holds for every realization of the drawing process.

Substituting these deterministic lower bounds back into \eqref{eq:rv_sum} yields:
\[
  \tau_1 + \tau_2 \;\ge\; k + \min(s_1, s_2) 
\]
for every valid sequence of draws.  Taking expectations preserves the inequality, completing the proof.
\end{IEEEproof}While Proposition~\ref{prop:joint_lb} establishes a universal deterministic limit, we can derive a strictly sharper bound by accounting for the probabilistic cost of drawing from a finite pool of $n$ encoded strands. By bounding the expected time to reach specific dimensions against the optimal performance of MDS codes for the recovery of the entire $k$ symbols~\cite{bar-lev2025cover}, we obtain the following finite-length constraint.

\begin{theorem}\label{thm:mds_rank_sum}
For any rank-$k$ generator matrix $G \in \F_q^{k \times n}$ and any partition $s_1 + s_2 = k$, the achievable expected retrieval times satisfy the  finite-length bound:
\begin{equation}\label{eq:mds_rank_sum}
  E_1(G) + E_2(G) \;\ge\; n\bigl(2H_n - H_{n-k} - H_{n-\min(s_1, s_2)}\bigr).
\end{equation}
\end{theorem}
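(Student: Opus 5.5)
We reuse the decomposition from the proof of Proposition~\ref{prop:joint_lb}, $\tau_1+\tau_2=\max(\tau_1,\tau_2)+\min(\tau_1,\tau_2)$, and bound the expectation of each term separately, no longer by deterministic bounds but by comparison with a rank process. Let $V_t$ be the span of the first $t$ drawn columns and let $T_r \coloneqq \min\{t : \dim V_t \ge r\}$ be the first time the drawn span reaches dimension $r$. Since $\max(\tau_1,\tau_2)=\taufile{F_1\cup F_2}(G)$ is the time to span $\F_q^k$, it equals $T_k$. Since recovering $F_i$ forces $\dim V_{\tau_i}\ge s_i\ge \min(s_1,s_2)$, we get $\min(\tau_1,\tau_2)\ge T_{m}$ with $m\coloneqq\min(s_1,s_2)$. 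Hence
\[
E_1(G)+E_2(G)\;\ge\;\E[T_k]+\E[T_m],
\]
and it remains to show $\E[T_r]\ge n(H_n-H_{n-r})$ for every $r\le k$. Then $n(H_n-H_{n-k})+n(H_n-H_{n-m})$ is exactly the right-hand side of \eqref{eq:mds_rank_sum}.

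For the bound on $\E[T_r]$, we write $T_r=\sum_{j=1}^{r}(T_j-T_{j-1})$. After the $(j-1)$-th rank increase, the current span $V$ has dimension $j-1$. A new draw raises the rank only if the drawn column lies outside $V$, which happens with probability $(n-N(V))/n$. Because $V$ is spanned by $j-1$ distinct drawn column indices, each of which lies in $V$, we have $N(V)\ge j-1$. So the success probability at this stage is at most $(n-j+1)/n$, uniformly over the history.

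A standard stochastic-domination argument then applies. Conditioned on the history up to $T_{j-1}$, the waiting time $T_j-T_{j-1}$ is geometric with parameter at most $(n-j+1)/n$. Its conditional expectation is therefore at least $n/(n-j+1)$. Summing via the tower property gives
\[
\E[T_r]\;\ge\;\sum_{j=1}^r\frac{n}{n-j+1}\;=\;n(H_n-H_{n-r}).
\]
This is the MDS full-recovery value of~\cite{bar-lev2025cover}, which the MDS code attains with equality since there $N(V)=j-1$ exactly. Equivalently, one could quote the optimality of MDS for recovering any fixed rank, but the direct argument is self-contained.

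The main obstacle I expect is the inequality $\min(\tau_1,\tau_2)\ge T_m$ together with making the conditional-geometric domination rigorous. The subtlety is that the success probability changes only at rank-increase times, so within a stage it is constant given the history. That makes the stage waiting time exactly geometric conditional on $\mathcal{F}_{T_{j-1}}$, and the domination is then clean. I would also check the edge case $r=n$ when $k=n$, where $H_0=0$, so the formula remains valid.
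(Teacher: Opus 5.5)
Your proof is correct and follows essentially the same route as the paper: you split $E_1+E_2$ into $\E[\max(\tau_1,\tau_2)]+\E[\min(\tau_1,\tau_2)]$ and lower-bound each rank-increase stage by $n/(n-j)$, using that a span of drawn columns contains at least as many columns as its dimension. Your rank-hitting times $T_r$, the inequality $\min(\tau_1,\tau_2)\ge T_m$, and the conditional-geometric/tower-property step spell out what the paper states more informally about escape times and dimension increments.
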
\begin{IEEEproof}
By the identity established in the proof of Proposition~\ref{prop:joint_lb}, taking the expectation over the uniform drawing distribution yields 
\begin{equation}\label{eq:exp_sum_identity}
    E_1(G) + E_2(G) \;=\; \E[\max(\tau_1, \tau_2)] + \E[\min(\tau_1, \tau_2)].
\end{equation}

For the first term, $\max(\tau_1, \tau_2)$ is the time required for the drawn columns of $G$ to reach full rank $k$. Let $V_{(j)} \in \mathcal{L}(G)$ denote the random subspace spanned by the drawn columns at the exact moment its dimension reaches $j$. To transition to dimension $j+1$, the sequence of draws must strictly escape $V_{(j)}$. By Definition~\ref{def:column_count}, exactly $N(V_{(j)})$ columns of $G$ are contained in $V_{(j)}$. Because the draws are uniform with replacement, the expected number of draws to escape $V_{(j)}$ is exactly $\frac{n}{n - N(V_{(j)})}$. 

Since $V_{(j)}$ is spanned by columns of $G$ and has dimension~$j$, it must contain at least $j$ linearly independent columns, meaning $N(V_{(j)}) \ge j$ for any linear code. Therefore, the expected escape time satisfies
\begin{equation}\label{eq:escape_time_bound}
    \frac{n}{n - N(V_{(j)})} \;\ge\; \frac{n}{n - j}.
\end{equation}
Summing these sequential lower bounds from $j=0$ to $k-1$ yields the minimum expected time to reach full rank. This sum is precisely the expected time for an $[n,k]$ MDS code (where $N(V_{(j)}) = j$ strictly holds for all $j < k$). Thus, $\E[\max(\tau_1, \tau_2)] \ge n(H_n - H_{n-k})$.

For the second term, at the random stopping time $t^* = \min(\tau_1, \tau_2)$, the drawn subspace must fully contain either $F_1$ or $F_2$. This enforces that its dimension must be at least $\min(s_1, s_2)$. Consequently, the sequence of drawn subspaces must undergo at least $\min(s_1, s_2)$ strict dimension increments before the faster file can be recovered. Applying the exact same sequence of structural escape time bounds \eqref{eq:escape_time_bound} up to $j = \min(s_1, s_2) - 1$ yields:
\begin{align*}\label{eq:min_tau_bound}
    \E[\min(\tau_1, \tau_2)] \;&\ge\; \sum_{j=0}^{\min(s_1, s_2)-1} \frac{n}{n - j} \;\\&=\; n\bigl(H_n - H_{n-\min(s_1, s_2)}\bigr).
\end{align*}

Substituting both structural limits into \eqref{eq:exp_sum_identity} completes the proof.
\end{IEEEproof}

\subsection{Combinatorial Cuts from Mutual Exclusivity}
\label{ssec:polytope_bound}

The basic bounds of the previous subsection constrain $E_1$ and $E_2$ individually or through their sum. We now derive a family of bounds that couple $E_1$ and $E_2$ directly, capturing the geometric competition between the two files for column subsets.

The key observation is that for any $s < k$, no $s$-subset of columns can recover both files simultaneously.

\begin{lemma}\label{lem:mutual_excl}
For any rank-$k$ generator matrix $G \in \F_q^{k \times n}$, any partition $s_1 + s_2 = k$, and any $s < k$, the events 
$$\{F_1 \subseteq \langle g_j : j \in S\rangle\}$$ 
and 
$$\{F_2 \subseteq \langle g_j : j \in S\rangle\}$$ are mutually exclusive over all $s$-subsets $S \subseteq [n]$. Consequently,
\begin{equation}\label{eq:mutual_excl}
    \alpha_{F_1}(s) + \alpha_{F_2}(s) \;\leq\; \binom{n}{s} 
    \quad \text{for all } s < k,
\end{equation}
or equivalently, in terms of the normalized probabilities 
$p_i(s) \coloneqq \alpha_{F_i}(s)/\binom{n}{s}$:
\begin{equation}\label{eq:mutual_excl_prob}
    p_1(s) + p_2(s) \;\leq\; 1 \quad \text{for all } s < k.
\end{equation}
\end{lemma}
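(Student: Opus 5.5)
The plan is a dimension-counting argument applied to a single fixed subset, followed by summation over all $s$-subsets. Fix $s<k$ and an $s$-subset $S\subseteq[n]$, and write $V_S=\langle g_j : j\in S\rangle$. Since $V_S$ is spanned by $s$ vectors, $\dim V_S\le s<k$.

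We show the two events cannot both hold for $S$. Suppose, for contradiction, that $F_1\subseteq V_S$ and $F_2\subseteq V_S$. Then $V_S$ contains $F_1+F_2=\langle e_1,\dots,e_{s_1}\rangle+\langle e_{s_1+1},\dots,e_k\rangle=\F_q^k$. Here we use that the two files partition the standard basis, exactly as in the proof of Proposition~\ref{prop:joint_lb}. This forces $\dim V_S=k$, contradicting $\dim V_S\le s<k$. Hence for every $s$-subset $S$, at most one of the two recovery events occurs.

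Next we sum the indicators. By Definition~\ref{def:alpha_subset},
\[
\alpha_{F_1}(s)+\alpha_{F_2}(s)=\sum_{|S|=s}\bigl(\mathbf 1[F_1\subseteq V_S]+\mathbf 1[F_2\subseteq V_S]\bigr).
\]
By the previous step, each summand is at most $1$. There are $\binom{n}{s}$ terms, which gives \eqref{eq:mutual_excl}. Dividing by $\binom{n}{s}>0$ (valid since $s<k\le n$) gives \eqref{eq:mutual_excl_prob}. Note that $p_i(s)$ is then exactly the probability that a uniformly random $s$-subset recovers $F_i$.

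No step here is a genuine obstacle. The only point requiring care is that $F_1+F_2$ is the whole space $\F_q^k$. This holds because the files form a partition of all $k$ coordinates rather than arbitrary subspaces.
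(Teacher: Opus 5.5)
Your proposal is correct and follows essentially the same route as the paper. The paper also takes a putative $s$-subset recovering both files, uses $F_1 \oplus F_2 = \F_q^k$ to force $\dim\langle g_j : j\in S\rangle \ge k > s$, and then obtains the bound by counting and dividing by $\binom{n}{s}$. Your explicit indicator-sum bookkeeping just spells out the paper's ``follows by counting'' step.
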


\begin{IEEEproof}
Suppose for contradiction that some $s$-subset $S \subseteq [n]$ with $s < k$ satisfies both $F_1 \subseteq \langle g_j : j \in S\rangle$ and $F_2 \subseteq \langle g_j : j \in S\rangle$. Then since $F_1 \oplus F_2 = \F_q^k$, we have 
$\F_q^k \subseteq \langle g_j : j \in S\rangle$, which requires 
$\dim(\langle g_j : j \in S\rangle) \geq k$. However, 
$$\dim(\langle g_j : j \in S\rangle) \leq |S| = s < k,$$ a contradiction. 
Thus no such $S$ exists, and the events are mutually exclusive. Equation~\eqref{eq:mutual_excl} follows by counting, and~\eqref{eq:mutual_excl_prob} follows by dividing by $\binom{n}{s}$.
\end{IEEEproof}

To convert this combinatorial constraint into a bound on the retrieval times, we use the following monotonicity property of $p_i(s)$.

\begin{lemma}\label{lem:monotone_p}
For any rank-$k$ generator matrix $G \in \F_q^{k \times n}$, any $I \subseteq [k]$, and any $1 \leq s \leq n-1$:
\begin{equation}\label{eq:monotone_p}
    p_I(s) \;\leq\; p_I(s+1),
\end{equation}
where $p_I(s) = \alpha_I(s)/\binom{n}{s}$ is the probability that a uniformly chosen $s$-subset of $[n]$ recovers all strands in $I$.
\end{lemma}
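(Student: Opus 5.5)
The plan is a double-counting argument on pairs $(S, S')$ with $S \subset S'$, $|S| = s$ and $|S'| = s+1$. It rests on one observation: recovery is monotone under inclusion. If an $s$-subset $S$ recovers all symbols in $I$, meaning $\{e_i : i \in I\} \subseteq \langle g_j : j \in S\rangle$, then every superset $S' \supseteq S$ also recovers them, since $\langle g_j : j\in S\rangle \subseteq \langle g_j : j \in S'\rangle$.

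We count the pairs $(S,S')$ in which $S$ is a recovering $s$-subset. Each recovering $S$ extends to exactly $n-s$ supersets $S'$ of size $s+1$, and by the monotonicity observation each such $S'$ is itself recovering. Conversely, each recovering $(s+1)$-subset $S'$ contains at most $s+1$ subsets $S$ of size $s$, recovering or not. The number of such pairs is therefore at least $(n-s)\,\alpha_I(s)$ and at most $(s+1)\,\alpha_I(s+1)$, which gives
\[
(n-s)\,\alpha_I(s) \;\le\; (s+1)\,\alpha_I(s+1).
\]

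To normalize, we use the binomial identity $\binom{n}{s+1} = \binom{n}{s}\frac{n-s}{s+1}$. Dividing the inequality by $(s+1)\binom{n}{s}$ then yields
\[
\frac{\alpha_I(s)}{\binom{n}{s+1}}\cdot\frac{\binom{n}{s+1}}{\binom{n}{s}} \;\le\; \frac{\alpha_I(s+1)}{\binom{n}{s+1}}\cdot\frac{\binom{n}{s+1}}{\binom{n}{s}},
\]
which is exactly $p_I(s) \le p_I(s+1)$. The range $1 \le s \le n-1$ guarantees that both binomial coefficients are positive.

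Probabilistically, this says that a uniform random $(s+1)$-subset can be generated by taking a uniform random $s$-subset and adding a uniform element outside it, so recovery at size $s$ implies recovery at size $s+1$. The rank-$k$ hypothesis is not needed for this lemma. The only step that requires care is the normalization with the binomial identity.
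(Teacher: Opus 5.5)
Your proof is correct and is the same double-counting argument the paper uses, arriving at the identical inequality $(n-s)\,\alpha_I(s) \le (s+1)\,\alpha_I(s+1)$. There is one small slip in your normalization display. After dividing by $(s+1)\binom{n}{s}$, the left-hand side is $\frac{\alpha_I(s)}{\binom{n}{s}}\cdot\frac{\binom{n}{s+1}}{\binom{n}{s}}$, not $\frac{\alpha_I(s)}{\binom{n}{s+1}}\cdot\frac{\binom{n}{s+1}}{\binom{n}{s}}$; as you wrote it, the display compares $p_I(s)$ with $p_I(s+1)\binom{n}{s+1}/\binom{n}{s}$, whereas with the correct left-hand side the common factor cancels and gives the claim. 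Dividing directly by $(n-s)\binom{n}{s} = (s+1)\binom{n}{s+1}$, as the paper does, avoids this.
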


\begin{IEEEproof}
Count pairs $(S, S')$ where $S \subset S'$, $|S|=s$, $|S'|=s+1$, and $S$ recovers $I$. Since recovery is monotone under inclusion, $S' = S \cup \{j\}$ recovers $I$ for any $j \in [n]\setminus S$, giving exactly $\alpha_I(s)\cdot(n-s)$ such pairs. On the other hand, every recovering $(s+1)$-subset $S'$ contributes at most $s+1$ such pairs (one for each of its $s$-subsets, regardless of whether they recover $I$), giving at most $\alpha_I(s+1)\cdot(s+1)$ such pairs. 
Therefore:
\[
    \alpha_I(s)\cdot(n-s) \;\leq\; \alpha_I(s+1)\cdot(s+1).
\]
Dividing both sides by $\binom{n}{s}\cdot(n-s) = \binom{n}{s+1}\cdot(s+1)$ yields $p_I(s) \leq p_I(s+1)$.
\end{IEEEproof}

We can now state and prove the main result of this subsection.
\begin{theorem}\label{thm:polytope_bound}
For any rank-$k$ generator matrix $G \in \F_q^{k \times n}$, any partition $s_1 + s_2 = k$, and any integer $s^* \in \{\max(s_1,s_2), \ldots, k-1\}$, the achievable expected retrieval 
times satisfy:
\begin{equation}\label{eq:polytope_cut}
    \frac{E_1}{A_1(s^*)} + \frac{E_2}{A_2(s^*)} 
    \;\geq\; \frac{B(s^*)}{A_1(s^*)} + \frac{B(s^*)}{A_2(s^*)} - 1,
\end{equation}
where
\begin{align}
    A_i(s^*) &\;\coloneqq\; \sum_{s=s_i}^{s^*} \frac{n}{n-s}, 
    \label{eq:Ai_def}\\
    B(s^*)   &\;\coloneqq\; n\harm{n} - \sum_{s=s^*+1}^{n-1} 
    \frac{n}{n-s}.\label{eq:B_def}
\end{align}
\end{theorem}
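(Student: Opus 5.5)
The plan is to rewrite \eqref{eq:file_exp} in terms of the normalized probabilities $p_i(s)=\alpha_{F_i}(s)/\binom{n}{s}$, split the sum at $s^*$, and then combine Lemma~\ref{lem:monotone_p} with Lemma~\ref{lem:mutual_excl} at the single index $s^*$.

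\textbf{Step 1 (reweighting).} Since $\binom{n}{s}/\binom{n-1}{s} = n/(n-s)$, put $w_s \coloneqq n/(n-s)$. Then \eqref{eq:file_exp} becomes
\[
E_i = n\harm{n} - \sum_{s=1}^{n-1} w_s\, p_i(s).
\]
Two facts about the terms are needed.
\begin{itemize}
\item $p_i(s)=0$ for $s<s_i$, because fewer than $s_i$ columns cannot span the $s_i$-dimensional space $F_i$ (as in Proposition~\ref{prop:basic_lb}).
\item $p_i(s)\le 1$ for every $s$.
\end{itemize}
Bounding every term with $s>s^*$ by $w_s\cdot 1$ gives
\[
E_i \;\ge\; B(s^*) - \sum_{s=s_i}^{s^*} w_s\, p_i(s).
\]
The range $s_i\le s\le s^*$ is nonempty because $s^*\ge \max(s_1,s_2)$; this is exactly why that constraint appears in the statement. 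Set $D_i \coloneqq B(s^*)-E_i$. Then
\[
D_i \;\le\; \sum_{s=s_i}^{s^*} w_s\, p_i(s).
\]

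\textbf{Step 2 (averaging via monotonicity).} Dividing by $A_i(s^*)=\sum_{s=s_i}^{s^*} w_s$, the right-hand side becomes a weighted average of $p_i(s)$ over $s\in\{s_i,\dots,s^*\}$, with positive weights $w_s$. By Lemma~\ref{lem:monotone_p}, $p_i$ is nondecreasing in $s$. A weighted average of a nondecreasing sequence is at most its last term, so
\[
\frac{D_i}{A_i(s^*)} \;\le\; p_i(s^*).
\]

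\textbf{Step 3 (mutual exclusivity).} Since $s^*\le k-1<k$, Lemma~\ref{lem:mutual_excl} gives $p_1(s^*)+p_2(s^*)\le 1$. Hence
\[
\frac{B(s^*)-E_1}{A_1(s^*)}+\frac{B(s^*)-E_2}{A_2(s^*)}\;\le\; 1.
\]
Rearranging this yields exactly \eqref{eq:polytope_cut}.

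The main point requiring care is Step~2. The crude tail bound $p_i\le 1$ must be applied only above $s^*$, so that the remaining head sums can be controlled by the single value $p_i(s^*)$, which is the only place where mutual exclusivity is invoked. This works because the averaging inequality uses only the monotonicity of $p_i$, not any shape of the weights $w_s$. Everything else is bookkeeping with the identity $\binom{n}{s}/\binom{n-1}{s}=n/(n-s)$.
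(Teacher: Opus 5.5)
Your proposal is correct and follows the paper's proof essentially step for step. Both arguments rewrite $E_i$ in terms of $p_i(s)$ with weights $n/(n-s)$, bound the terms with $s>s^*$ by $p_i(s)\le 1$, and bound the terms with $s_i\le s\le s^*$ by $p_i(s^*)$ using Lemma~\ref{lem:monotone_p}. This gives $p_i(s^*)\ge (B(s^*)-E_i)/A_i(s^*)$, and Lemma~\ref{lem:mutual_excl} at $s^*$ then finishes the proof. Your weighted-average phrasing of Step~2 is just the paper's termwise bound $p_i(s)\le p_i(s^*)$, summed against the weights.
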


\begin{IEEEproof}
First, we rewrite $E_i$ in terms of $p_i(s)$.
By Lemma~\ref{lem:exp_formula_subset} and Definition~\ref{def:alpha_subset}, recalling that $p_i(s) = \alpha_{F_i}(s)/\binom{n}{s}$ and that $\alpha_{F_i}(s) = 0$ for $s < s_i$:
\begin{equation}\label{eq:Ei_rewrite}
    E_i \;=\; n\harm{n} - \sum_{s=s_i}^{n-1} 
    \frac{\alpha_{F_i}(s)}{\binom{n-1}{s}}
    \;=\; n\harm{n} - \sum_{s=s_i}^{n-1} p_i(s) \cdot \frac{n}{n-s}.
\end{equation}
where in the last equality we used $\frac{\binom{n}{s}}{\binom{n-1}{s}} = \frac{n}{n-s}$.

Next, we lower bound $p_i(s^*)$ in terms of $E_i$ as follows. Fix $s^* \in \{\max(s_1,s_2), \ldots, k-1\}$.  By Lemma~\ref{lem:monotone_p}, $p_i(s) \leq p_i(s^*)$ for all $s \leq s^*$, and trivially $p_i(s) \leq 1$ for all $s$. Applying these bounds 
to~\eqref{eq:Ei_rewrite}:
\begin{align}
E_i &\;=\; n\harm{n} - \sum_{s=s_i}^{s^*} p_i(s)\cdot\frac{n}{n-s} 
- \sum_{s=s^*+1}^{n-1} p_i(s)\cdot\frac{n}{n-s} \notag\\
&\;\geq\; n\harm{n} - p_i(s^*)\sum_{s=s_i}^{s^*}\frac{n}{n-s} 
- \sum_{s=s^*+1}^{n-1}\frac{n}{n-s} \notag\\
&\;=\; B(s^*) - p_i(s^*) \cdot A_i(s^*), \label{eq:Ei_lower}
\end{align}
where in the second line we applied $p_i(s) \leq p_i(s^*)$ to the first sum and $p_i(s) \leq 1$ to the second sum, and in the last line we substituted with $A(s^*),B(s^*)$ by their definition. Rearranging 
\eqref{eq:Ei_lower} implies that 
\begin{equation}\label{eq:pi_lower}
    p_i(s^*) \;\geq\; \frac{B(s^*) - E_i}{A_i(s^*)}.
\end{equation}

Lastly, we apply the mutual exclusivity. Since $s^* < k$, Lemma~\ref{lem:mutual_excl} gives $p_1(s^*) + p_2(s^*) \leq 1$. Substituting~\eqref{eq:pi_lower} for both files yields
\begin{equation}
    \frac{B(s^*) - E_1}{A_1(s^*)} + \frac{B(s^*) - E_2}{A_2(s^*)} 
    \;\leq\; 1.
\end{equation}
Rearranging gives~\eqref{eq:polytope_cut}.
\end{IEEEproof}

\begin{remark}
The range $s^* \geq \max(s_1, s_2)$ is required to ensure $A_i(s^*) > 0$ for both $i \in \{1,2\}$, so that~\eqref{eq:polytope_cut} is well-defined. 
Specifically, $A_i(s^*) = 0$ whenever $s^* < s_i$. The upper bound $s^* \leq k-1$ is required for Lemma~\ref{lem:mutual_excl} to apply. 
Note that for asymmetric partitions with $s_1 \neq s_2$, the valid range $\{\max(s_1,s_2), \ldots, k-1\}$ may contain only a single value; in particular, when $\max(s_1,s_2) = k-1$ (e.g., $s_1=1$, $s_2=k-1$), the only valid choice is $s^* = k-1$.
\end{remark}

Taking $s^* = k-1$ gives the tightest cut within the family, as stated in the following corollary. Using $s^* = k-1$ in~\eqref{eq:B_def}:
\begin{align}
    B(k-1) &\;=\; n\harm{n} - \sum_{s=k}^{n-1} \frac{n}{n-s} \notag\\
           &\;=\; n\harm{n} - n\sum_{j=1}^{n-k} \frac{1}{j} \notag\\
           &\;=\; n\bigl(\harm{n} - \harm{n-k}\bigr),
\end{align}
where in the second equality we substituted $j = n-s$. Note that $n(\harm{n} - \harm{n-k})$ is precisely the expected number of draws required to recover all $k$ information strands under an $[n,k]$ MDS code~\cite{bar-lev2025cover}, reflecting the coupon-collector structure of full-code recovery.

\begin{corollary}\label{cor:tightest_cut}
Under the same conditions as Theorem~\ref{thm:polytope_bound}, with $B^* \coloneqq n(\harm{n} - \harm{n-k})$ and $A_i^* \coloneqq \sum_{s=s_i}^{k-1}\frac{n}{n-s}$, the cut at $s^* = k-1$ dominates all other cuts in the family and reads:
\begin{equation}\label{eq:tightest_cut}
    \frac{E_1}{A_1^*} + \frac{E_2}{A_2^*} 
    \;\geq\; B^*\!\left(\frac{1}{A_1^*} + \frac{1}{A_2^*}\right) - 1.
\end{equation}
\end{corollary}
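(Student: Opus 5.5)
The displayed form \eqref{eq:tightest_cut} follows by substituting $s^*=k-1$ into \eqref{eq:polytope_cut}, using the computation $B(k-1)=n(\harm{n}-\harm{n-k})=B^*$ above and $A_i(k-1)=A_i^*$. The real content is the claim that this cut dominates the others. Since the cuts are lines of different slopes, I would not try to derive one inequality from another. Instead I would compare the infeasible regions they carve out inside the region where every achievable point must lie anyway, and show these regions are nested, increasing in $s^*$.

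\textbf{Step 1: every cut passes through two points determined by $B(s^*)$.} Set $a_i \coloneqq n(\harm{n}-\harm{n-s_i})=\sum_{s=0}^{s_i-1}\frac{n}{n-s}$. From \eqref{eq:Ai_def} and \eqref{eq:B_def},
\[
B(s^*)-A_i(s^*) \;=\; n\harm{n}-\sum_{s=s_i}^{n-1}\frac{n}{n-s} \;=\; a_i ,
\]
and this value does not depend on $s^*$. Hence the boundary line of \eqref{eq:polytope_cut}, namely $\frac{B-E_1}{A_1}+\frac{B-E_2}{A_2}=1$, passes through the two points $(a_1,B(s^*))$ and $(B(s^*),a_2)$.

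\textbf{Step 2: a universal quadrant containing all achievable pairs.} I would show $E_i\ge a_i$ for every $G$. This uses the escape-time argument from the proof of Theorem~\ref{thm:mds_rank_sum}. To recover $F_i$, the span of the drawn columns must pass through dimensions $0,1,\dots,s_i-1$. Escaping a $j$-dimensional column span takes expected time at least $\frac{n}{n-j}$, because such a span contains at least $j$ columns. Summing over $j=0,\dots,s_i-1$ gives $E_i\ge a_i$. So every achievable pair lies in $Q\coloneqq[a_1,\infty)\times[a_2,\infty)$.

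\textbf{Step 3: nesting of the infeasible triangles.} Within $Q$, the set excluded by the cut at $s^*$ is $\{(E_1,E_2)\in Q:\ \frac{B-E_1}{A_1}+\frac{B-E_2}{A_2}>1\}$. On $Q$ we have $B-E_i\le A_i$, and by Step 1 this set is exactly the open triangle with vertices $(a_1,a_2)$, $(a_1,B(s^*))$ and $(B(s^*),a_2)$, minus its hypotenuse. The legs of this triangle lie on the fixed lines $E_1=a_1$ and $E_2=a_2$, and the hypotenuse endpoints slide outward as $B$ grows. Since $B(s^*+1)-B(s^*)=\frac{n}{n-s^*-1}>0$, $B$ is strictly increasing in $s^*$, so the triangles are nested. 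The triangle for $s^*=k-1$ therefore contains every other one. Equivalently, on $Q$ the constraint \eqref{eq:tightest_cut} implies \eqref{eq:polytope_cut} for every admissible $s^*$, which is the dominance claimed.

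\textbf{Main obstacle.} The delicate point is interpreting ``dominates'' correctly. A direct algebraic implication between cuts on all of $\mathbb{R}^2$ fails, because the lines cross outside $Q$. That is why Step 2 is needed to restrict attention to $Q$, and it is the step I would verify most carefully, in particular the bound $N(V_{(j)})\ge j$ for intermediate spans. The observation in Step 1, that $B(s^*)-A_i(s^*)$ does not depend on $s^*$, is what makes the nesting argument go through.
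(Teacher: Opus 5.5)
Your argument is correct and is essentially the paper's proof in geometric language. The paper likewise uses the $s^*$-invariance of $B(s^*)-A_i(s^*)=\beta_i$ (your $a_i$), rewrites each cut as $\frac{E_1-\beta_1}{A_1(s^*)}+\frac{E_2-\beta_2}{A_2(s^*)}\ge 1$ (your nested triangles), and concludes from $E_i\ge\beta_i$ and the monotonicity of $A_i(s^*)$. The only difference is how $E_i\ge\beta_i$ is obtained: the paper reads it off the subset-count formula using $\alpha_{F_i}(s)=0$ for $s<s_i$ and $\alpha_{F_i}(s)\le\binom{n}{s}$, whereas you use the escape-time argument, which is equally valid.
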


\begin{IEEEproof}
Define $\beta_i \coloneqq n(\harm{n} - \harm{n-s_i})$ for $i \in \{1,2\}$.
We first observe that for any $${s^* \in \{\max(s_1,s_2),\ldots,k-1\}}$$ we have that
\begin{align}
    B(s^*) - A_i(s^*) 
    \;&=\; n\harm{n} - \sum_{s=s^*+1}^{n-1}\frac{n}{n-s} \nonumber
    - \sum_{s=s_i}^{s^*}\frac{n}{n-s}
    \;\\&=\; n\harm{n} - \sum_{s=s_i}^{n-1}\frac{n}{n-s}
    \;=\; \beta_i,\label{eq:beta_const}
\end{align}
which is independent of $s^*$. Next we show that the cut at $s^*$ from Theorem~\ref{thm:polytope_bound} is equivalent to
\begin{equation}\label{eq:cut_rewritten}
    \frac{E_1 - \beta_1}{A_1(s^*)} + \frac{E_2 - \beta_2}{A_2(s^*)} 
    \;\geq\; 1.
\end{equation}

By~\eqref{eq:beta_const}, $B(s^*) =\beta_i + A_i(s^*)$ for each $i \in \{1,2\}$. Subtracting $\frac{B(s^*)}{A_1(s^*)} + \frac{B(s^*)}{A_2(s^*)}$ from both sides of~\eqref{eq:polytope_cut} and substituting $B(s^*) = \beta_i + A_i(s^*)$ into each numerator gives:
\begin{align*}
    \frac{E_1-B(s^*)}{A_1(s^*)} + \frac{E_2-B(s^*)}{A_2(s^*)} 
    &\;\geq\; -1 \\
    \frac{E_1-\beta_1-A_1(s^*)}{A_1(s^*)} 
    + \frac{E_2-\beta_2-A_2(s^*)}{A_2(s^*)} 
    &\;\geq\; -1 \\
    \frac{E_1-\beta_1}{A_1(s^*)} + \frac{E_2-\beta_2}{A_2(s^*)} 
    - 2 &\;\geq\; -1,
\end{align*}
which yields~\eqref{eq:cut_rewritten}.

We next show that $E_i \geq \beta_i$ for any rank-$k$ generator matrix~$G$. Since $\alpha_{F_i}(s) = 0$ for $s < s_i$ and $\alpha_{F_i}(s) \leq \binom{n}{s}$ for all $s$, we have:
\begin{align}
    E_i \;&=\; n\harm{n} - \sum_{s=s_i}^{n-1} 
    \frac{\alpha_{F_i}(s)}{\binom{n-1}{s}}
    \;\geq\; n\harm{n} - \sum_{s=s_i}^{n-1} \frac{n}{n-s}
    \;\nonumber
    \\&=\; n(\harm{n} - \harm{n-s_i}) \;=\; \beta_i,\label{eq:Ei_beta_lb}
\end{align}
where we used $\binom{n}{s}/\binom{n-1}{s} = n/(n-s)$ and the substitution $j = n-s$ to obtain $\sum_{s=s_i}^{n-1}\frac{n}{n-s} = n\harm{n-s_i}$.

Now fix any $s' \in \{\max(s_1,s_2),\ldots,k-2\}$. Since $s' < k-1$, we have $A_i(s') < A_i(k-1) = A_i^*$ (both $A_i(x)$ are strictly increasing in $x$). Since $E_i - \beta_i \geq 0$ by \eqref{eq:Ei_beta_lb}, each term in~\eqref{eq:cut_rewritten} satisfies:
\[
    \frac{E_i - \beta_i}{A_i(s')} 
    \;\geq\; \frac{E_i - \beta_i}{A_i^*}.
\]
Summing over $i \in \{1,2\}$ gives:
\[
    \frac{E_1-\beta_1}{A_1(s')} + \frac{E_2-\beta_2}{A_2(s')}
    \;\geq\; \frac{E_1-\beta_1}{A_1^*} + \frac{E_2-\beta_2}{A_2^*}.
\]
Therefore, if $(E_1,E_2)$ satisfies the $s^*=k-1$ cut, i.e., the right-hand side above is at least~$1$, then the left-hand side is also  at least $1$, which is exactly the cut at $s'$. This completes the proof.
\end{IEEEproof}

\begin{remark}
The bound $E_i \geq \beta_i = n(\harm{n} - \harm{n-s_i})$, established in~\eqref{eq:Ei_beta_lb}, is strictly stronger than Proposition~\ref{prop:basic_lb} for all finite $n$ and $s_i \geq 1$, since $n(\harm{n} - \harm{n-s_i}) \geq s_i$ with equality only when $n = s_i$ (which is not possible for two files). Thus Proposition~\ref{prop:basic_lb} is subsumed by Corollary~\ref{cor:tightest_cut}.
\end{remark}

\Cref{fig:cuts_n20,fig:cuts_n50} illustrate the family of cuts  for $k=8$ and $n=20, 50$ respectively, across four partition configurations. The tightest cut ($s^*=k-1$, solid line) uniformly dominates all weaker cuts (dashed lines) across the entire $(E_1,E_2)$ plane. The black dots indicate the operating points  of the local MDS strategy. Observe that as $n$ grows, the cuts tighten and the local MDS points move closer to the cut boundaries, consistent with the asymptotic analysis of 
\cref{sec:asymptotics}.

\begin{figure*}[h]
\centering
\includegraphics[width=\linewidth]{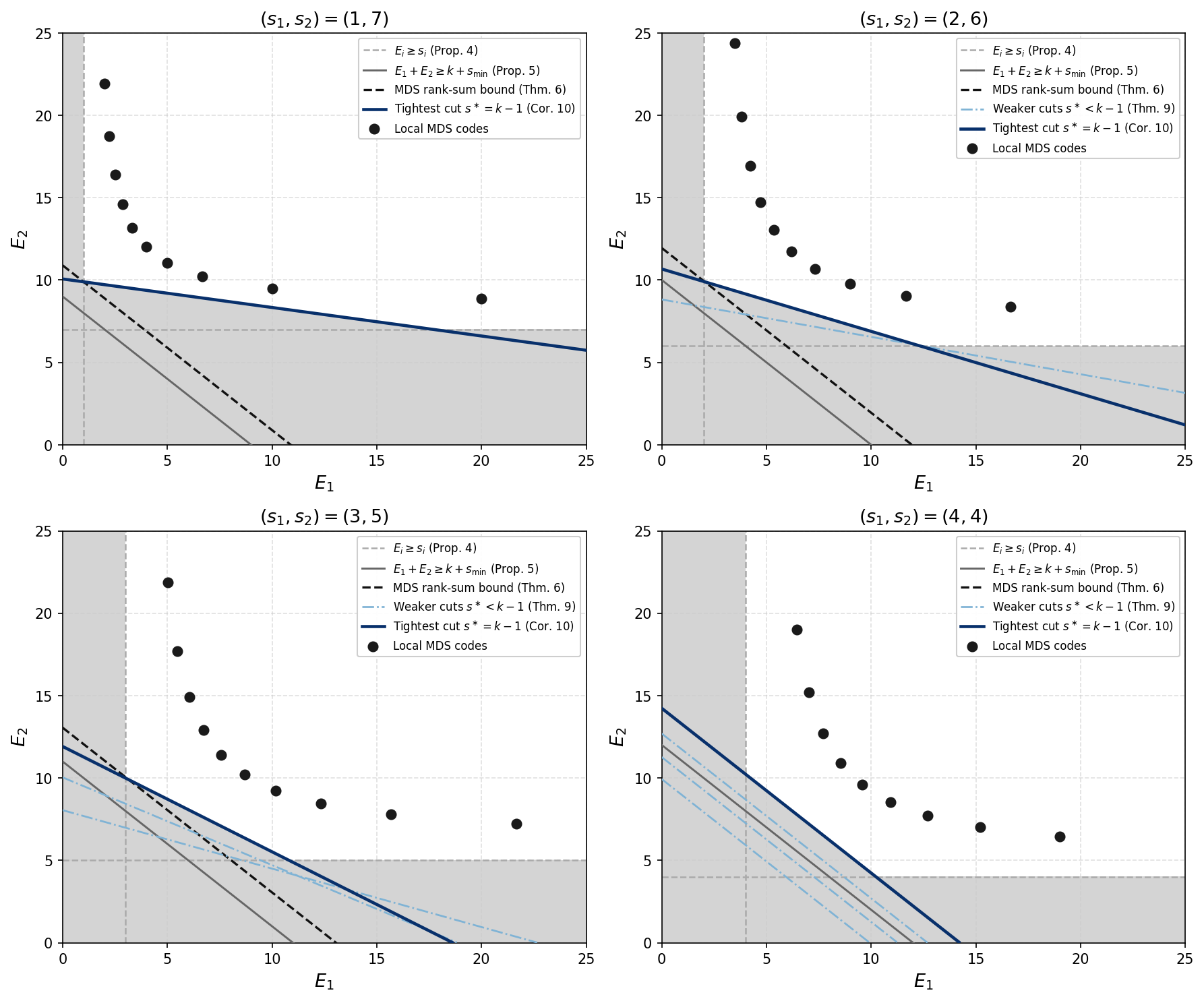}
\caption{Combinatorial cut bounds for $k=8$, $n=20$, across 
four partition configurations $(s_1,s_2)$. The solid line is 
the tightest cut $s^*=k-1$; dashed lines show weaker cuts. 
Black dots are local MDS operating points.}
\label{fig:cuts_n20}
\end{figure*}

\begin{figure*}[h]
\centering
\includegraphics[width=\linewidth]{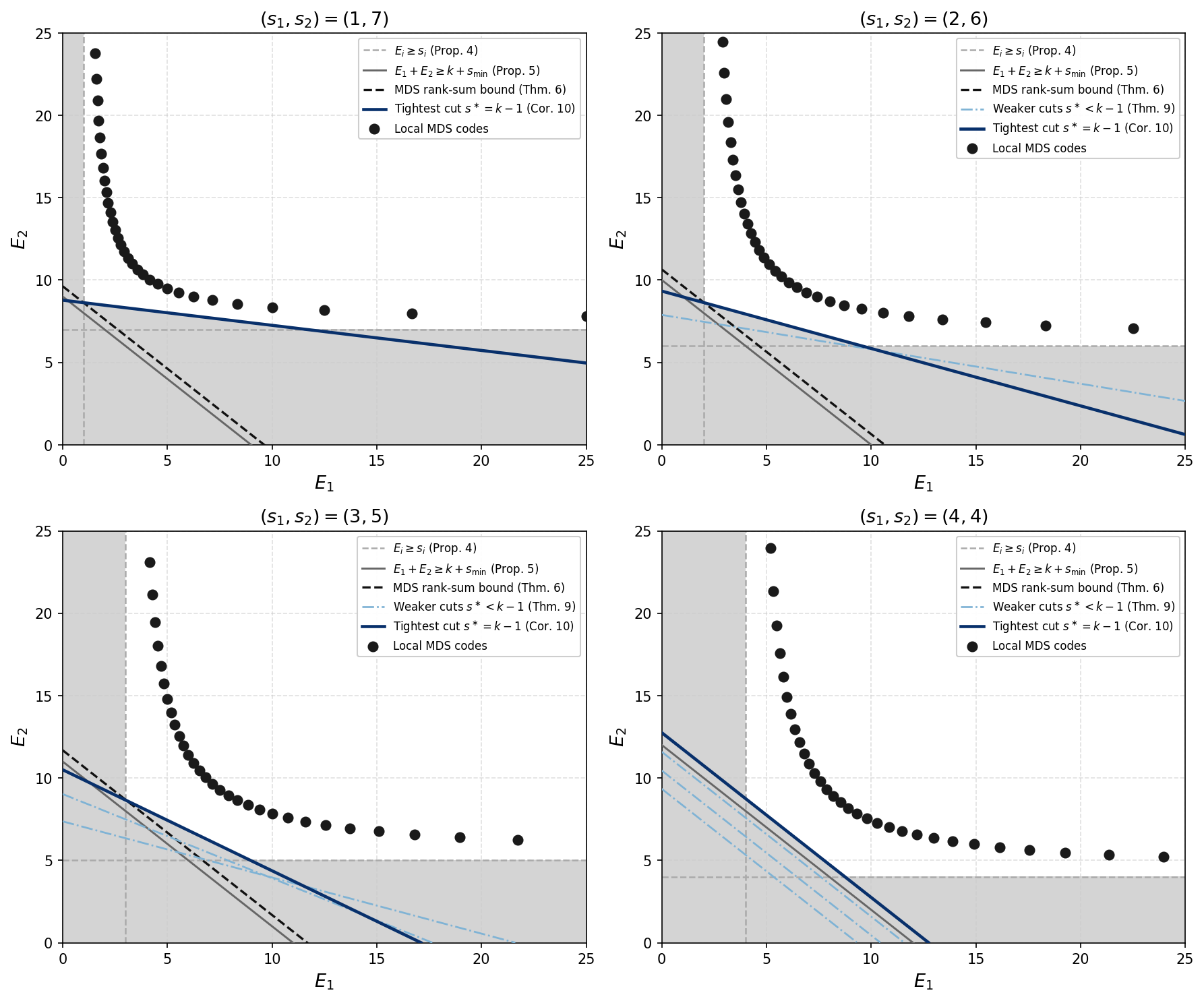}
\caption{Same as \cref{fig:cuts_n20} but for $n=50$. 
The cuts tighten noticeably and the local MDS points  approach the cut boundaries.}
\label{fig:cuts_n50}
\end{figure*}

\begin{remark}\label{rem:linear_bounds_polytope}
All bounds established in this section (Propositions~\ref{prop:basic_lb} and~\ref{prop:joint_lb}, Theorem~\ref{thm:mds_rank_sum}, and the family of cuts in Theorem~\ref{thm:polytope_bound} and Corollary~\ref{cor:tightest_cut}) are \emph{linear} in the pair $(E_1, E_2)$. Collectively they define a convex polytope that outer-approximates the true non-achievable region. As $n$~grows (Figures~\ref{fig:cuts_n20} and~\ref{fig:cuts_n50}), the local MDS operating points accumulate along a curve of the form $s_1/E_1 + s_2/E_2 = \mathrm{const}$, suggesting that the fundamental trade-off between $E_1$ and $E_2$ is governed by a {hyperbolic} constraint. No finite collection of linear cuts can describe such a boundary exactly. Proving a nonlinear lower bound of this form requires a qualitatively different approach, which we develop in the following section.
\end{remark}

\section{A Nonlinear Bound via Column Geometry}
\label{sec:nonlinear_bound}

While the finite collection of linear cuts derived in Section~\ref{sec:lower_bounds} bounds the local MDS operating points, no purely linear system can natively describe the hyperbolic trade-off curve they form. 
Before exploiting the internal column structure of the generator matrix, we record what can be said about the nonlinear quantity $s_1/E_1 + s_2/E_2$ from the polytope constraints
of Section~\ref{sec:lower_bounds} alone. The next two corollaries give unconditional
upper bounds on this quantity --- valid for \emph{every} linear code, with no
assumption on its column structure --- in explicit closed form. They are the
strongest bounds obtainable from the aggregate constraints of Propositions~\ref{prop:basic_lb} and~\ref{prop:joint_lb}, and thus the best provable, assumption-free counterparts to the hyperbolic bound. The column-geometry argument
that follows sharpens them to $\le 1$ under a structural condition and motivates the
universal conjecture.

\begin{corollary}\label{cor:smax_bound}
For any $(E_1, E_2) \in \Lam_{n,k}(s_1, s_2)$,
\begin{equation}\label{eq:smax_bound}
    \frac{s_1}{E_1} + \frac{s_2}{E_2} \;\le\; 1 + \frac{s_{\max}}{k},
\end{equation}
where $s_{\max} \coloneqq \max\{s_1, s_2\}$.
\end{corollary}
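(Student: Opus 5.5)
The idea is to treat the two linear constraints already proven as the only information available, and maximize the nonlinear quantity $\phi(E_1,E_2) \coloneqq s_1/E_1 + s_2/E_2$ over the region they cut out. By Proposition~\ref{prop:basic_lb} and Proposition~\ref{prop:joint_lb}, every $(E_1,E_2)\in\Lam_{n,k}(s_1,s_2)$ lies in
\[
  P \coloneqq \{(E_1,E_2) : E_1 \ge s_1,\ E_2 \ge s_2,\ E_1+E_2 \ge k+s_{\min}\},
\]
where $s_{\min}=\min\{s_1,s_2\}$. It therefore suffices to show $\sup_P \phi \le 1 + s_{\max}/k$.

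First I reduce to the boundary segment $E_1+E_2 = k+s_{\min}$. The function $\phi$ is strictly decreasing in each coordinate on $P$. Given a point of $P$ with $E_1+E_2 > k+s_{\min}$, I lower $E_1$ to $\max\{s_1,\,k+s_{\min}-E_2\}$; this stays in $P$ and increases $\phi$. If this stops at $E_1=s_1$ with the sum still too large, I then lower $E_2$ to $\max\{s_2,\,k+s_{\min}-s_1\}$. Since $s_1+s_2=k < k+s_{\min}$, this second step stops exactly on the line $E_1+E_2=k+s_{\min}$. Hence the supremum is attained on the segment
\[
  \{E_1+E_2=k+s_{\min},\ E_1\ge s_1,\ E_2\ge s_2\}.
\]
Next, $\phi$ is convex on the positive quadrant, being a sum of the convex functions $s_i/x$. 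A convex function on a segment attains its maximum at an endpoint, so only two points need checking.

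By symmetry of the statement, assume without loss of generality $s_1\le s_2$, so $s_{\min}=s_1$ and $s_{\max}=s_2$. The endpoints are as follows.
\begin{itemize}
\item $(E_1,E_2)=(s_1,\,k)$. This is admissible since $k\ge s_2$, and gives $\phi = 1 + s_2/k = 1+s_{\max}/k$.
\item $(E_1,E_2)=(k+s_1-s_2,\,s_2)=(2s_1,\,s_2)$. This is admissible since $2s_1\ge s_1$, and gives $\phi = 1/2 + 1 = 3/2$.
\end{itemize}
Because $s_{\max}\ge k/2$, we have $3/2 \le 1 + s_{\max}/k$, so the maximum over $P$ is $1+s_{\max}/k$, which proves \eqref{eq:smax_bound}. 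The value is attained at the first endpoint, which justifies the claim that this is the best bound obtainable from these two propositions alone.

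The only delicate step is the reduction to the boundary segment. One must check that the coordinatewise decreases never leave $P$ and always terminate on the sum constraint rather than at the corner $(s_1,s_2)$. This holds precisely because $s_1+s_2=k<k+s_{\min}$. The remaining steps are a convexity observation and a two-line computation.
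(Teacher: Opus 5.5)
Your proof is correct and is essentially the paper's argument. Both confine $(E_1,E_2)$ to the polytope given by Propositions~\ref{prop:basic_lb} and~\ref{prop:joint_lb} and use coordinatewise monotonicity of $s_1/E_1+s_2/E_2$ to reduce to the diagonal segment: you do this with an explicit push-down, the paper by listing the three pieces of the Pareto-minimal boundary. Both then use convexity to compare the endpoints $(s_1,k)$ and $(2s_1,s_2)$, obtaining $\max\{1+s_{\max}/k,\,3/2\}=1+s_{\max}/k$.
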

\begin{IEEEproof}
Without loss of generality assume ${s_1 \le s_2}$, so $s_{\min} = s_1$ and $s_{\max} = s_2$. Let 
$$f(E_1, E_2) \coloneqq \frac{s_1}{E_1} + \frac{s_2}{E_2}.$$ 
By Propositions~\ref{prop:basic_lb} and~\ref{prop:joint_lb}, every $(E_1, E_2) \in \Lam_{n,k}(s_1,s_2)$ lies in the polytope
\[
    \mathcal{P} \;\coloneqq\; \left\{(E_1,E_2)\in\mathbb{R}_{>0}^2 
    : \substack{E_1 \ge s_1,\; E_2 \ge s_2,\;\\ E_1+E_2 \ge k+s_1}\right\}.
\]
The Pareto-minimal boundary of $\mathcal{P}$, i.e., the set of points $(E_1, E_2) \in \mathcal{P}$ such that no $(E_1', E_2') \in \mathcal{P}$ satisfies $E_i' \le E_i$ for both $i$ with at least one strict inequality, consists of exactly three pieces. The diagonal constraint $E_1 + E_2 = k + s_1$ intersects the vertical wall $E_1 = s_1$ at $(s_1, k)$ and the horizontal floor $E_2 = s_2$ at $(2s_1, s_2)$ (as substituting $E_2 = s_2 = k - s_1$ into $E_1 + E_2 = k + s_1$ gives $E_1 = k + s_1 - (k - s_1) = 2s_1$). This yields the finite segment $$\{E_1+E_2 = k+s_1,\; E_1\in[s_1,2s_1]\},$$ the ray 
$$\{E_1 = s_1,\; E_2 \ge k\}$$ 
extending upward from $(s_1,k)$, and the ray $$\{E_2 = s_2,\; E_1 \ge 2s_1\}$$ extending rightward from $(2s_1,s_2)$.

Since $f$ is strictly decreasing in each variable separately, its supremum over $\mathcal{P}$ is attained on this Pareto-minimal boundary. On the left ray, $f = 1 + s_2/E_2$ is decreasing in $E_2$, so its maximum is at $E_2 = k$, the left endpoint of the segment. On the bottom ray, $f = s_1/E_1 + 1$ is decreasing in $E_1$, so its maximum is at $E_1 = 2s_1$, the right endpoint of the segment. Hence the global maximum of $f$ over $\mathcal{P}$ is attained on the finite segment.

On the segment, substitute $E_2 = k + s_1 - E_1$ to obtain the single-variable function
\[
    h(E_1) \;\coloneqq\; \frac{s_1}{E_1} + \frac{s_2}{k+s_1-E_1}, 
    \qquad E_1 \in [s_1, 2s_1].
\]
Its second derivative is
\[
    h''(E_1) \;=\; \frac{2s_1}{E_1^3} + \frac{2s_2}{(k+s_1-E_1)^3},
\]
which is strictly positive since $E_1 > 0$ and $k+s_1-E_1 = E_2 \ge s_2 > 0$ throughout the interval. Hence $h$ is strictly convex on $[s_1, 2s_1]$, and its maximum is attained at one of the two endpoints. Evaluating:
\begin{align*}
    h(s_1) &\;=\; \frac{s_1}{s_1} + \frac{s_2}{k} 
    \;=\; 1 + \frac{s_2}{k}, \\[4pt]
    h(2s_1) &\;=\; \frac{s_1}{2s_1} + \frac{s_2}{s_2} 
    \;=\; \frac{1}{2} + 1 \;=\; \frac{3}{2}.
\end{align*}
Since $s_1 \le s_2$ and $s_1 + s_2 = k$, we have $s_2 \ge k/2$, hence $s_2/k \ge 1/2$, so $1 + s_2/k \ge 3/2$, with equality if and only if $s_1 = s_2$. The maximum of $h$ is therefore 
$1 + s_2/k = 1 + s_{\max}/k$, which completes the proof.
\end{IEEEproof}

\begin{corollary}\label{cor:cs_bound}
For any $(E_1, E_2) \in \Lam_{n,k}(s_1, s_2)$,
\begin{equation}\label{eq:cs_bound}
    \frac{s_1}{E_1} + \frac{s_2}{E_2} 
    \;\le\; \frac{k^2}{2s_{\min}^2 + s_{\max}^2}.
\end{equation}
For the symmetric partition $s_1 = s_2 = k/2$, this evaluates to~$4/3$. For the extreme asymmetric partition $s_{\min} = 1$, $s_{\max} = k-1$, it evaluates to $k^2/(k^2 - 2k + 3)$.
\end{corollary}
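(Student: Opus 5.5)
The plan is to combine a Cauchy--Schwarz type relation between $f(E_1,E_2)=s_1/E_1+s_2/E_2$ and the weighted sum $L(E_1,E_2)\coloneqq s_1E_1+s_2E_2$ with a linear-programming minimization of $L$ over the polytope $\mathcal{P}$ from the proof of Corollary~\ref{cor:smax_bound}. The intended chain is
\[
\frac{s_1}{E_1}+\frac{s_2}{E_2}\;\le\;\frac{k^2}{s_1E_1+s_2E_2}\;\le\;\frac{k^2}{\min_{\mathcal{P}} L}.
\]
The denominator $2s_{\min}^2+s_{\max}^2$ in \eqref{eq:cs_bound} is exactly $\min_{\mathcal{P}}L$, which is why I expect this to be the intended route.

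\textbf{Step 1 (the linear program).} Assume without loss of generality $s_1\le s_2$. I minimize $L$ over $\mathcal{P}=\{E_1\ge s_1,\,E_2\ge s_2,\,E_1+E_2\ge k+s_1\}$. Since $L$ is linear with positive coefficients, its minimum is attained at one of the two Pareto vertices $(s_1,k)$ and $(2s_1,s_2)$ found in the proof of Corollary~\ref{cor:smax_bound}. The values there are
\[
L(s_1,k)=s_1^2+s_2k,\qquad L(2s_1,s_2)=2s_1^2+s_2^2 .
\]
Their difference is $s_2k-s_1^2-s_2^2=s_1(s_2-s_1)\ge 0$, so the minimum is $2s_{\min}^2+s_{\max}^2$. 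Intuitively, the slack $s_1$ is loaded onto the coordinate with the smaller weight.

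\textbf{Step 2 (special cases).} Once the bound holds, the two stated evaluations are routine arithmetic. For $s_1=s_2=k/2$ the denominator is $2(k/2)^2+(k/2)^2=3k^2/4$, giving $4/3$. For $s_{\min}=1$ it is $2+(k-1)^2=k^2-2k+3$.

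\textbf{Step 3, the main obstacle.} This is the first inequality, $f\le k^2/L$. Plain Cauchy--Schwarz applied to the vectors $\bigl(\sqrt{s_i/E_i}\bigr)$ and $\bigl(\sqrt{s_iE_i}\bigr)$ gives $k^2\le f\cdot L$, which is the \emph{opposite} direction. Moreover, Corollary~\ref{cor:smax_bound} shows that $f$ attains $1+s_{\max}/k$ on $\mathcal{P}$; in the symmetric case this is $3/2>4/3$. So the claim cannot follow from membership in $\mathcal{P}$ alone.

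I would therefore first try to inject extra information that excludes the vertex $(s_1,k)$, where $f$ is largest. One natural source is Jensen's inequality, $s_i/E_i\le \E[s_i/\tau_i]$, applied together with a pathwise analysis of $(\tau_1,\tau_2)$, using $\max(\tau_1,\tau_2)\ge k$ and $\min(\tau_1,\tau_2)\ge s_{\min}$. Another is the finite-length bound of Theorem~\ref{thm:mds_rank_sum}.

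If no such strengthening closes the gap, then the statement as written is not a consequence of Propositions~\ref{prop:basic_lb} and~\ref{prop:joint_lb}, and the direction of the first inequality would have to be revisited. Checking this direction carefully against the vertex $(s_1,k)$ is the first thing I would do before writing out the proof.
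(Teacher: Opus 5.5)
Your Steps~1 and~2 are exactly the paper's argument: the same polytope $\mathcal{P}$, the same comparison of the vertices $(s_1,k)$ and $(2s_1,s_2)$, and the same special-case arithmetic. Your worry in Step~3 is justified. The paper applies Cauchy--Schwarz to precisely the vectors you name and obtains $(s_1/E_1+s_2/E_2)(s_1E_1+s_2E_2)\ge k^2$. It then writes $s_1/E_1+s_2/E_2\le k^2/(s_1E_1+s_2E_2)$, which is the reversed inequality. So the published proof breaks at the step you flag.

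Your vertex check shows this is not a cosmetic slip. With $s_1=s_{\min}$, the point $(s_1,k)\in\mathcal{P}$ has $s_1/E_1+s_2/E_2=1+s_{\max}/k>k^2/(2s_{\min}^2+s_{\max}^2)$. Even at the minimizer $(2s_1,s_2)$ of $s_1E_1+s_2E_2$, one has $s_1/E_1+s_2/E_2=3/2\ge k^2/(2s_1^2+s_2^2)$, because this is equivalent to $(2s_1-s_2)^2\ge 0$. Hence no argument using only Propositions~\ref{prop:basic_lb} and~\ref{prop:joint_lb} can give the stated constant. This also contradicts the paper's later remark that Corollary~\ref{cor:cs_bound} dominates Corollary~\ref{cor:smax_bound}, since the latter is exactly the supremum over $\mathcal{P}$.

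Your proposal, however, is not a proof either: Step~3 is left open, and the repairs you list cannot close it. Jensen's inequality combined with the pathwise facts $\tau_i\ge s_i$ and $\max(\tau_1,\tau_2)\ge k$ is again a relaxation in which $(\tau_1,\tau_2)=(s_1,k)$ is admissible. It is even realized with positive probability, e.g.\ by the identity code. So the pathwise supremum of $s_1/\tau_1+s_2/\tau_2$ is once more $1+s_{\max}/k$.

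Likewise, with $s_1=s_{\min}$, the point $\bigl(n(H_n-H_{n-s_1}),\,n(H_n-H_{n-k})\bigr)$ satisfies every constraint of Section~\ref{sec:lower_bounds}. This includes Theorem~\ref{thm:mds_rank_sum} (with equality) and the tightest cut of Corollary~\ref{cor:tightest_cut}, and the point converges to $(s_1,k)$ as $n\to\infty$. So none of the finite-length linear bounds excludes the bad vertex uniformly in $n$.

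The missing ingredient is a genuine code-structural trade-off: one must show that $E_1$ close to $s_1$ forces $E_2$ well above $k$. This is essentially the open Conjecture~\ref{conj:hyperbola}. The stated bound would follow from it when $\max\{s_1,s_2\}\ge 2$, since $k^2-2s_{\min}^2-s_{\max}^2=s_{\min}(2s_{\max}-s_{\min})>0$. For $s_{\min}=1$ the right-hand side is $1+O(1/k)$, so the corollary is nearly as strong as the conjecture itself. The case $k=2$ can be settled directly by grouping the columns into lines of $\F_q^2$, which gives $1/E_1+1/E_2\le(4\sqrt2+2)/7<4/3$. As things stand, the corollary is unproved both in your plan and in the paper.
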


\begin{IEEEproof}
Without loss of generality, assume $s_1 \le s_2$. By the Cauchy-Schwarz inequality applied to the vectors 
$$\left(\sqrt{\frac{s_1}{E_1}}, \sqrt{\frac{s_2}{E_2}}\right) \text{ and } \left(\sqrt{s_1 E_1}, \sqrt{s_2 E_2}\right)$$ we have that 
\[
    \left(\frac{s_1}{E_1} + \frac{s_2}{E_2}\right)
    \!\left(s_1 E_1 + s_2 E_2\right)
    \;\ge\; (s_1 + s_2)^2 \;=\; k^2,
\]
which gives 
$\frac{s_1}{E_1} + \frac{s_2}{E_2} \le \frac{k^2}{s_1 E_1 + s_2 E_2}$. 
Since $k^2/x$ is decreasing in $x$, it remains to find the minimum of $$g(E_1, E_2) \coloneqq s_1 E_1 + s_2 E_2$$ over $\mathcal{P}$.
Since $g$ is linear, its minimum over each piece of the Pareto-minimal boundary is at an endpoint of that piece. On the left ray $\{E_1 = s_1,\, E_2 \ge k\}$, $g = s_1^2 + s_2 E_2$ is increasing in $E_2$, so the minimum is at $E_2 = k$, giving the point $(s_1, k)$. On the bottom ray $\{E_2 = s_2,\, E_1 \ge 2s_1\}$, $g = s_1 E_1 + s_2^2$ is increasing in $E_1$, so the minimum is at  $E_1 = 2s_1$, giving the point $(2s_1, s_2)$. On the diagonal 
segment, substituting ${E_2 = k + s_1 - E_1}$ gives $g = (s_1 - s_2)E_1 + s_2(k+s_1)$, which is linear with slope 
$s_1 - s_2 \le 0$, hence decreasing in $E_1$, so again the minimum 
is at $E_1 = 2s_1$ (the second point). Thus the global minimum of $g$ over 
$\mathcal{P}$ is attained at one of these two points. Evaluating:
\begin{align*}
    g(s_1,\, k) 
    &\;=\; s_1^2 + s_2 k 
    \;=\; s_1^2 + s_1 s_2 + s_2^2, \\
    g(2s_1,\, s_2) 
    &\;=\; 2s_1^2 + s_2^2.
\end{align*}
The difference is 
$g(s_1, k) - g(2s_1, s_2) = s_1(s_2 - s_1) \ge 0$, so the minimum is $2s_1^2 + s_2^2 = 2s_{\min}^2 + s_{\max}^2$, 
attained at $(2s_1,s_2)$. Substituting into the Cauchy-Schwarz 
bound yields~\eqref{eq:cs_bound}.

For the symmetric case $s_1 = s_2 = k/2$ we have that 
$$2\left(\frac{k}{2}\right)^2 + \left(\frac{k}{2}\right)^2 = \frac{3k^2}{4},$$ giving the bound $4/3$. 

For the extreme asymmetric case $s_{\min}=1$, $s_{\max}=k-1$ we have that
$$2 + (k-1)^2 = k^2 - 2k + 3,$$ giving the bound 
$k^2/(k^2-2k+3)$.
\end{IEEEproof}

\begin{remark}
Corollary~\ref{cor:smax_bound} provides a simple and intuitive bound. That is, the trade-off between $E_1$ and $E_2$ is governed by the relative size of the larger file. However, Corollary~\ref{cor:cs_bound} is uniformly tighter, i.e., it dominates Corollary~\ref{cor:smax_bound} for all valid partitions $(s_1, s_2)$, but requires the Cauchy-Schwarz argument and is less immediately interpretable. 
For symmetric partitions ($s_1 = s_2$), the gap between the two bounds is moderate: $4/3$ versus $3/2$. For highly asymmetric partitions ($s_{\min} = 1$, $s_{\max} = k-1$), the gap is 
dramatic: Corollary~\ref{cor:cs_bound} gives $k^2/(k^2-2k+3) \to 1$ as $k \to \infty$, while Corollary~\ref{cor:smax_bound} gives only $2 - 1/k \to 2$. 
\end{remark}

While these bounds already go beyond the linear polytope description, they depend only on the aggregate constraints of Propositions~\ref{prop:basic_lb} and~\ref{prop:joint_lb} and do not exploit the internal column structure of the code. In the rest of this section we develop a geometric approach based on projecting the drawn columns onto each file subspace. The key structural insight is that a column $g_j$ can contribute to the recovery of $F_i$ only if its $F_i$-component is nonzero, and the number of such columns is directly controlled by the pure-file column counts $N(F_1)$ and $N(F_2)$. The bound simplifies substantially for codes whose columns all belong entirely to one of the two files, and the resulting expression motivates a universal conjecture for the general case, which we state and discuss at the end of the section.

\subsection{Projected retrieval times and the main theorem}

Since $F_1 \oplus F_2 = \F_q^k$, every column $g_j$ of $G$ decomposes uniquely as $g_j = \pi_1(g_j) + \pi_2(g_j)$, where $\pi_i \colon \F_q^k \to F_i$ is the projection onto $F_i$ along $F_{3-i}$. Concretely, $\pi_1$ retains only the first $s_1$ coordinates and $\pi_2$ retains only the last $s_2$ coordinates. Observe that $\pi_i(g_j) = 0$ if and only if $g_j \in F_{3-i}$, that is, $g_j$ is a pure-$F_{3-i}$ column. In particular, the number of columns with \emph{nonzero} $F_i$-projection is exactly $n - N(F_{3-i})$, where $N(\cdot)$ is the column counting function of Definition~\ref{def:column_count}.

Let $\xi_1, \xi_2, \ldots \in [n]$ denote the sequence of independently and uniformly drawn column indices.

\begin{definition}[Projected retrieval time]\label{def:proj_tau} For $i \in \{1,2\}$, the {projected retrieval time} $\tau_i^{\mathrm{proj}}$ is the first time $t$ at which the projected columns span $F_i$:
\[
    \tau_i^{\mathrm{proj}} 
    \;\coloneqq\; 
    \min\bigl\{t \ge 0 : 
        F_i \subseteq \langle \pi_i(g_{\xi_r}) \text{ such that } r \le t \rangle
    \bigr\}.
\]
\end{definition}

\begin{lemma}\label{lem:tau_proj_lb}
For any rank-$k$ generator matrix $G$ and ${i \in \{1,2\}}$, we have that $\taufile{F_i}(G) \;\ge\; \tau_i^{\mathrm{proj}}$ for every realization of the draw sequence, and hence $E_i(G) \;\ge\; \E[\tau_i^{\mathrm{proj}}]$.
\end{lemma}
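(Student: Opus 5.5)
The plan is to prove the pathwise inequality $\taufile{F_i}(G) \ge \tau_i^{\mathrm{proj}}$ for every realization of the draw sequence $\xi_1,\xi_2,\ldots$. The statement about expectations then follows immediately by monotonicity of expectation. The key tool is that the projection $\pi_i$ is a linear map $\F_q^k \to F_i$ that acts as the identity on $F_i$. It maps spans to spans, and it cannot destroy the containment of $F_i$.

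Fix a realization and set $t = \taufile{F_i}(G)$. If $t = \infty$, the inequality is trivial. Otherwise write $V_t = \langle g_{\xi_r} : r \le t\rangle$. By definition of $\taufile{F_i}$ we have $F_i \subseteq V_t$. Apply $\pi_i$ to this containment. Linearity gives
\[
\pi_i(V_t) = \langle \pi_i(g_{\xi_r}) : r \le t \rangle .
\]
Since $\pi_i$ restricts to the identity on $F_i$ (because $F_1 \oplus F_2 = \F_q^k$ and $\pi_i$ is the projection along $F_{3-i}$), we get
\[
F_i = \pi_i(F_i) \subseteq \pi_i(V_t).
\]
So $t$ belongs to the set whose minimum defines $\tau_i^{\mathrm{proj}}$ in Definition~\ref{def:proj_tau}, and therefore $\tau_i^{\mathrm{proj}} \le t$.

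For the expectation, both quantities are nonnegative (possibly extended-valued) random variables on the same probability space of draw sequences. The pointwise inequality therefore yields $E_i(G) = \E[\taufile{F_i}(G)] \ge \E[\tau_i^{\mathrm{proj}}]$. Finiteness of $E_i(G)$ comes from $G$ having rank $k$, since eventually all columns are drawn almost surely; finiteness is not needed for the inequality itself.

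There is no real obstacle here. The only point requiring care is the identity $\pi_i|_{F_i} = \mathrm{id}$ together with $\pi_i(\langle v_r\rangle) = \langle \pi_i(v_r)\rangle$. Both are immediate from the coordinate description of $\pi_i$, which keeps the $F_i$ coordinates and zeros out the rest.
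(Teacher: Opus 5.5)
Your proof is correct and matches the paper's argument: apply the linear projection $\pi_i$ to the containment $F_i \subseteq V_t$, use that $\pi_i$ maps spans to spans and fixes $F_i$ pointwise, conclude $\tau_i^{\mathrm{proj}} \le t$, and then take expectations. The only cosmetic difference is that you set $t = \taufile{F_i}(G)$ directly (and handle $t=\infty$), whereas the paper argues for an arbitrary $t \ge \taufile{F_i}(G)$.
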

\begin{IEEEproof}
We show that for every realization of the draw sequence, every time $t$ at which $F_i$ is recovered is also a time at which $\tau_i^{\mathrm{proj}}$ has already been reached.

Fix any realization of the draw sequence $\xi_1, \xi_2, \ldots$ and any time $t \ge 0$. Suppose that $t \ge \taufile{F_i}(G)$, meaning $F_i \subseteq V_t$ where $V_t \coloneqq \langle g_{\xi_r} : r \le t \rangle$.

Since $\pi_i \colon \F_q^k \to F_i$ is a linear map, it maps spans to spans, i.e., for any set of vectors $\{v_1,\ldots,v_m\}$,
\[
    \pi_i\bigl(\langle v_1,\ldots,v_m \rangle\bigr) 
    \;=\; \langle \pi_i(v_1),\ldots,\pi_i(v_m) \rangle.
\]
Applying this to $V_t$ gives $\pi_i(V_t) = \langle \pi_i(g_{\xi_r}) \text{ such that } r \le t \rangle$. Since $F_i \subseteq V_t$ and $\pi_i$ is linear and as such, preserves inclusion, applying $\pi_i$ to both sides of this inclusion gives $\pi_i(F_i) \subseteq \pi_i(V_t)$. Now, recall that $\pi_i$ is the projection onto $F_i$ along $F_{3-i}$, meaning every vector $v \in \F_q^k$ decomposes as $v = \pi_i(v) + \pi_{3-i}(v)$ with $\pi_i(v) \in F_i$. In particular, for any $v \in F_i$ the $F_{3-i}$-component is zero, so $\pi_i(v) = v$. Therefore, $\pi_i$ fixes $F_i$ pointwise, giving $\pi_i(F_i) = F_i$. Hence, we have that 
\[
    F_i 
    \;=\; \pi_i(F_i) 
    \;\subseteq\; \pi_i(V_t) 
    \;=\; \langle \pi_i(g_{\xi_r}) \text{ such that } r \le t \rangle,
\]
which is precisely the condition $\tau_i^{\mathrm{proj}} \le t$ by Definition~\ref{def:proj_tau}. Since the realization and $t$ were arbitrary, we have established that for every realization:
\[
    t \ge \taufile{F_i}(G) \;\implies\; t \ge \tau_i^{\mathrm{proj}},
\]
which is equivalent to $\taufile{F_i}(G) \ge \tau_i^{\mathrm{proj}}$ for every realization. Taking expectations preserves the inequality, giving $E_i(G) \ge \E[\tau_i^{\mathrm{proj}}]$.
\end{IEEEproof}

We analyze $\tau_i^{\mathrm{proj}}$ by decomposing it into $s_i$ sequential phases. For $0 \le \ell \le s_i - 1$, define phase $\ell$ as the period during which the projected span has dimension exactly $\ell$, and let $W_\ell$ be the number of draws in phase $\ell$. Since the projected columns must span $F_i$, exactly $s_i$ dimension increments must occur, and $\tau_i^{\mathrm{proj}} = \sum_{\ell=0}^{s_i-1} W_\ell$.

Within phase $\ell$, a draw of column $g_j$ is \emph{useful} --- meaning it increments the dimension --- if and only if $\pi_i(g_j)$ lies outside the current projected span. A necessary condition for $g_j$ to be useful in any phase is that $\pi_i(g_j) \ne 0$, since the current span always contains the zero vector. Now $\pi_i(g_j) = 0$ if and only if $g_j \in F_{3-i}$, and there are exactly $N(F_{3-i})$ such columns. Therefore, regardless of the current projected span, the number of potentially useful columns is at most $n - N(F_{3-i})$.

\begin{lemma}\label{lem:proj_lower}
For any rank-$k$ generator matrix $G$ and ${i \in \{1,2\}}$ we have that
\begin{equation}\label{eq:proj_lower}
    \E[\tau_{i}^{\mathrm{proj}}] \;\ge\; \frac{n\,s_i}{n - N(F_{3-i})}.
\end{equation}
\end{lemma}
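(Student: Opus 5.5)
The plan is to bound each phase length $W_\ell$ separately and then sum over the $s_i$ phases using linearity of expectation. Since $\tau_i^{\mathrm{proj}} = \sum_{\ell=0}^{s_i-1} W_\ell$, we have $\E[\tau_i^{\mathrm{proj}}] = \sum_{\ell=0}^{s_i-1}\E[W_\ell]$. It therefore suffices to show that $\E[W_\ell] \ge n/(n - N(F_{3-i}))$ for every phase $\ell$. Summing the $s_i$ identical bounds then gives~\eqref{eq:proj_lower}.

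For the per-phase bound, I condition on the history up to the start of phase $\ell$. This history determines the current projected span $U_\ell \subseteq F_i$ of dimension $\ell$. Write $u_\ell \coloneqq |\{j : \pi_i(g_j) \notin U_\ell\}|$ for the number of useful columns in this phase. Because draws are i.i.d.\ uniform and independent of the past, by the strong Markov property at the start of the phase, $W_\ell$ conditioned on the history is geometric with success probability $u_\ell/n$. Its conditional mean is therefore $n/u_\ell$.

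The discussion preceding the lemma shows that every useful column has $\pi_i(g_j)\neq 0$, so $u_\ell \le n - N(F_{3-i})$ deterministically. Hence $\E[W_\ell \mid \text{history}] \ge n/(n-N(F_{3-i}))$ pointwise. Taking expectations via the tower property yields the per-phase bound.

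Two points need care. The first is that $u_\ell \ge 1$, so that the phase terminates and the geometric mean is finite. This holds because $G$ has rank $k$: the projected columns $\pi_i(g_j)$ span $\pi_i(\F_q^k) = F_i$, so whenever $\ell < s_i$ some column escapes $U_\ell$. The second is making the conditioning rigorous, since phases begin at random times. The cleanest route is to note that phase $\ell$ starts at a stopping time with respect to the draw filtration, so the future draws are again i.i.d.\ uniform. I expect this formalization to be the only real obstacle. An alternative is to couple the process with one in which all $n - N(F_{3-i})$ nonzero-projection columns count as successes; that surrogate phase length is geometric with mean exactly $n/(n-N(F_{3-i}))$ and is stochastically dominated by $W_\ell$.
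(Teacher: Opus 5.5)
Your proposal is correct and follows essentially the same route as the paper. Both arguments split $\tau_i^{\mathrm{proj}}$ into $s_i$ phases and note that, conditionally on the state at the start of phase $\ell$, the phase length is geometric with success probability at most $(n-N(F_{3-i}))/n$, since every useful column has nonzero $F_i$-projection; both then apply the tower property and sum, with finiteness coming from the projected columns spanning $F_i$ because $G$ has rank $k$. Your conditioning on the full pre-phase history via the strong Markov property is, if anything, slightly more careful than the paper's conditioning on the projected span $P_\ell^{(i)}$ alone.
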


\begin{IEEEproof}
We first verify that $\tau_i^{\mathrm{proj}}$ is finite almost surely. Since $G$ has rank $k$ and $\pi_i \colon \F_q^k \to F_i$ is surjective, the projected columns $\{\pi_i(g_j)\}_{j=1}^n$ span $F_i$. Therefore, for any $\ell \in \{0,\ldots,s_i-1\}$, there exists at least one column $g_j$ such that $\pi_i(g_j)$ lies outside the current projected span, meaning the probability of making progress in each phase is bounded away from zero by $1/n$. Since each phase terminates in finite expected time, all $s_i$ phases terminate almost surely, and hence $\tau_i^{\mathrm{proj}} < \infty$ almost surely.

We now lower bound $\E[\tau_i^{\mathrm{proj}}]$. As discussed above, we have that
$\tau_i^{\mathrm{proj}} = \sum_{\ell=0}^{s_i-1} W_\ell$, where $W_\ell$ 
is the number of draws in phase $\ell$. Fix a phase $\ell$ and let $P_\ell^{(i)}$ denote the current projected span at the start of phase $\ell$, which has dimension $\ell$. A draw of column $g_j$ is useful in phase $\ell$ if and only if $\pi_i(g_j) \notin P_\ell^{(i)}$. As argued above, a necessary condition for usefulness is $\pi_i(g_j) \ne 0$, since $0 \in P_\ell^{(i)}$. Therefore the number of useful columns satisfies
\begin{align*}
    \hat{n}_i(\ell) 
    \;& \coloneqq\; \bigl|\bigl\{j \in [n] : \pi_i(g_j) \notin P_\ell^{(i)}\bigr\}\bigr|
    \; \le \;
    n - N(F_{3-i}).
\end{align*}

Since each draw is uniform over $[n]$ independently, conditional on $P_\ell^{(i)}$ the phase $\ell$ waiting time $W_\ell$ is a geometric random variable with success probability $\hat{n}_i(\ell)/n$, giving
\[
    \E\bigl[W_\ell \;\big|\; P_\ell^{(i)}\bigr] 
    \;=\; \frac{n}{\hat{n}_i(\ell)} 
    \;\ge\; \frac{n}{n - N(F_{3-i})}.
\] 
Recall that the tower property of conditional expectation states 
that $\E[\E[X \mid Y]] = \E[X]$ for any integrable random variable $X$ 
and any random variable $Y$. As  $W_\ell$ is geometric with success probability $\hat{n}_i(\ell)/n \ge 1/n > 0$ it is  integrable and hence, by taking expectations of both sides we have
\begin{align*}
    \E[W_\ell] \;&=\; \E\Bigl[\E\bigl[W_\ell \;\big|\; P_\ell^{(i)}\bigr]\Bigr] 
    \;\ge\; \E\left[\frac{n}{n - N(F_{3-i})}\right]
    \;\\ & =\; \frac{n}{n - N(F_{3-i})},
\end{align*}
where the last equality holds because 
$n/(n-N(F_{3-i}))$ is a constant.

Summing over all $s_i$ phases and using linearity of expectation:
\begin{align*}
    \E[\tau_i^{\mathrm{proj}}] 
    \;& =\; \sum_{\ell=0}^{s_i-1} \E[W_\ell] 
    \;\ge\; \sum_{\ell=0}^{s_i-1} \frac{n}{n-N(F_{3-i})} 
    \;\\ & =\; \frac{n\,s_i}{n - N(F_{3-i})},
\end{align*}
which concludes the proof. 
\end{IEEEproof}

\begin{theorem}\label{thm:nonlinear_bound}
For any rank-$k$ generator matrix $G \in \F_q^{k \times n}$ and any 
partition $s_1 + s_2 = k$ we have that 
\begin{equation}\label{eq:nonlinear_bound}
    \frac{s_1}{E_1(G)} + \frac{s_2}{E_2(G)} 
    \;\le\; 2 - \frac{N(F_1) + N(F_2)}{n}.
\end{equation}
\end{theorem}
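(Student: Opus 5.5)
The plan is to chain Lemma~\ref{lem:tau_proj_lb} with Lemma~\ref{lem:proj_lower}. This gives, for each $i\in\{1,2\}$,
\[
  E_i(G) \;\ge\; \E[\tau_i^{\mathrm{proj}}] \;\ge\; \frac{n\,s_i}{n-N(F_{3-i})}.
\]
These denominators are positive. Indeed, $n - N(F_{3-i})$ counts the columns with nonzero $F_i$-projection, and since $G$ has rank $k$ these columns span $F_i$, so there are at least $s_i \ge 1$ of them.

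Inverting each inequality, which is legitimate because all quantities are positive, gives
\[
  \frac{s_i}{E_i(G)} \;\le\; \frac{n - N(F_{3-i})}{n} \;=\; 1 - \frac{N(F_{3-i})}{n}.
\]
Summing over $i=1,2$ yields
\[
  \frac{s_1}{E_1}+\frac{s_2}{E_2} \;\le\; 2 - \frac{N(F_2)+N(F_1)}{n},
\]
which is exactly \eqref{eq:nonlinear_bound}.

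There is essentially no obstacle once the two lemmas are in hand. The only points to check are positivity, as argued above, and the observation that $N(F_1)$ and $N(F_2)$ count disjoint sets of columns: since $F_1\cap F_2=\{0\}$, a column lying in both would be the zero column, which contributes to neither file's recovery. This disjointness is not even needed for the inequality itself. It only matters for interpreting the right-hand side as $1 + (\text{number of mixed columns})/n$, which reduces to $1$ when there are no mixed columns; that special case is the one that will give Corollary~\ref{cor:pure_file_hyperbola}.
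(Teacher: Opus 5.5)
Your proposal is correct and is essentially the paper's own proof: chain Lemma~\ref{lem:tau_proj_lb} with Lemma~\ref{lem:proj_lower} to get $E_i(G) \ge n s_i/(n - N(F_{3-i}))$, invert, and sum over $i$. Your explicit check that $n - N(F_{3-i}) \ge s_i \ge 1$, which makes the inversion legitimate, is a detail the paper leaves implicit. So is your remark that disjointness of the two counts is not needed for the inequality itself.
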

\begin{IEEEproof}
By Lemmas~\ref{lem:tau_proj_lb} and~\ref{lem:proj_lower}, for each 
$i \in \{1,2\}$ we have 
\[
E_i(G) \;\ge\; \frac{n\,s_i}{n - N(F_{3-i})},\]
and hence
\[
    \frac{s_i}{E_i(G)} \;\le\; \frac{n - N(F_{3-i})}{n}.
\]
Summing over $i \in \{1,2\}$ gives
\begin{align*}
    \frac{s_1}{E_1(G)} + \frac{s_2}{E_2(G)} 
    \;& \le\; \frac{(n - N(F_2)) + (n - N(F_1))}{n}
    \;\\ & =\; 2 - \frac{N(F_1) + N(F_2)}{n}. \qedhere
\end{align*}
\end{IEEEproof}

\begin{corollary}\label{cor:pure_file_hyperbola}
If every column of $G$ belongs entirely to $F_1$ or to $F_2$, then $N(F_1) + N(F_2) = n$, and consequently
\begin{equation}\label{eq:pure_hyperbola}
    \frac{s_1}{E_1(G)} + \frac{s_2}{E_2(G)} \;\le\; 1.
\end{equation}
In particular, the latter holds for any local MDS code.
\end{corollary}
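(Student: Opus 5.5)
The corollary follows from Theorem~\ref{thm:nonlinear_bound} once we show that $N(F_1)+N(F_2)=n$ under the stated hypothesis. The plan is to count columns carefully, substitute into~\eqref{eq:nonlinear_bound}, and then check that local MDS codes meet the hypothesis.

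\textbf{Step 1: the column count.} Every column $g_j$ lies in $F_1$ or in $F_2$ by assumption. The two subspaces satisfy $F_1\cap F_2=\{0\}$, because $F_1\oplus F_2=\F_q^k$.

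We need to rule out a zero column, which would lie in both subspaces and be counted twice. This cannot be dismissed by the rank of $G$ alone: a rank-$k$ matrix may have a zero column when $n>k$. So we read the hypothesis as ``every column is a nonzero vector of exactly one $F_i$,'' which is the intended meaning of a pure-file code. Under that reading, the sets $\{j: g_j\in F_1\}$ and $\{j: g_j\in F_2\}$ partition $[n]$.

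As a cross-check, Lemma~\ref{lem:column_inc_exc} with $A=F_1$, $B=F_2$ gives
\[
N(F_1)+N(F_2)=N(F_1\cup F_2)+N(F_1\cap F_2)=n+N(\{0\})=n .
\]

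If zero columns were allowed, we would argue separately that deleting them can only help: a zero column never contributes to any span, so keeping it just wastes draws. Concretely, $\tau^{\mathrm{proj}}$ only counts columns with nonzero projection. Repeating the phase argument of Lemma~\ref{lem:proj_lower} with the set of useful columns gives
\[
E_i\ge \frac{n\,s_i}{\#\{j:\pi_i(g_j)\neq 0\}},
\]
and summing the two bounds still yields at most $1$. I expect this zero-column subtlety to be the only real obstacle.

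\textbf{Step 2: the hyperbolic bound.} Substituting $N(F_1)+N(F_2)=n$ into~\eqref{eq:nonlinear_bound} gives a right-hand side of $2-1=1$, which is~\eqref{eq:pure_hyperbola}.

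\textbf{Step 3: local MDS codes.} The local MDS code $G_{\mathrm{Local}}^{(n_1,n_2)}$ is block diagonal. Its first $n_1$ columns are columns of an $[n_1,s_1]$ MDS generator padded with zeros in the last $s_2$ coordinates, so they lie in $F_1$. The remaining $n_2$ columns lie in $F_2$ in the same way.

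The columns of an MDS generator are nonzero, since every $s_i$ of them are linearly independent. Hence the hypothesis holds, and the corollary applies to every local MDS code.
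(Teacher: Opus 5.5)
Your proof is correct and follows the paper's route: you use $F_1\cap F_2=\{0\}$ to get $N(F_1)+N(F_2)=n$ and substitute into Theorem~\ref{thm:nonlinear_bound}. Your zero-column caveat is a genuine refinement, since the paper's one-line proof simply asserts each $j$ is counted exactly once; you can also settle that case without a separate argument, because with $z$ zero columns $N(F_1)+N(F_2)=n+z$, so the theorem itself gives the bound $1-z/n\le 1$.
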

\begin{IEEEproof}
Since $F_1 \cap F_2 = \{0\}$, no nonzero column can lie in both $F_1$ 
and $F_2$, so $N(F_1)$ and $N(F_2)$ count disjoint sets of columns. If 
every column is pure-file, each $j \in [n]$ is counted by exactly one 
of $N(F_1)$, $N(F_2)$, giving $N(F_1) + N(F_2) = n$. 
Substituting into \eqref{eq:nonlinear_bound} yields \eqref{eq:pure_hyperbola}.
\end{IEEEproof}

\begin{remark}
Let $m_{\mathrm{mix}} \coloneqq n - N(F_1) - N(F_2) \ge 0$ denote the 
number of \emph{mixed} columns, i.e., those lying outside $F_1 \cup F_2$. 
Then \eqref{eq:nonlinear_bound} reads 
$s_1/E_1 + s_2/E_2 \le 1 + m_{\mathrm{mix}}/n$: the bound degrades 
proportionally to the fraction of mixed columns, and recovers $\le 1$ 
precisely when $m_{\mathrm{mix}} = 0$.
\end{remark}

\subsection{The universal hyperbolic conjecture}

Theorem~\ref{thm:nonlinear_bound} and Corollary~\ref{cor:pure_file_hyperbola} together suggest that $s_1/E_1 + s_2/E_2 \le 1$ is a universal constraint, not restricted to the pure-file subfamily. We state this as the central conjecture of the paper.

\begin{conjecture}[Universal hyperbolic bound]\label{conj:hyperbola}
For any rank-$k$ generator matrix $G \in \F_q^{k \times n}$ and any partition $s_1 + s_2 = k$ with $\max\{s_1,s_2\} \ge 2$,
\begin{equation}\label{eq:hyperbola_conj}
    \frac{s_1}{E_1(G)} + \frac{s_2}{E_2(G)} \;\le\; 1.
\end{equation}
\end{conjecture}

We provide three pieces of supporting evidence.

\medskip
\noindent\textbf{(i) Verification on explicit constructions and numerical experiments.}
File-dedicated codes are covered analytically by Corollary~\ref{cor:pure_file_hyperbola}. 
For the hybrid code $\cC^*$ of Example~\ref{ex:hybrid_n8}, $E_1 = 403/105$ and $E_2 = 584/105$ with $s_1 = 1$, $s_2 = 3$, giving $s_1/E_1 + s_2/E_2 = 105/403 + 315/584 \approx 0.80 \le 1$. The global systematic MDS code of Example~\ref{ex:asymmetric_n8_local} satisfies the bound with $1/4 + 3/(106/21) \approx 0.84 \le 1$. The general code 
constructions of Section~\ref{sec:constructions} likewise all satisfy Conjecture~\ref{conj:hyperbola}, as we verify explicitly for each family in Corollaries~\ref{cor:identity_conjecture}, \ref{cor:dedicated_conjecture}, and \ref{cor:global_MDS_conjecture}. Beyond these closed-form examples, we have verified Conjecture~\ref{conj:hyperbola} numerically for a large random sample of generator matrices across multiple parameter settings $(n, k, s_1, s_2)$, with no violation observed in any case.

\medskip
\noindent\textbf{(ii) Asymptotic tightness.}
As $n \to \infty$ with fixed allocation ratios $\rho_i = n_i/n$, the local MDS family achieves $E_i \to s_i/\rho_i$, so that $s_1/E_1 + s_2/E_2 \to \rho_1 + \rho_2 = 1$. We establish this formally in Section~\ref{sec:asymptotics}, proving that the local MDS family traces the full boundary curve $\{s_1/E_1 + s_2/E_2 = 1,\, E_i \ge s_i\}$ in the limit. This shows that Conjecture~\ref{conj:hyperbola} is asymptotically tight, in the sense that the bound $\le 1$ cannot be replaced by any strictly smaller constant. Whether the hyperbolic curve is also a lower bound on the asymptotic achievability region, i.e., whether no sequence of codes can approach a point strictly below it, is equivalent to Conjecture~\ref{conj:hyperbola} itself and remains open.

\medskip
\noindent\textbf{(iii) Identification of the missing proof step.}
Define the number of \emph{mixed columns} of $G$ as
\[
    m_{\mathrm{mix}} \;\coloneqq\; n - N(F_1) - N(F_2) 
    \;\ge\; 0,
\]
where the inequality holds because $F_1 \cap F_2 = \{0\}$ implies that no nonzero column can lie in both $F_1$ and $F_2$, so $N(F_1)$ and $N(F_2)$ count disjoint sets of columns. By Theorem~\ref{thm:nonlinear_bound}, the conjecture would follow immediately if $m_{\mathrm{mix}} = 0$, i.e., if every column were pure-file, which is exactly Corollary~\ref{cor:pure_file_hyperbola}. The challenge for general codes is therefore precisely the mixed columns.

A mixed column $g_j \notin F_1 \cup F_2$ has nonzero projections onto both $F_1$ and $F_2$, meaning it is potentially useful to both retrieval processes simultaneously. In the proof of Lemma~\ref{lem:proj_lower}, this simultaneous usefulness causes the bound $\hat{n}_i(\ell) \le n - N(F_{3-i})$ to be loose for both files at the same time: the true number of useful columns in each phase is strictly larger than what the pure-file count accounts for. Mixed columns therefore allow both $E_1$ and $E_2$ to be reduced below the levels predicted by the per-file sequential bound, and the question is whether this joint reduction can be large enough to violate \eqref{eq:hyperbola_conj}. Proving the conjecture requires showing that the gains to $E_1$ and $E_2$ from mixed columns are not independent: any column that is highly useful to $F_1$ recovery must have a large $F_1$-component, which necessarily constrains its $F_2$-component, and vice versa. Making this geometric competition precise in terms of the expected retrieval times is the key open step.

\Cref{fig:nonlinear_n20,fig:nonlinear_n50} illustrate the nonlinear bounds for $k=8$ and $n=20,50$ respectively, across the four partition configurations. The two proved upper bounds (Corollaries~\ref{cor:smax_bound} and~\ref{cor:cs_bound}, solid lines) 
bound the achievability region from below, with Corollary~\ref{cor:cs_bound} 
uniformly tighter. The conjectured hyperbolic boundary $s_1/E_1 + s_2/E_2 = 1$ (Conjecture~\ref{conj:hyperbola}, dashed) lies strictly above both proved bounds for all finite $n$, and the local MDS operating points approach it as $n$ grows, consistent with the asymptotic analysis of Section~\ref{sec:asymptotics}. 

\begin{figure*}[h]
\centering
\includegraphics[width=\linewidth]{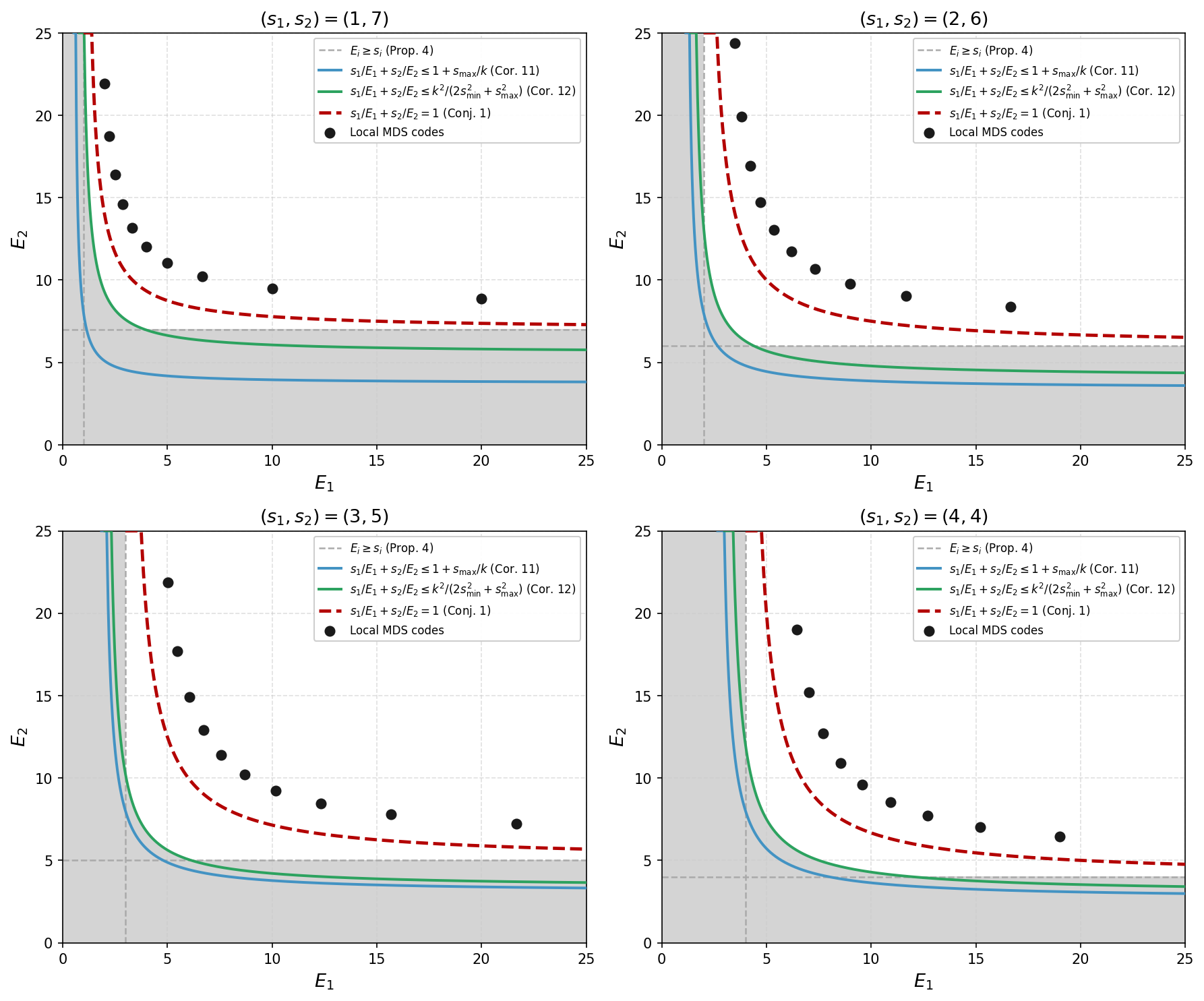}
\caption{Nonlinear bounds for $k=8$, $n=20$, across four partition 
configurations $(s_1,s_2)$. Solid lines are proved bounds 
(Corollaries~\ref{cor:smax_bound} and~\ref{cor:cs_bound}); the dashed 
line is the conjectured hyperbolic boundary 
(Conjecture~\ref{conj:hyperbola}). Black dots are local MDS operating 
points. The gray region is provably non-achievable.}
\label{fig:nonlinear_n20}
\end{figure*}

\begin{figure*}[h]
\centering
\includegraphics[width=\linewidth]{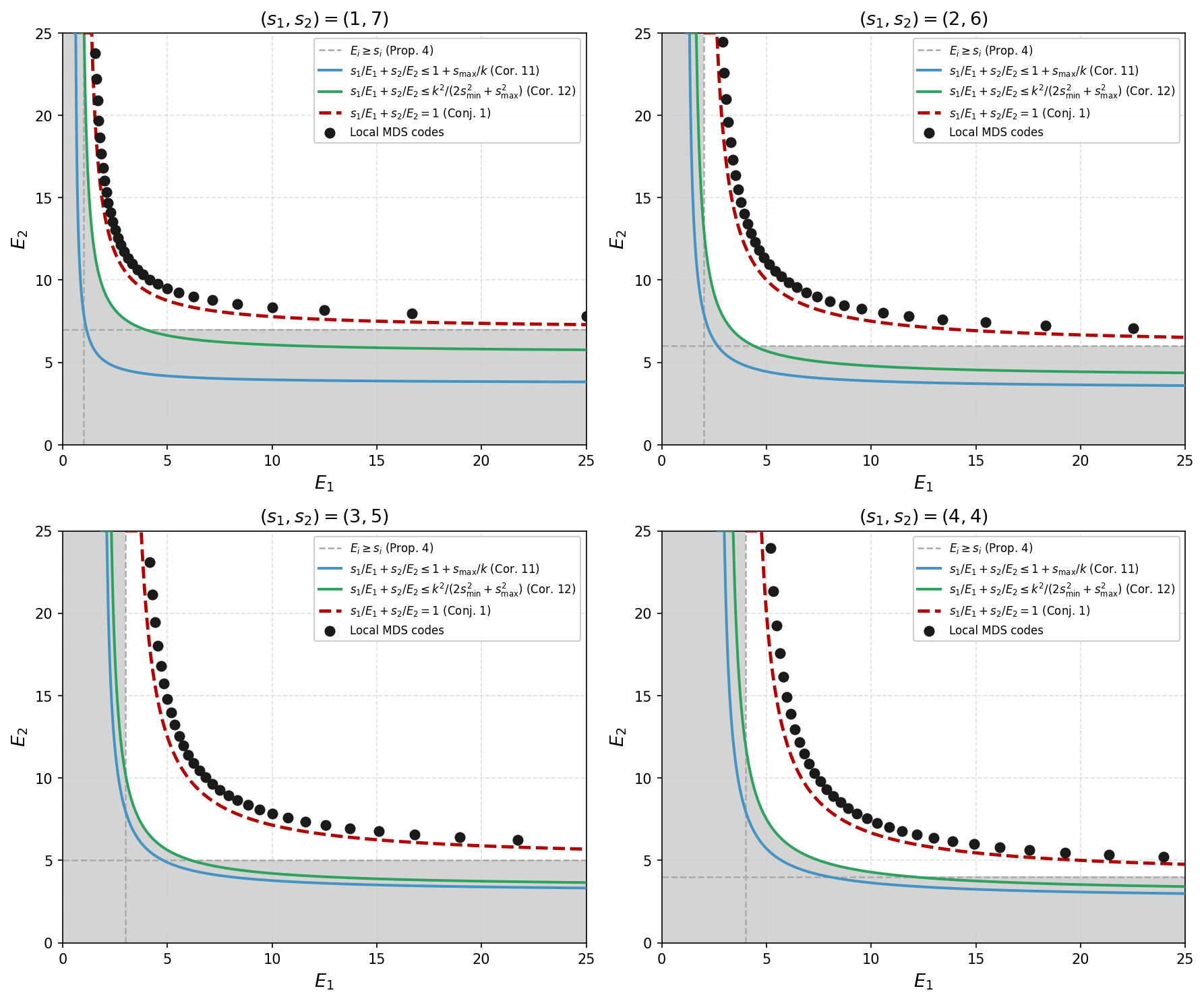}
\caption{Same as \cref{fig:nonlinear_n20} but for $n=50$. The local 
MDS points approach the conjectured hyperbolic boundary as $n$ grows, 
consistent with Theorem~\ref{thm:asymptotic_ub}.}
\label{fig:nonlinear_n50}
\end{figure*}

\begin{remark}\label{rem:s1s2_condition}
The condition $\max\{s_1,s_2\} \ge 2$ in 
Conjecture~\ref{conj:hyperbola} is essential and cannot be removed. 
When $s_1 = s_2 = 1$, equivalently $k = 2$, the bound 
\eqref{eq:hyperbola_conj} can be violated. For example, take $k=2$, 
$n=5$, $s_1=s_2=1$, and the code with generator matrix
\[
    G \;=\; \begin{pmatrix} 1 & 0 & 1 & 0 & 1 \\ 0 & 1 & 0 & 1 & 1 \end{pmatrix},
\]
whose columns are $e_1, e_1, e_2, e_2$, and $e_1+e_2$. As shown in~\cite{gruica2024combinatorial} (Example 2), the expected time to recover 
$F_i = \langle e_i \rangle$, is ${E_i=\frac{23}{12}}$.
Therefore:
\[
    \frac{s_1}{E_1} + \frac{s_2}{E_2} 
    \;=\; \frac{12}{23} + \frac{12}{23} \;=\; \frac{24}{23} \;>\; 1.
\]
The violation occurs because when both files are one-dimensional, a mixed column such as $e_1+e_2$ can contribute to the recovery of $F_1$ indirectly via its linear combination with a subsequently drawn $F_2$-column. This indirect coupling has no analogue when $\max\{s_1,s_2\} \ge 2$, where recovering the larger file requires spanning a subspace of dimension at least $2$ and no single column suffices. Why this distinction is sufficient to restore the hyperbolic constraint in the higher-dimensional case is precisely the content of Conjecture~\ref{conj:hyperbola}.
\end{remark}


\section{Constructions and Finite-$n$ Achievability}
\label{sec:constructions}

We now present explicit code constructions. For each family, we compute the exact expected retrieval times $(E_1, E_2)$ and establish which families dominate others in the sense of~\eqref{eq:pareto}. In each case we also check that the achievable pairs do not violate Conjecture~\ref{conj:hyperbola}, providing further supporting evidence for it.

\subsection{The identity code}\label{ssec:identity}

The identity code $\cC_I$ uses $G_I = I_k$, so that $n = k$ and the columns are exactly the standard basis vectors $e_1, \ldots, e_k$. 
Since file $F_i$ consists of $s_i$ specific basis vectors, recovery of $F_i$ requires drawing all $s_i$ specific columns from the pool $[k]$.

\begin{proposition}\label{prop:identity_Ei}
For the identity code $\cC_I$ with $G_I = I_k$ and any partition $s_1 + s_2 = k$,
\begin{equation}\label{eq:identity_Ei}
    E_i(G_I) \;=\; k\,\harm{s_i}, 
    \qquad i \in \{1,2\}.
\end{equation}
\end{proposition}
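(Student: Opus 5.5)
The plan is to reduce recovery of $F_i$ under $G_I = I_k$ to a partial coupon-collector problem and then sum geometric waiting times. Because the columns of $I_k$ are linearly independent standard basis vectors, the span of a set of drawn columns $\{e_j : j \in S\}$ contains $e_\ell$ if and only if $\ell \in S$. Hence $F_i \subseteq \langle e_j : j \in S\rangle$ holds exactly when $S$ contains every index of file $F_i$. So $\taufile{F_i}(G_I)$ is the first time all $s_i$ specific coupons, out of a pool of $n = k$ equally likely coupons, have each been drawn at least once.

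Next, follow the phase decomposition already used in Lemma~\ref{lem:proj_lower}. For $0 \le \ell \le s_i - 1$, let phase $\ell$ be the period during which exactly $\ell$ of the $s_i$ target columns have been seen, and let $W_\ell$ be its length. In phase $\ell$, a draw is useful if and only if it hits one of the $s_i - \ell$ target columns not yet seen. This has probability $(s_i-\ell)/k$ regardless of the past, so $W_\ell$ is geometric with mean $k/(s_i-\ell)$.

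Summing by linearity of expectation gives
\[
E_i(G_I)=\sum_{\ell=0}^{s_i-1}\frac{k}{s_i-\ell}=k\sum_{r=1}^{s_i}\frac1r = k\,\harm{s_i}.
\]

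As a cross-check, one could also apply \eqref{eq:file_exp} with $\alpha_{F_i}(s)=\binom{k-s_i}{s-s_i}$, but the direct argument avoids a binomial identity. There is essentially no obstacle here. The only point needing care is the first step: the equivalence between recovery and containment of the target indices relies on the linear independence of the columns of $I_k$, so that non-target columns contribute nothing to recovering $F_i$.
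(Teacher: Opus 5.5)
Your proposal is correct and matches the paper's proof: both reduce recovery of $F_i$ to collecting $s_i$ specific columns out of $k$ equally likely ones, then sum the geometric phase waiting times $k/(s_i-\ell)$ to obtain $k\,\harm{s_i}$. Your explicit appeal to the linear independence of the columns of $I_k$, which shows that non-target columns cannot help, spells out a step the paper leaves implicit.
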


\begin{IEEEproof}
Recovery of $F_i$ requires drawing all $s_i$ specific columns from the pool of $k$. Since draws are uniform with replacement, only draws landing on the $s_i$ target columns are useful. When $\ell$ of the $s_i$ target columns have already been collected, the probability of drawing a new one is $(s_i - \ell)/k$, so the expected waiting time in this phase is $k/(s_i - \ell)$. Summing over all $s_i$ phases:
\[
    E_i(G_I) \;=\; \sum_{\ell=0}^{s_i-1} \frac{k}{s_i - \ell}
    \;=\; k\sum_{j=1}^{s_i}\frac{1}{j} \;=\; k\,\harm{s_i}.
    \qedhere
\]
\end{IEEEproof}

\begin{remark}
The recovery of $F_i$ under the identity code is not a standard 
coupon collector problem. In the classical CCP one seeks any $s_i$ out of $k$ coupons; here one must collect a \emph{specific} set of $s_i$ symbols. 
\end{remark}

\begin{corollary}\label{cor:identity_conjecture}
The identity code satisfies Conjecture~\ref{conj:hyperbola} 
whenever $\max\{s_1, s_2\} \ge 2$. 
\end{corollary}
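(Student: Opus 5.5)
Two routes are available, and I would use the direct computation as the main proof. The identity code has no mixed columns, since every column $e_j$ lies in $F_1$ or $F_2$. So Corollary~\ref{cor:pure_file_hyperbola} already gives $s_1/E_1+s_2/E_2\le 1$ for every partition, including $s_1=s_2=1$. I would still give the direct argument, because it shows the strict inequality and why the hypothesis $\max\{s_1,s_2\}\ge 2$ is where it is.

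By Proposition~\ref{prop:identity_Ei}, with $k=s_1+s_2$, the quantity to bound is
\[
\frac{s_1}{k\harm{s_1}}+\frac{s_2}{k\harm{s_2}} .
\]
Since $\harm{s}\ge 1$ for all $s\ge1$, each term is at most $s_i/k$. Summing gives
\[
\frac{s_1}{E_1}+\frac{s_2}{E_2}\le \frac{s_1+s_2}{k}=1 .
\]

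For strictness, note that $\harm{s}>1$ exactly when $s\ge 2$. Under the hypothesis $\max\{s_1,s_2\}\ge 2$, the term with $s_i\ge 2$ is strictly below $s_i/k$, so the inequality is strict. In the excluded case $s_1=s_2=1$ we get $E_1=E_2=2$ and the sum equals exactly $1$. The conjecture's inequality therefore still holds there, with equality rather than a violation.

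There is no real obstacle here. The only care needed is to invoke the formula $E_i=k\harm{s_i}$ correctly, with $n=k$, and the elementary fact $\harm{s}\ge1$, with equality iff $s=1$.
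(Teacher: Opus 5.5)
Your proof is correct and is the paper's argument: substitute $E_i = k\harm{s_i}$ from Proposition~\ref{prop:identity_Ei}, use $\harm{s}\ge 1$ to bound each term by $s_i/k$, and observe that equality forces $s_1=s_2=1$, which the hypothesis excludes. Your side remark is also valid: the identity code has no mixed columns, so Corollary~\ref{cor:pure_file_hyperbola} already yields the non-strict bound directly.
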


\begin{IEEEproof}
Substituting~\eqref{eq:identity_Ei}:
\[
    \frac{s_1}{E_1(G_I)} + \frac{s_2}{E_2(G_I)} 
    \;=\; \frac{1}{k}\!\left(\frac{s_1}{\harm{s_1}} 
    + \frac{s_2}{\harm{s_2}}\right).
\]
Since $\harm{s} \ge 1$ for all $s \ge 1$, we have $s/\harm{s} \le s$, 
and therefore:
\[
    \frac{s_1}{\harm{s_1}} + \frac{s_2}{\harm{s_2}} 
    \;\le\; s_1 + s_2 \;=\; k,
\]
giving $s_1/E_1 + s_2/E_2 \le 1$. Equality holds if and only 
if $\harm{s_1} = \harm{s_2} = 1$, i.e., $s_1 = s_2 = 1$, 
which requires $k = 2$ and is excluded by the condition 
$\max\{s_1,s_2\} \ge 2$.
\end{IEEEproof}
\subsection{File-dedicated codes}\label{ssec:dedicated}

A file-dedicated code allocates $n_i$ codeword positions exclusively to $F_i$ for $i \in \{1,2\}$, with $n_1 + n_2 = n$ and $n_i \ge s_i$, each protected by an independent linear code. The generator matrix has the block structure
\[
    G(n_1, n_2) \;=\; \begin{pmatrix} G_1 & 0 \\ 0 & G_2 \end{pmatrix},
\]
where $G_1 \in \F_q^{s_1 \times n_1}$ and $G_2 \in \F_q^{s_2 \times n_2}$ are the per-file generator matrices. Since the two blocks are independent, the retrieval of $F_i$ depends only on the column structure of $G_i$. We first show that within this family, MDS codes are optimal for each block.

\begin{lemma}\label{lem:MDS_optimal_block}
For any $[n_i, s_i]$ linear code $G_i$ over $\F_q$,
\[
    E_i(G_i) \;\ge\; E_i(G_i^{\mathrm{MDS}}),
\]
where $G_i^{\mathrm{MDS}}$ is a generator matrix of any $[n_i, s_i]$ MDS code. Equivalently, among all linear codes with $n_i$ columns encoding $s_i$ information symbols, MDS codes minimize the expected retrieval time.
\end{lemma}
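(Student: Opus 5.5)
We read the statement in its natural setting: $G_i$ alone, with $n_i$ columns drawn uniformly, and the target being all of $\F_q^{s_i}$. This is exactly the full-recovery problem for an $[n_i,s_i]$ code. Inside $G(n_1,n_2)$ the expected time is $n/n_i$ times this quantity, since draws landing in the other block are wasted. That factor is the same for every choice of $G_i$, so optimality within the block transfers directly.

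The plan is to reuse the escape-time argument from the proof of Theorem~\ref{thm:mds_rank_sum}, now applied to the block alone. Decompose the retrieval time into $s_i$ phases $j=0,\dots,s_i-1$. Let $V_{(j)}$ be the span of the drawn columns of $G_i$ at the moment its dimension first reaches $j$. Conditional on $V_{(j)}$, the phase length is geometric with success probability $(n_i-N(V_{(j)}))/n_i$, so its conditional mean is $n_i/(n_i-N(V_{(j)}))$.

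Next we lower bound this mean. The subspace $V_{(j)}$ is spanned by $j$ drawn columns, so it contains at least $j$ columns of $G_i$, counted with multiplicity. Hence $N(V_{(j)})\ge j$ and the conditional mean is at least $n_i/(n_i-j)$. Taking expectations with the tower property and summing over $j$ gives
\[
E_i(G_i)\ \ge\ \sum_{j=0}^{s_i-1}\frac{n_i}{n_i-j}\ =\ n_i\bigl(\harm{n_i}-\harm{n_i-s_i}\bigr).
\]

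It remains to show the MDS code attains this bound. In an $[n_i,s_i]$ MDS code every $s_i$ columns are linearly independent. So for $j<s_i$, any $j$-dimensional subspace spanned by columns contains exactly $j$ columns: if it contained $j+1\le s_i$ of them, those would be linearly dependent. Thus $N(V_{(j)})=j$ deterministically, and every inequality above holds with equality. Alternatively, we can simply cite the MDS optimality for full recovery from~\cite{bar-lev2025cover}.

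We expect the main obstacle to be the multiplicity subtlety. The count $N(V_{(j)})\ge j$ must treat repeated columns as distinct draws; this is automatic, since $N$ counts column indices. The other point requiring care is justifying that, conditional on $V_{(j)}$, the phase length is geometric and independent of the past. This follows from the draws being i.i.d. together with the strong Markov property at the stopping time where the dimension first reaches $j$.
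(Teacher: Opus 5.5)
Your proof is correct, but it takes a different route from the paper. The paper argues combinatorially through the subset-count formula of Lemma~\ref{lem:exp_formula_subset}. Since $E_i$ is decreasing in each $\alpha_{F_i}(s)$, and every code has $\alpha_{F_i}(s)=0$ for $s<s_i$, it suffices to observe that an MDS block attains the trivial ceiling $\alpha_{F_i}(s)=\binom{n_i}{s}$ for every $s\ge s_i$ at once. Optimality then follows by pointwise domination. A non-MDS block has $\alpha_{F_i}(s_i)<\binom{n_i}{s_i}$, which gives strict inequality at no extra cost.

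You instead rerun the escape-time argument from the proof of Theorem~\ref{thm:mds_rank_sum} on the block. You use $N(V_{(j)})\ge j$ to get the universal lower bound $n_i(\harm{n_i}-\harm{n_i-s_i})$, and note that the MDS property forces $N(V_{(j)})=j$, so this bound is attained. Your version does not rely on the subset-count formula and produces the optimal value explicitly, which the paper only computes afterwards in Proposition~\ref{prop:dedicated_MDS}. Your $n/n_i$ rescaling for wasted draws in the other block is consistent with that formula. The paper's version is shorter once the formula is available, and it characterizes equality.

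If you wanted strictness for non-MDS blocks, you would need one more step. Take a minimal linearly dependent set of $m+1\le s_i$ columns. With positive probability the first useful draws are $m$ independent members of that set, and on this event $N(V_{(m)})\ge m+1$, so phase $m$ is strictly longer in expectation. The lemma as stated does not require this.
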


\begin{IEEEproof}
By Lemma~\ref{lem:exp_formula_subset}, $E_i$ is a strictly decreasing function of the subset counts $\alpha_{F_i}(s)$. It therefore suffices to show that for any $[n_i, s_i]$ code $G_i$:
\[
    \alpha_{F_i}(s) \;\le\; \binom{n_i}{s} 
    \quad \text{for all } s \ge s_i,
\]
with equality for all $s \ge s_i$ if and only if $G_i$ is MDS.

The upper bound $\alpha_{F_i}(s) \le \binom{n_i}{s}$ is trivial since there are only $\binom{n_i}{s}$ subsets of size $s$ in total. 
For an MDS code, any $s_i$ columns of $G_i$ are linearly independent and span $F_i$, so every $s$-subset with $s \ge s_i$ recovers $F_i$, giving $\alpha_{F_i}^{\mathrm{MDS}}(s) = \binom{n_i}{s}$, which is the maximum possible value. Conversely, if $G_i$ is not MDS, then some set of $s_i$ columns is linearly dependent and fails to span $F_i$, so $\alpha_{F_i}(s_i) < \binom{n_i}{s_i}$, and $E_i(G_i) > E_i(G_i^{\mathrm{MDS}})$.
\end{IEEEproof}

We therefore restrict our attention to file-dedicated codes where both $G_1$ and $G_2$ are MDS codes, and denote by $\cC(n_1,n_2)$ the corresponding code and by $G(n_1,n_2)$ its generator matrix. The valid range of allocations is ${n_1 \in \{s_1, \ldots, n-s_2\}}$, giving $n-k+1$ discrete 
operating points.

\begin{proposition}\label{prop:dedicated_MDS}
Let $i\in\{1,2\}$. For the file-dedicated MDS code $\cC(n_1, n_2)$ with 
$n_1 + n_2 = n$ and $n_i \ge s_i$,
\begin{equation}\label{eq:ded_Ei}
    E_i(G(n_1,n_2)) 
    \;=\; n\sum_{j=1}^{s_i}\frac{1}{n_i - j + 1}
    \;=\; n\bigl(\harm{n_i} - \harm{n_i - s_i}\bigr).
\end{equation}
\end{proposition}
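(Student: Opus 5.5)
The plan is a phase decomposition of the draw process for file $F_i$, in the spirit of the proof of Proposition~\ref{prop:identity_Ei} and the escape-time argument of Theorem~\ref{thm:mds_rank_sum}. The starting point is the block structure of $G(n_1,n_2)$. The $n_{3-i}$ columns belonging to the other block lie in $F_{3-i}$, so their $F_i$-projection $\pi_i$ vanishes. Thus $\langle g_j : j\in S\rangle \supseteq F_i$ holds if and only if the columns of $S$ lying in block $i$ span $F_i$, i.e., the corresponding columns of $G_i$ span $\F_q^{s_i}$. Consequently only draws landing in the $n_i$ positions of block $i$ can ever change whether $F_i$ is recovered. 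Each such draw occurs with probability $n_i/n$, and conditioned on landing in block $i$ the column is uniform over its $n_i$ columns.

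Next I track the dimension $\ell$ of the span of the distinct block-$i$ columns drawn so far, and split the process into phases $\ell = 0,\dots,s_i-1$. The MDS property of $G_i$ says that any $s_i$ columns are independent. So as long as $\ell < s_i$, the current span is generated by exactly $\ell$ distinct drawn columns and contains no other column of $G_i$; otherwise some $\ell+1$ columns would lie in an $\ell$-dimensional space, contradicting independence of every set of at most $s_i$ columns. In the language of Definition~\ref{def:column_count}, the current span contains exactly $\ell$ of the $n$ columns of $G$ restricted to block $i$.

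In phase $\ell$ there are therefore exactly $n_i-\ell$ useful columns out of $n$, and the waiting time $W_\ell$ is geometric with success probability $(n_i-\ell)/n$, giving $\E[W_\ell]=n/(n_i-\ell)$. This probability is deterministic rather than random, so unlike Lemma~\ref{lem:proj_lower} the computation gives equality, not just a lower bound. Summing over $\ell$ by linearity of expectation and reindexing $j=\ell+1$ gives
\[
E_i = n\sum_{j=1}^{s_i}\frac{1}{n_i-j+1} = n\bigl(H_{n_i}-H_{n_i-s_i}\bigr),
\]
which also matches \eqref{eq:local_mds}.

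The only delicate step is justifying that the set of useful columns in each phase has size exactly $n_i-\ell$, i.e., that the drawn span never absorbs an undrawn column before dimension $s_i$ is reached; this is precisely where the MDS hypothesis enters. Everything else is bookkeeping.

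As an alternative check, one could use \eqref{eq:file_exp} directly. For an MDS block, $\alpha_{F_i}(s)$ counts $s$-subsets of $[n]$ containing at least $s_i$ indices from block $i$, which is a hypergeometric tail. Evaluating the resulting sum is messier, so I would use the phase argument.
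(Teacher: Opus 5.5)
Your proposal is correct and follows the same phase decomposition as the paper: only block-$i$ draws matter, the MDS property leaves exactly $n_i-\ell$ useful columns in phase $\ell$, and summing the geometric waiting times $n/(n_i-\ell)$ gives the formula. You also justify explicitly that the span of $\ell<s_i$ drawn columns contains no undrawn column of $G_i$, which is a welcome tightening of the paper's terser claim that ``any useful draw that introduces a new column makes progress.''
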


\begin{IEEEproof}
Since all columns with a nonzero $F_i$-component belong to the $G_i$ block, a draw hits a useful column for $F_i$ with probability $n_i/n$ at each step. Because $G_i$ is MDS, any $s_i$ of its $n_i$ columns span $F_i$, so any useful draw that introduces a new column makes progress. When $\ell$ distinct useful columns have been drawn, the probability of drawing a new one is $(n_i - \ell)/n$, giving an expected waiting time of $n/(n_i-\ell)$. 
Summing over all $s_i$ phases:
\begin{align*}
    E_i(G(n_1,n_2)) 
    \;&=\; \sum_{\ell=0}^{s_i-1} \frac{n}{n_i - \ell} 
    \;=\; n\sum_{j=1}^{s_i}\frac{1}{n_i - j + 1} \\
    \;&=\; n\Bigl(\frac{1}{n_i} + \frac{1}{n_i-1} 
    + \cdots + \frac{1}{n_i-s_i+1}\Bigr)\\ 
    \;&=\; n\bigl(\harm{n_i} - \harm{n_i-s_i}\bigr). \qedhere
\end{align*}
\end{IEEEproof}

As $n_1$ varies from $s_1$ to $n-s_2$, the pair $(E_1, E_2)$ traces $n-k+1$ discrete operating points, with $E_1$ decreasing monotonically and $E_2$ increasing monotonically in $n_1$. No two of these points are Pareto-comparable.

\begin{corollary}\label{cor:dedicated_conjecture}
Every file-dedicated code satisfies Conjecture~\ref{conj:hyperbola}.
\end{corollary}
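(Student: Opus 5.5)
The plan is to derive the statement directly from Corollary~\ref{cor:pure_file_hyperbola}, with no new computation needed. A file-dedicated code has the block-diagonal generator matrix $G(n_1,n_2)=\mathrm{diag}(G_1,G_2)$, where $G_1$ and $G_2$ may be arbitrary per-file linear codes, not necessarily MDS.

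The first step is to check the hypothesis of Corollary~\ref{cor:pure_file_hyperbola}. Every column in the first block has zero entries in its last $s_2$ coordinates, so it lies in $F_1$. Likewise, every column in the second block lies in $F_2$. Hence every column is pure-file, and $N(F_1)+N(F_2)=n_1+n_2=n$.

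We also need $G$ to have rank $k$, which is part of the standing hypotheses of Theorem~\ref{thm:nonlinear_bound}. This holds because each $G_i$ has rank $s_i$, being a generator matrix of an $[n_i,s_i]$ code, and the rank of a block-diagonal matrix is the sum of the block ranks.

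With the hypothesis verified, Corollary~\ref{cor:pure_file_hyperbola} gives $s_1/E_1+s_2/E_2\le 1$ immediately. This holds without any assumption on $\max\{s_1,s_2\}$, so in particular it covers the range stated in Conjecture~\ref{conj:hyperbola}.

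As a sanity check in the MDS case, one can use Proposition~\ref{prop:dedicated_MDS}. Since $1/(n_i-j+1)\ge 1/n_i$ for each $j$, we get
\[
E_i \;=\; n\sum_{j=1}^{s_i}\frac{1}{n_i-j+1}\;\ge\; \frac{n s_i}{n_i}.
\]
Therefore
\[
\frac{s_1}{E_1}+\frac{s_2}{E_2}\;\le\;\frac{n_1+n_2}{n}\;=\;1.
\]
For non-MDS blocks the same conclusion also follows from Lemma~\ref{lem:MDS_optimal_block}, because replacing a block by an MDS block can only decrease $E_i$.

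The only point needing any care is the rank-$k$ requirement, and it is handled by the block-rank observation above. I expect no genuine obstacle: the corollary is essentially a restatement of Corollary~\ref{cor:pure_file_hyperbola} for this family.
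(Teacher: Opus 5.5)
Your argument is correct. It rests on the same key result as the paper's proof, Corollary~\ref{cor:pure_file_hyperbola}, but the route to it differs slightly. The paper first invokes Lemma~\ref{lem:MDS_optimal_block} to replace each block $G_i$ by an $[n_i,s_i]$ MDS block, which can only lower $E_i$, and then applies Corollary~\ref{cor:pure_file_hyperbola} only to that MDS code. You apply the corollary directly to the arbitrary block-diagonal code.

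Your shortcut is legitimate, since the corollary needs only a rank-$k$ matrix whose columns are all pure-file. It is also slightly cleaner: it never uses MDS codes, whose existence for a given $n_i$ over a fixed $\F_q$ is not automatic. Your explicit block-rank check also supplies a hypothesis the paper leaves implicit.

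Your ``sanity check'' is essentially the paper's reduction done without the projection argument. Combining the bound $E_i \ge n s_i/n_i$, read off from Proposition~\ref{prop:dedicated_MDS}, with Lemma~\ref{lem:MDS_optimal_block} proves the claim without Theorem~\ref{thm:nonlinear_bound}.

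One harmless imprecision: if some block contains a zero column, that column lies in both $F_1$ and $F_2$. Then $N(F_1)+N(F_2)>n$ rather than $=n$. Theorem~\ref{thm:nonlinear_bound} then gives an even smaller right-hand side, so your conclusion is unaffected.
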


\begin{IEEEproof}
By Lemma~\ref{lem:MDS_optimal_block}, for any file-dedicated code 
with per-file codes $G_1, G_2$, we have $E_i(G_i) \ge E_i(G_i^{\mathrm{MDS}})$ for both $i$, and hence $s_i/E_i(G_i) \le s_i/E_i(G_i^{\mathrm{MDS}})$. Since every column of any file-dedicated code belongs entirely to either $F_1$ or $F_2$, we have $N(F_1) + N(F_2) = n$, and Corollary~\ref{cor:pure_file_hyperbola} gives $s_1/E_1^{\mathrm{MDS}} + s_2/E_2^{\mathrm{MDS}} \le 1$. Combining both inequalities completes the proof.
\end{IEEEproof}

\begin{corollary}\label{cor:identity_dominated}
The identity code $\cC_I$ coincides with the minimum file-dedicated allocation $\cC(s_1, s_2)$ at $n = k$, and is strictly Pareto-dominated by $\cC(n_1, n_2)$ for any allocation with $n > k$.
\end{corollary}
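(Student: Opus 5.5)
The first assertion is a direct comparison. At $n=k$ the only valid allocation is $n_1=s_1$, $n_2=s_2$. Each block $G_i$ is then an $[s_i,s_i]$ code, and such a code is MDS exactly when it is invertible. Hence, up to a change of basis within each file (which leaves every recovery event unchanged), $G(s_1,s_2)$ is $I_k$. The numbers agree as well: Proposition~\ref{prop:dedicated_MDS} with $n=k$ and $n_i=s_i$ gives $E_i = k(\harm{s_i}-\harm{0}) = k\harm{s_i}$, which is Proposition~\ref{prop:identity_Ei}.

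For the domination claim with $n>k$, I will compare file by file, term by term. I write
\[
E_i(G(n_1,n_2)) = \sum_{j=1}^{s_i}\frac{n}{n_i-j+1}
\qquad\text{and}\qquad
E_i(G_I)=\sum_{j=1}^{s_i}\frac{k}{s_i-j+1}.
\]
The $j$-th term of the first sum is at most the $j$-th term of the second if and only if
\[
n s_i - k n_i \;\le\; (n-k)(j-1).
\]
The right-hand side is nondecreasing in $j$, so the binding case is $j=1$, which requires $n_i \ge n s_i/k$. If this holds for both files, then every term is bounded. Strictness follows in either of two ways. If $n_i > n s_i/k$, the inequality is already strict at $j=1$. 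If $n_i = n s_i/k$ and $s_i \ge 2$, the term $j=2$ is strict because $n>k$. When $s_i=1$ and $n_i=n/k$ the two expressions coincide, so I would obtain strictness from the other file, which has $\max\{s_1,s_2\}\ge 2$ unless $k=2$. I would treat the case $k=2$ by a direct computation.

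The main obstacle is that the term-wise condition can hold for both files only at the proportional point $n_i = n s_i/k$. Beyond that, the literal claim ``for any allocation'' fails at lopsided allocations. For example, $n_1=s_1$ gives $E_1 = n\harm{s_1} > k\harm{s_1}$. I would therefore first try to replace the term-wise comparison by a full-sum comparison $n(\harm{n_i}-\harm{n_i-s_i}) \le k\harm{s_i}$, using the convexity of $x\mapsto 1/x$ to enlarge the admissible range of $n_i$. I expect this to extend only to allocations with $n_i \ge s_i$ plus a share of the added redundancy, not to every allocation. So I would state and prove the domination for those allocations $\cC(n_1,n_2)$ whose shares satisfy $n_i/n \ge s_i/k$, rounding to the nearest valid integers, and show that at least one such allocation exists for every $n>k$. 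I would flag explicitly that the extreme allocations are not dominating, so ``any allocation'' has to be read in this restricted sense.
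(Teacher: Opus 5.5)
Your handling of $n=k$ is correct and agrees with the paper. On the second claim you are right and the paper is not: as stated, the claim is false. The paper's one-line justification (enlarging $n_i$ beyond $s_i$ increases $\alpha_{F_i}(s)$, hence decreases $E_i$) compares subset counts of codes of different lengths. Between those lengths the term $n\harm{n}$ and the weights $1/\binom{n-1}{s}$ in \eqref{eq:file_exp} also change.

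Your counterexample $n_1=s_1$, with $E_1=n\harm{s_1}>k\harm{s_1}$, settles it. Already in Example~\ref{ex:asymmetric_n8_local}, Cases A, D and E fail to dominate the identity point $(4,22/3)$.

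Your term-wise comparison does correctly prove a salvageable version: if $k\mid n$, $n>k$ and $\max\{s_1,s_2\}\ge 2$, the proportional allocation $n_i=ns_i/k$ strictly dominates $\cC_I$. This is also what the paper's $\alpha$-argument yields once it is made length-consistent. By Proposition~\ref{prop:monotone_mn}, $n/k$ copies of $I_k$ realize the same pair at length $n$, as a file-dedicated code whose blocks are non-MDS when $s_i\ge 2$, so Lemma~\ref{lem:MDS_optimal_block} applies.

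The gap is in your repair, where you intend to show that a dominating near-proportional allocation exists for every $n>k$. That is false in two cases, and unproven in a third.
\begin{enumerate}[label=(\alph*)]
\item At $n=k+1$ the only allocations are $(s_1+1,s_2)$ and $(s_1,s_2+1)$. So some file has $n_i=s_i$ and $E_i=(k+1)\harm{s_i}>k\harm{s_i}$. No file-dedicated code dominates $\cC_I$ there, for any partition.
\item For $k=2$, the direct computation you defer to fails for every $n$. Requiring $E_i=n/n_i\le 2$ for both files forces $n_1=n_2=n/2$, which reproduces $(2,2)$ exactly, so strict domination never occurs.
\item For $k\nmid n$ with $n\ge k+2$, the rounded-down file violates your term-wise inequality at $j=1$. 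The full-sum comparison you hope to get from convexity is not automatic; case (a) shows rounding can fail outright. It has to be argued. For example, at $n=k+2$ the allocation $(s_1+1,s_2+1)$ works for $k\ge 3$, because the condition reduces to $2\harm{s_i}<(k+2)s_i/(s_i+1)$.
\end{enumerate}

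So any corrected statement must exclude $n=k+1$ and $k=2$. As it stands, your argument establishes it only for $k\mid n$.
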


\begin{IEEEproof}
At $n = k$, the only valid allocation is $n_1 = s_1$, $n_2 = s_2$, 
and $\cC(s_1, s_2) = \cC_I$, so the two codes are identical. 
For $n > k$, increasing $n_i$ beyond $s_i$ strictly increases 
$\alpha_{F_i}(s)$ for all $s \ge s_i$ and hence strictly reduces 
$E_i$ for both $i$ simultaneously.
\end{IEEEproof}

\subsection{Systematic $[n,k]$ MDS codes}\label{ssec:MDS_construction}

Let $\cC^n_\text{Global}$ be a systematic $[n,k]$ MDS code over $\F_q$ with generator matrix $G^n_\text{Global} = [I_k \mid P]$, where $P \in \F_q^{k \times (n-k)}$ is the parity block. The $k$ systematic columns are the standard basis vectors $e_1,\ldots,e_k$, and any $k$ columns of $G^n_\text{Global}$ are linearly independent.

\begin{proposition}\label{prop:MDS_alpha}
For a systematic $[n,k]$ MDS code with target file $F_i$ of dimension $s_i$, the subset counts are:
\begin{equation}\label{eq:MDS_alpha}
    \alpha_{F_i}(s) \;=\; 
    \begin{cases}
        0 & s < s_i, \\
        \displaystyle\binom{n-s_i}{s-s_i} & s_i \le s \le k-1, \\
        \displaystyle\binom{n}{s} & s \ge k.
    \end{cases}
\end{equation}
\end{proposition}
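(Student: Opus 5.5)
The count splits into three regimes of $s$, and I would handle each separately.

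\textbf{Regimes $s<s_i$ and $s\ge k$.} For $s<s_i$, any $s$ columns span a space of dimension at most $s<s_i=\dim F_i$, so $\alpha_{F_i}(s)=0$. For $s\ge k$, the MDS property says any $k$ columns are linearly independent, so every $s$-subset contains $k$ independent columns spanning $\F_q^k\supseteq F_i$. Hence $\alpha_{F_i}(s)=\binom{n}{s}$.

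\textbf{Middle regime $s_i\le s\le k-1$.} Let $T_i$ be the set of indices of the $s_i$ systematic columns $\{e_j : u_j\in F_i\}$. I claim that an $s$-subset $S$ recovers $F_i$ if and only if $T_i\subseteq S$. Given the claim, the count is the number of ways to choose the remaining $s-s_i$ indices from the other $n-s_i$ positions, namely $\binom{n-s_i}{s-s_i}$.

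The direction "if" is immediate. For "only if", I would argue by contradiction. Suppose $S$ recovers $F_i$ and some $e_j$ with $j\in T_i$ has $j\notin S$. Since $|S|=s\le k-1$, the MDS property makes the columns in $S$ linearly independent. Moreover $|S\cup\{j\}|\le k$, so the set $S\cup\{j\}$ is also linearly independent. Then $e_j=g_j\notin\langle g_l : l\in S\rangle$, which contradicts $F_i\subseteq\langle g_l:l\in S\rangle$.

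The main (mild) obstacle is this last step: it uses that any at most $k$ columns of an MDS generator matrix are independent, applied to the enlarged set $S\cup\{j\}$. This is exactly why the middle regime must stop at $s\le k-1$: at $s=k$ the set $S$ already spans everything, and the characterization $T_i\subseteq S$ no longer holds.
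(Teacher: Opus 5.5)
Your proposal is correct and follows essentially the same route as the paper. Both use the three-regime split, the same characterization that for $s_i \le s \le k-1$ an $s$-subset recovers $F_i$ if and only if it contains all $s_i$ systematic columns of $F_i$, and the same MDS independence argument applied to the enlarged set $S \cup \{j\}$ of size at most $k$. Your extra remark that $S$ itself is independent is not needed, but it is harmless.
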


\begin{IEEEproof}
For $s < s_i$ no $s$-subset can span $F_i$, giving zero. For $s \ge k$, any $k$ columns are linearly independent and span $\F_q^k \supseteq F_i$, so all $\binom{n}{s}$ subsets recover $F_i$.

For $s_i \le s \le k-1$, we claim a subset $S$ of size $s$ spans $F_i$ if and only if $S$ contains all $s_i$ systematic columns of $F_i$. For the $(\Leftarrow)$ direction, if $S \supseteq \{e_1,\ldots,e_{s_i}\}$ then ${F_i \subseteq \langle g_j : j \in S\rangle}$ trivially. For the $(\Rightarrow)$ direction, suppose $S$ is missing some systematic column $e_j \in F_i$. Then $S \cup \{e_j\}$ has size $s+1 \le k$, so by the MDS property these columns are linearly independent, which means $e_j \notin \langle g_j : j \in S\rangle$, and therefore $F_i \not\subseteq \langle g_j : j \in S\rangle$.

Given that the $s_i$ systematic columns of $F_i$ must all be included, the remaining $s-s_i$ columns may be chosen freely from the $n-s_i$ remaining columns, giving 
$\alpha_{F_i}(s) = \binom{n-s_i}{s-s_i}$.
\end{IEEEproof}

\begin{proposition}\label{prop:MDS_Ei}
For a systematic $[n,k]$ MDS code and any partition $s_1 + s_2 = k$, the expected retrieval time of file $F_i$ is:
\begin{equation}\label{eq:MDS_Ei}
    E_i(G^n_\text{Global}) \;=\; n(H_n - H_{n-k}) 
    - \frac{n}{\binom{n}{s_i}}
    \sum_{s=s_i}^{k-1}\frac{\binom{s}{s_i}}{n-s}.
\end{equation}
\end{proposition}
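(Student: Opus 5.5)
The plan is to substitute the subset counts of Proposition~\ref{prop:MDS_alpha} into the exact formula \eqref{eq:file_exp} and split the resulting sum according to the three regimes of $\alpha_{F_i}(s)$. Terms with $s<s_i$ vanish. For $s\ge k$ we have $\alpha_{F_i}(s)=\binom{n}{s}$, and since $\binom{n}{s}/\binom{n-1}{s}=n/(n-s)$, the tail contributes
\[
\sum_{s=k}^{n-1}\frac{n}{n-s}=n\harm{n-k}.
\]
Subtracting this from $n\harm{n}$ already produces the leading term $n(\harm{n}-\harm{n-k})$ of \eqref{eq:MDS_Ei}.

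It remains to rewrite the middle range $s_i\le s\le k-1$. Each term there is
\[
\frac{\binom{n-s_i}{s-s_i}}{\binom{n-1}{s}}
=\frac{n}{n-s}\cdot\frac{\binom{n-s_i}{s-s_i}}{\binom{n}{s}}.
\]
I will use the standard identity $\binom{n}{s}\binom{s}{s_i}=\binom{n}{s_i}\binom{n-s_i}{s-s_i}$, which counts pairs $A\subseteq B\subseteq[n]$ with $|A|=s_i$ and $|B|=s$ in two ways. It gives
\[
\frac{\binom{n-s_i}{s-s_i}}{\binom{n}{s}}=\frac{\binom{s}{s_i}}{\binom{n}{s_i}},
\]
so the middle contribution equals
\[
\frac{n}{\binom{n}{s_i}}\sum_{s=s_i}^{k-1}\frac{\binom{s}{s_i}}{n-s},
\]
which is exactly the subtracted term in \eqref{eq:MDS_Ei}.

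Nothing here is a real obstacle, since everything follows from Lemma~\ref{lem:exp_formula_subset} and Proposition~\ref{prop:MDS_alpha}. The only points needing care are the binomial identity and the boundary indices. In particular, the range $s\ge k$ in the sum stops at $n-1$, and the sum is empty when $n=k$, which is consistent with $\harm{0}=0$.
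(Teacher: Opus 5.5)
Your proposal is correct and follows essentially the same route as the paper: substitute the subset counts of Proposition~\ref{prop:MDS_alpha} into \eqref{eq:file_exp}, evaluate the tail $s\ge k$ as $n\harm{n-k}$, and rewrite the middle range via $\binom{n}{s}/\binom{n-1}{s}=n/(n-s)$ and $\binom{n-s_i}{s-s_i}/\binom{n}{s}=\binom{s}{s_i}/\binom{n}{s_i}$. Your double-counting justification of that last identity and your check of the boundary cases are fine additions.
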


\begin{IEEEproof}
Substituting~\eqref{eq:MDS_alpha} into~\eqref{eq:file_exp}:
\begin{align*}
    E_i(G) \;&=\; nH_n - \sum_{s=s_i}^{k-1}
    \frac{\binom{n-s_i}{s-s_i}}{\binom{n-1}{s}}
    - \sum_{s=k}^{n-1}\frac{\binom{n}{s}}{\binom{n-1}{s}}.
\end{align*}
For the second sum, $\binom{n}{s}/\binom{n-1}{s} = n/(n-s)$, so $$\sum_{s=k}^{n-1}\frac{n}{n-s} = nH_{n-k}.$$ For the first sum, using $\binom{n-s_i}{s-s_i}/\binom{n}{s} = \binom{s}{s_i}/\binom{n}{s_i}$ and $\binom{n}{s}/\binom{n-1}{s} = n/(n-s)$ 
gives 
$$\frac{\binom{n-s_i}{s-s_i}}{\binom{n-1}{s}} = 
\frac{n}{\binom{n}{s_i}}\cdot\frac{\binom{s}{s_i}}{n-s}.$$
Substituting yields~\eqref{eq:MDS_Ei}.
\end{IEEEproof}

\begin{corollary}[$s_1 = 1$ case]\label{cor:MDS_s1_one}
For $s_1 = 1$ and any $n \ge k$, the expected retrieval time of the single-symbol file satisfies $E_1(G^n_\text{Global}) = k$, independently of $n$.
\end{corollary}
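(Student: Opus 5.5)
The cleanest route is to specialize Proposition~\ref{prop:MDS_Ei} to $s_i=s_1=1$ and evaluate the resulting sum in closed form. With $s_1=1$ we have $\binom{s}{1}=s$ and $\binom{n}{1}=n$, so the correction term becomes
\[
\frac{n}{n}\sum_{s=1}^{k-1}\frac{s}{n-s} \;=\; \sum_{s=1}^{k-1}\frac{s}{n-s}.
\]
We write $s=n-(n-s)$ to get
\[
\frac{s}{n-s}=\frac{n}{n-s}-1 .
\]
Then
\[
\sum_{s=1}^{k-1}\frac{s}{n-s} \;=\; n\sum_{s=1}^{k-1}\frac{1}{n-s}-(k-1) \;=\; n\bigl(\harm{n-1}-\harm{n-k}\bigr)-(k-1),
\]
where we substitute $j=n-s$, which runs from $n-k+1$ to $n-1$.

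Plugging this into \eqref{eq:MDS_Ei} gives
\[
E_1 \;=\; n(\harm{n}-\harm{n-k})-n(\harm{n-1}-\harm{n-k})+k-1 \;=\; n(\harm{n}-\harm{n-1})+k-1 \;=\; 1+k-1 \;=\; k,
\]
since $\harm{n}-\harm{n-1}=1/n$. The result does not depend on $n$.

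As an independent sanity check, one can also argue probabilistically. By Proposition~\ref{prop:MDS_alpha}, a set of size at most $k-1$ recovers $u_1$ iff it contains the systematic column $e_1$. So $\tau_1$ is the minimum of two times: the time of the first draw of $e_1$, and the time at which $k$ distinct columns have been drawn. Symmetry among the $n$ columns then shows that $e_1$ is equally likely to be any position in the order of first appearances. Conditioning on that position reproduces $k$, but the algebraic route above is shorter, so it is the one to present.

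The only delicate step is the edge case $k=1$, i.e.\ $s_2=0$. This is excluded since $s_2\ge1$, but in any case the sum is empty there and the formula still returns $n(\harm{n}-\harm{n-1})=1=k$. No real obstacle is expected beyond keeping the harmonic-index bookkeeping straight in the telescoping.
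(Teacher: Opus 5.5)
Your proof is correct and is essentially the paper's: both specialize Proposition~\ref{prop:MDS_Ei} to $s_i=1$ and split $\frac{s}{n-s}=\frac{n}{n-s}-1$. The paper does this after substituting $j=n-s$ and then peels off the $j=n$ term of $n(\harm{n}-\harm{n-k})$, which is exactly your $n(\harm{n}-\harm{n-1})=1$. Your probabilistic sanity check also goes through: the order of first appearances is uniform and independent of the times $T_m$ at which the $m$-th distinct column arrives, so $\E[\tau_1]=\frac{1}{n}\bigl(\sum_{m=1}^{k}\E[T_m]+(n-k)\E[T_k]\bigr)=\sum_{j=0}^{k-1}\frac{(k-j)+(n-k)}{n-j}=k$.
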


\begin{IEEEproof}
Setting $s_i = 1$ in~\eqref{eq:MDS_Ei} gives
\[
    E_1(G^n_\text{Global}) \;=\; n(H_n - H_{n-k}) 
    - \sum_{s=1}^{k-1}\frac{s}{n-s}.
\]
Substituting $j = n-s$:
\[
    \sum_{s=1}^{k-1}\frac{s}{n-s} 
    \;=\; \sum_{j=n-k+1}^{n-1}\frac{n-j}{j}
    \;=\; \sum_{j=n-k+1}^{n-1}\frac{n}{j} - (k-1).
\]
Since 
$$n(H_n - H_{n-k}) = \sum_{j=n-k+1}^{n} \frac{n}{j} = 
\sum_{j=n-k+1}^{n-1}\frac{n}{j} + 1,$$ we get:
\[
    E_1(G^n_\text{Global}) \;=\; \sum_{j=n-k+1}^{n-1}\frac{n}{j} + 1 
    - \sum_{j=n-k+1}^{n-1}\frac{n}{j} + (k-1) \;=\; k.
    \qedhere
\]
\end{IEEEproof}

We note that the result of Corollary~\ref{cor:MDS_s1_one} was proven independently in~\cite{bar-lev2025cover,gruica2024combinatorial}  as part of the analysis of the random access case where the objective is the retrieval of a single information symbol.

\begin{corollary}[Symmetric case]\label{cor:MDS_symmetric}
For $s_1 = s_2 = k/2$, the expected retrieval time of each file under a systematic $[n,k]$ MDS code is:
\begin{equation}\label{eq:MDS_symmetric}
    E_i(G^n_\text{Global}) \;=\; n(H_n - H_{n-k}) 
    - \frac{n}{\binom{n}{k/2}}
    \sum_{s=k/2}^{k-1}\frac{\binom{s}{k/2}}{n-s}.
\end{equation}
\end{corollary}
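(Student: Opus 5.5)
This corollary is a direct specialization of Proposition~\ref{prop:MDS_Ei}, so the plan is simply to substitute $s_1 = s_2 = k/2$ into \eqref{eq:MDS_Ei}. First, I will note that the symmetric partition requires $k$ to be even, so that $k/2$ is a positive integer. Under this assumption the summation range $s = k/2, \ldots, k-1$ is nonempty and well defined, because $k/2 \le k-1$ whenever $k \ge 2$. Second, I will observe that the hypotheses of Proposition~\ref{prop:MDS_Ei} place no constraint on the partition beyond $s_1 + s_2 = k$. Its formula depends on the file only through $s_i$, so taking $s_i = k/2$ for either $i \in \{1,2\}$ gives \eqref{eq:MDS_symmetric} verbatim.

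For completeness, I will also remark why the two files receive identical values even though they use different systematic columns. Proposition~\ref{prop:MDS_alpha} shows that $\alpha_{F_i}(s)$ depends only on $s_i$, $n$ and $k$, and not on which $s_i$ coordinates form the file. Consequently \eqref{eq:file_exp} gives $E_1 = E_2$ whenever $s_1 = s_2$.

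No real obstacle is expected. The only point needing care is to state the parity assumption on $k$ and to confirm that the binomial identities used in Proposition~\ref{prop:MDS_Ei} remain valid at $s_i = k/2$; they do, since they hold for every $1 \le s_i \le s \le n$.
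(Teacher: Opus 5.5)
Your proposal is correct and matches the paper. The paper states this corollary without a separate proof, as the immediate specialization $s_i = k/2$ of Proposition~\ref{prop:MDS_Ei}, and your remarks on the parity of $k$, the nonempty summation range, and $E_1 = E_2$ via Proposition~\ref{prop:MDS_alpha} are accurate additions.
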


We now show that there is always a local MDS code that dominates the global systematic MDS code. First, we need the following two lemmas, which are proved in the Appendix.

\begin{lemma}\label{lem:hyp_representation}
Let $f(x) \coloneqq \harm{s_i} - \harm{s_i - x}$ for $x \in \{0,\ldots,s_i\}$.
Choose $k$ of the $n$ columns of $G^n_\mathrm{Global}$ uniformly at random, and let $X_G$ be the number of systematic columns of $F_i$ among the chosen columns. Similarly, choose $s_i$ of the $n_i$ columns of block $i$ of $G(n_1,n_2)$ uniformly at random, and let $X_L$ be the number of the first $s_i$ columns of block $i$ among the chosen columns. That is,
\[
    \Pr[X_G = x] = \frac{\binom{s_i}{x}\binom{n-s_i}{k-x}}{\binom{n}{k}},
    \qquad
    \Pr[X_L = x] = \frac{\binom{s_i}{x}\binom{n_i-s_i}{s_i-x}}{\binom{n_i}{s_i}},
\]
for $x \in \{0,\ldots,s_i\}$. Then
\begin{equation}\label{eq:hyp_representation}
    E_i(G^n_\mathrm{Global}) = n\,\E[f(X_G)]
    \qquad \text{and} \qquad
    E_i(G(n_1,n_2)) = n\,\E[f(X_L)].
\end{equation}
\end{lemma}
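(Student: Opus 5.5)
The plan is to prove a single identity about hypergeometric variables, which covers both statements of \eqref{eq:hyp_representation} at once. Fix integers $N \ge K \ge 0$ and $1 \le s \le N$. Let $X_K$ be the number of elements of a fixed $s$-set $T \subseteq [N]$ that land among the first $K$ positions of a uniformly random ordering of $[N]$; equivalently, $X_K$ is hypergeometric with parameters $(N,s,K)$. With $f(x) = \harm{s} - \harm{s-x}$, I claim that
\begin{equation*}
  \E[f(X_K)] \;=\; \sum_{j=0}^{K-1} \frac{1}{N-j}\Bigl(1 - \frac{\binom{j}{s}}{\binom{N}{s}}\Bigr),
\end{equation*}
where $\binom{j}{s}/\binom{N}{s}$ is the probability that all of $T$ lies among the first $j$ positions.

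I will prove the claim by induction on $K$. For $K = 0$ both sides vanish. For the step, couple $X_{K+1}$ with $X_K$ by revealing one more position of the same random ordering. Given $X_K$, the new element lies in $T$ with probability $(s - X_K)/(N-K)$. When this happens, $f$ increases by $f(X_K+1) - f(X_K) = 1/(s - X_K)$; this is only possible when $X_K < s$. Hence
\[
  \E[f(X_{K+1})] - \E[f(X_K)] \;=\; \E\!\left[\frac{s-X_K}{N-K}\cdot\frac{1}{s-X_K}\,\mathbf{1}\{X_K<s\}\right] \;=\; \frac{\Pr[X_K<s]}{N-K}.
\]
Since $\Pr[X_K < s] = 1 - \binom{K}{s}/\binom{N}{s}$, this is exactly the increment of the right-hand side, which closes the induction.

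It then remains to match the two codes. For the local code I apply the identity with $(N,s,K) = (n_i, s_i, s_i)$. All terms have $j < s_i$, so $\binom{j}{s_i} = 0$, and the right-hand side becomes $\sum_{j=0}^{s_i-1} 1/(n_i - j) = \harm{n_i} - \harm{n_i - s_i}$. Multiplying by $n$ gives $E_i(G(n_1,n_2))$ by Proposition~\ref{prop:dedicated_MDS}.

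For the global code I take $(N,s,K) = (n, s_i, k)$. The right-hand side times $n$ equals
\[
  \sum_{j=0}^{k-1}\frac{n}{n-j} \;-\; \frac{n}{\binom{n}{s_i}}\sum_{j=s_i}^{k-1}\frac{\binom{j}{s_i}}{n-j},
\]
since $\binom{j}{s_i} = 0$ for $j < s_i$. The first sum is $n(\harm{n} - \harm{n-k})$, so this is precisely the formula \eqref{eq:MDS_Ei} of Proposition~\ref{prop:MDS_Ei}.

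The main obstacle is finding the right intermediate identity; once it is in place, both specializations are bookkeeping. One check is that with $K = N$ the identity gives $\E[f(X_N)] = \harm{s}$, since $X_N = s$ deterministically; this is consistent with the identity code. The only delicate point in the inductive step is that the increment $1/(s-x)$ is undefined at $x = s$. This is harmless because the success probability $(s - X_K)/(N-K)$ is zero there, so the step is well defined.
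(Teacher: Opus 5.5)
Your proof is correct, and it takes a different route from the paper. The paper starts from the closed form of Proposition~\ref{prop:MDS_Ei} and rewrites it as $n\sum_{s=0}^{k-1}\frac{1}{n-s}\bigl(1-\binom{s}{s_i}/\binom{n}{s_i}\bigr)$. It then expands $\binom{n}{s_i}-\binom{s}{s_i}$ by Vandermonde's identity and applies the absorption identity $\frac{1}{n-s}\binom{n-s}{j}=\frac{1}{j}\binom{n-s-1}{j-1}$. Using a random ordering of the columns, it recognizes each resulting inner sum as the probability that the $r$-th systematic column of $F_i$ lands among the first $k$ positions, i.e.\ $\Pr[X_G\ge r]$. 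The tail-sum formula $\E[f(X)]=\sum_r \Pr[X\ge r]\,(f(r)-f(r-1))$ then finishes the global case, and the local case is handled by repeating the computation with $(n_i,s_i)$ in place of $(n,k)$.

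You instead work from the expectation toward the formula, revealing one position at a time. The conditional hitting probability $(s-X_K)/(N-K)$ cancels exactly against the increment $1/(s-X_K)$ of $f$, so each step contributes $\Pr[X_K<s]/(N-K)$. That cancellation is the probabilistic content of the paper's Vandermonde-plus-absorption manipulation. In fact both arguments evaluate the same double sum over (position, rank of the designated column): you put positions in the outer sum, the paper puts ranks there.

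What your version buys:
\begin{itemize}
\item it is shorter and needs no binomial identities;
\item it explains why $f$ is the right function, since its increments are the reciprocals of the number of designated columns still missing;
\item your general $(N,s,K)$ identity covers both codes uniformly rather than by analogy.
\end{itemize}
What the paper's version buys is the explicit tail-probability representation $E_i(G^n_\mathrm{Global}) = n\sum_{r=1}^{s_i}\Pr[X_G\ge r]/(s_i-r+1)$ as an intermediate step.
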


Note that $n f(x) = \sum_{r=1}^{x}\frac{n}{s_i-r+1}$ is the expected number of uniform draws needed to collect $x$ out of $s_i$ designated columns: when $r-1$ of them have been collected, the next one is drawn with probability $(s_i-r+1)/n$ per step. Thus Lemma~\ref{lem:hyp_representation} states that each code's expected retrieval time equals the expected time to collect as many designated columns as a uniformly random subset of the appropriate size contains.

\begin{lemma}\label{lem:hyp_convex_order}
Let $k \mid n$ and $n_i = \frac{ns_i}{k}$. Then
$\E[\varphi(X_L)] \le \E[\varphi(X_G)]$ for every convex function $\varphi$ on $\{0,\ldots,s_i\}$.
\end{lemma}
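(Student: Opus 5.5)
The plan is to compare the two hypergeometric laws as \emph{samples of equal size drawn from populations with the same marked fraction}, and then use a sign-change (cut) criterion for convex order. Write $n=kt$, so that $n_i=s_it$ with $t\in\mathbb{Z}_{\ge1}$.

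First I will check that the means agree: $\E[X_G]=s_ik/n=s_i/t$ and $\E[X_L]=s_i^2/n_i=s_i/t$. Next I will use the symmetry of the hypergeometric law in the number of marked items and the number of draws. This identity is immediate from the formula $\binom{K}{x}\binom{N-K}{m-x}/\binom{N}{m}$ by a binomial rewriting. It lets me view $X_G$ as the number of marked items when $s_i$ items are drawn from a population of size $kt$ containing $k$ marked items. Likewise $X_L$ is the number of marked items when $s_i$ items are drawn from a population of size $s_it$ containing $s_i$ marked items. Thus both count marked items in a sample of fixed size $m=s_i$ from a population of size $N_j\coloneqq jt$ with exactly $j$ marked items (fraction $1/t$). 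Here $j=s_i$ for $X_L$ and $j=k$ for $X_G$.

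It therefore suffices to prove that the family $Y_j\sim\mathrm{Hyp}(N=jt,\,K=j,\,m=s_i)$, $j\ge s_i$, is increasing in convex order. By transitivity of convex order it is enough to treat one step $j\to j+1$. All $Y_j$ have mean $s_i/t$ and total mass $1$. I will then invoke the Karlin--Novikoff cut criterion in its pmf form. If two integer-valued laws with equal means have pmf difference $p_{j+1}(x)-p_j(x)$ with sign pattern $(+,-,+)$ in $x$, then their CDFs cross exactly once, and this yields $\E[\varphi(Y_j)]\le\E[\varphi(Y_{j+1})]$ for all convex $\varphi$. The last implication can be shown directly by summation by parts: write $\E[\varphi(Y)]$ through the integrated tail $\sum_{y}\Pr[Y\ge y]$ against the second differences of $\varphi$.

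To get the sign pattern, I will study the ratio $r(x)=p_{j+1}(x)/p_j(x)$ and show that $\log r(x)$ is convex on $\{0,\dots,s_i\}$. Then $r(x)-1$ changes sign at most twice, and the pattern must be $(+,-,+)$. Exactly one sign change is impossible, since it would contradict equal means, and zero changes would contradict equal total mass. The successive ratio $r(x+1)/r(x)$ is an explicit rational function built from factors of the form
\[
\frac{(K'-x)(m-x)}{(x+1)(N'-K'-m+x+1)}
\]
divided by the analogous expression with $(N,K)$. Here $(N,K)=(jt,j)$ and $(N',K')=((j+1)t,j+1)$. Convexity of $\log r$ amounts to this quotient being nondecreasing in $x$. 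After clearing denominators, that reduces to checking that each of the monotone pieces
\[
\frac{K'-x}{K-x}\quad\text{and}\quad\frac{N-K-m+x+1}{N'-K'-m+x+1}
\]
is nondecreasing in $x$; the second piece comes from the reciprocal factor $1/(N'-K'-m+x+1)$ in the ratio.

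This monotonicity verification is the step I expect to be the main obstacle. The algebra is elementary but fiddly, and the boundary of the support needs care: $x$ ranges up to $\min(s_i,j)=s_i$, and some terms $N-K-m+x$ can vanish when $t=1$. I will handle $t=1$ separately, since then $X_L\equiv s_i$ is deterministic and the claim is trivial once the degenerate cases are checked. If the log-convexity computation proves messy, my fallback is a coupling. Realize the population of size $(j+1)t$ as the population of size $jt$ plus one extra block of $t$ items containing one marked item. Then show that conditioning $Y_{j+1}$ on the sampled subset from the old block produces a mean-preserving spread of $Y_j$, i.e.\ exhibit $\E[Y_{j+1}\mid Y_j']\,{=}\,Y_j'$ for a suitable copy $Y_j'\sim Y_j$, which gives convex order by Jensen.
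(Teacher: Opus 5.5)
Your proposal is correct and is essentially the paper's argument: equal means, the symmetry rewriting of $X_G$ as $s_i$ draws from $n$ items with $k$ marked, nondecreasing increments of the log-likelihood ratio forcing a $+,-,+$ sign pattern, and then the cut criterion. The paper cites that criterion from Shaked--Shanthikumar, Theorem~3.A.44, rather than proving it by summation by parts. The one real difference is your interpolating chain $Y_{s_i},\dots,Y_k$, and it is unnecessary: your one-step factors telescope, e.g.\ $\prod_{j=s_i}^{k-1}\frac{j+1-x}{j-x}=\frac{k-x}{s_i-x}$, and the second factor telescopes to $\frac{(n_i-s_i)-(s_i-x)+1}{(n-k)-(s_i-x)+1}$, so the paper compares $X_L$ and $X_G$ in a single step with exactly these two monotone factors and never needs transitivity of the convex order.
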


\begin{proposition}\label{prop:local_dominates_global}
Let $k \mid n$. For any partition $s_1 + s_2 = k$ and the proportional
allocation $n_i = \frac{ns_i}{k}$, the local MDS code satisfies
\[
    E_i(G(n_1,n_2)) \;\le\; E_i(G^n_\mathrm{Global})
    \quad \text{for both } i \in \{1,2\}.
\]
\end{proposition}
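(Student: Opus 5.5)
The plan is to combine Lemma~\ref{lem:hyp_representation} with Lemma~\ref{lem:hyp_convex_order}, both taken as given. By Lemma~\ref{lem:hyp_representation}, for each $i\in\{1,2\}$ we have
\[
E_i(G^n_\mathrm{Global}) = n\,\E[f(X_G)]
\qquad\text{and}\qquad
E_i(G(n_1,n_2)) = n\,\E[f(X_L)],
\]
with $f(x)=\harm{s_i}-\harm{s_i-x}$. The comparison therefore reduces to showing $\E[f(X_L)]\le \E[f(X_G)]$. This will follow from Lemma~\ref{lem:hyp_convex_order} once we check two things. First, the proportional allocation is admissible. Second, $f$ is convex on $\{0,\ldots,s_i\}$.

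For admissibility, note that $k\mid n$ makes $n_i=ns_i/k$ an integer. Since $s_1+s_2=k$, we get $n_1+n_2=n$. Since $n\ge k$, we get $n_i\ge s_i$. Hence $G(n_1,n_2)$ is a valid file-dedicated MDS code, and the hypergeometric variable $X_L$ is well defined.

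For convexity, we use the discrete sense, which is the notion relevant for functions on an integer grid and for the convex-order statement. The first differences are
\[
f(x+1)-f(x)=\harm{s_i-x}-\harm{s_i-x-1}=\frac{1}{s_i-x}, \qquad x=0,\ldots,s_i-1.
\]
These increase strictly in $x$, so the second differences are positive and $f$ is (discretely) convex. If Lemma~\ref{lem:hyp_convex_order} is stated for functions convex on the real line, we extend $f$ by linear interpolation between integer points. This yields a convex piecewise-linear function that agrees with $f$ on $\{0,\ldots,s_i\}$, so the expectations are unchanged.

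Applying Lemma~\ref{lem:hyp_convex_order} with $\varphi=f$ gives $\E[f(X_L)]\le\E[f(X_G)]$. Multiplying by $n>0$ and invoking Lemma~\ref{lem:hyp_representation} yields $E_i(G(n_1,n_2))\le E_i(G^n_\mathrm{Global})$. The argument applies to each $i\in\{1,2\}$ separately, and both instances use the same allocation $(n_1,n_2)$, so a single local code beats global MDS for both files simultaneously.

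The real work lies in the two appendix lemmas, which are assumed here. The main obstacle would be Lemma~\ref{lem:hyp_convex_order}. Both $X_L$ and $X_G$ have mean $s_i^2/n_i=ks_i/n$. The comparison amounts to showing that sampling $s_i$ items from $n_i$ is a mean-preserving contraction of sampling $k$ items from $n$, each with $s_i$ marked items. I would attempt this via Strassen's theorem, or via a coupling that realizes $X_G$ as a mixture of appropriate hypergeometrics. Given the lemmas, the proposition itself needs only the convexity check above.
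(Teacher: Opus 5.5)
Your proof is correct and is the paper's argument: you use Lemma~\ref{lem:hyp_representation} to reduce the claim to comparing $\E[f(X_L)]$ with $\E[f(X_G)]$, you verify that $f$ is convex because its increments $1/(s_i-x)$ increase, and you apply Lemma~\ref{lem:hyp_convex_order} with $\varphi=f$. Your extra admissibility check ($n_i=ns_i/k$ is an integer with $n_i\ge s_i$) and the linear-interpolation remark are harmless refinements of that same argument.
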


\begin{IEEEproof}
The function $f(x) = \harm{s_i} - \harm{s_i-x}$ is convex on
$\{0,\ldots,s_i\}$, since its increments $f(x+1)-f(x) = \frac{1}{s_i-x}$,
$x \in \{0,\ldots,s_i-1\}$, are increasing in $x$. Combining
Lemma~\ref{lem:hyp_representation} with Lemma~\ref{lem:hyp_convex_order}
applied to $\varphi = f$ completes the proof.
\end{IEEEproof}

\begin{remark}
For the symmetric case $s_1 = s_2 = k/2$, the proportional allocation reduces
to $n_1 = n_2 = n/2$, and Proposition~\ref{prop:local_dominates_global}
provides a significantly simpler proof of~\cite[Lemma~2]{abraham2024covering}
for the two-file case. The Abraham et al.\ proof proceeds via an explicit
Markov chain construction with transition matrices and absorbing state
analysis. Our proof expresses both expected retrieval times as the mean of a
single convex function of a hypergeometric random variable, with the
proportional allocation being exactly the choice that equalizes the two
means.
\end{remark}

\begin{corollary}\label{cor:global_MDS_conjecture}
For $k \mid n$, the systematic global $[n,k]$ MDS code satisfies 
Conjecture~\ref{conj:hyperbola} for any partition $s_1 + s_2 = k$.
\end{corollary}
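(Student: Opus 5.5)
The plan is to chain the dominance result with the pure-file hyperbolic bound, mirroring the proof of Corollary~\ref{cor:dedicated_conjecture}. Fix $k \mid n$ and a partition $s_1+s_2=k$, and take the proportional allocation $n_i = ns_i/k$. These are integers, and $n_i \ge s_i$ because $n \ge k$, so the file-dedicated MDS code $\cC(n_1,n_2)$ exists (over a large enough field). Proposition~\ref{prop:local_dominates_global} then gives
\[
E_i(G(n_1,n_2)) \le E_i(G^n_\mathrm{Global}) \quad \text{for } i\in\{1,2\},
\]
and therefore
\[
\frac{s_i}{E_i(G^n_\mathrm{Global})} \le \frac{s_i}{E_i(G(n_1,n_2))}.
\]

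Next, $G(n_1,n_2)$ is block diagonal, so every column lies in $F_1$ or in $F_2$. Corollary~\ref{cor:pure_file_hyperbola} then gives
\[
\frac{s_1}{E_1(G(n_1,n_2))} + \frac{s_2}{E_2(G(n_1,n_2))} \le 1.
\]
Summing the two per-file inequalities from the first step and combining with this bound yields $s_1/E_1(G^n_\mathrm{Global}) + s_2/E_2(G^n_\mathrm{Global}) \le 1$, which is the claim.

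A direct check is also available. By Proposition~\ref{prop:dedicated_MDS}, $E_i(G(n_1,n_2)) = n(\harm{n_i}-\harm{n_i-s_i}) \ge n s_i/n_i = k$, since each of the $s_i$ terms $1/(n_i-j+1)$ is at least $1/n_i$. Hence
\[
\frac{s_1}{E_1} + \frac{s_2}{E_2} \le \frac{s_1+s_2}{k} = 1.
\]

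The only delicate point is the hypothesis $\max\{s_1,s_2\}\ge 2$ in Conjecture~\ref{conj:hyperbola}. The argument above never uses it, so the corollary holds for every partition, including $k=2$. The divisibility condition $k\mid n$ is needed only so that the proportional allocation is integral, which is what Lemma~\ref{lem:hyp_convex_order} requires. No further obstacle is expected, since all the real work sits in Proposition~\ref{prop:local_dominates_global}.
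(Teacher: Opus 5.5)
Your proposal is correct and follows the paper's proof. Proposition~\ref{prop:local_dominates_global} transfers the bound from the proportional file-dedicated MDS code to the global code; the paper then cites Corollary~\ref{cor:dedicated_conjecture}, which is just Corollary~\ref{cor:pure_file_hyperbola} applied to the block-diagonal code, as you do directly. Your side remarks are also sound: under the proportional allocation each of the $s_i$ terms of $n(\harm{n_i}-\harm{n_i-s_i})$ is at least $n/n_i = k/s_i$, so $E_i(G(n_1,n_2)) \ge k$ directly, and nothing in the argument uses $\max\{s_1,s_2\}\ge 2$.
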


\begin{IEEEproof}
By Proposition~\ref{prop:local_dominates_global}, 
$E_i(G^n_\text{Global}) \ge E_i(G(n_1,n_2))$ for both $i$, 
hence $s_i/E_i(G^n_\text{Global}) \le s_i/E_i(G(n_1,n_2))$. 
Since $G(n_1,n_2)$ satisfies the conjecture by 
Corollary~\ref{cor:dedicated_conjecture}, the result follows.
\end{IEEEproof}

\subsection{Beyond File-Dedicated Codes}\label{ssec:hybrid}

The file-dedicated family of Section~\ref{ssec:dedicated} provides $n-k+1$ achievable operating points for fixed $n$, no two of which are Pareto-comparable. A natural question is whether the achievability region $\Lambda_{n,k}(s_1,s_2)$ contains points that are not achievable by any file-dedicated code.

Example~\ref{ex:hybrid_n8} answers this affirmatively for specific parameters: for $k=4$, $s_1=1$, $s_2=3$, $n=8$, the generator matrix ${G}^*$ achieves $E_1 \approx 3.84$, a value strictly between Cases~B and~C of the dedicated family, and is not dominated by either. This shows that for these parameters, the achievability region contains points not achievable by any file-dedicated code.

Whether ${G}^*$ is itself dominated by some other non-dedicated generator matrix remains open. We note that ${G}^*$ satisfies Conjecture~\ref{conj:hyperbola}, but sits further from the hyperbolic boundary than the dedicated codes, consistent with the intuition that mixed columns introduce a cost in terms of 
the overall trade-off efficiency.

A richer picture emerges when we allow the code length to grow. Given any two codes $G^A$ and $G^B$ with the same parameters $(n, k)$ achieving points $A = (E_1^A, E_2^A)$ and $B = (E_1^B, E_2^B)$ respectively, consider the block concatenation of $\lambda$ copies of $G^A$ and $\mu$ copies of $G^B$, yielding a $k \times (\lambda+\mu)n$ code. As $\lambda/\mu$ varies, the achievable point lies somewhere 
between $A$ and $B$, though its exact location depends on the 
interplay between the two codes and is difficult to characterize. This observation connects naturally to the asymptotic analysis of Section~\ref{sec:asymptotics}, where we characterize the achievability region as $n \to \infty$.

\section{Asymptotic Analysis of the Achievability Region}
\label{sec:asymptotics}

\subsection{Monotonicity in $n$}

We first establish that the achievability region can only grow as $n$ increases by multiples. 

\begin{proposition}\label{prop:monotone_mn}
For any $m \ge 1$ and any $(E_1, E_2) \in \Lambda_{n,k}(s_1,s_2)$, there exists a generator matrix of length $mn$ achieving the same pair $(E_1, E_2)$. Consequently, $\Lambda_{n,k}(s_1,s_2) \subseteq \Lambda_{mn,k}(s_1,s_2)$.
\end{proposition}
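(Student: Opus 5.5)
The construction is repetition: given $G\in\F_q^{k\times n}$ of rank $k$ achieving $(E_1,E_2)$, take $G^{(m)} \coloneqq [\,G \mid G \mid \cdots \mid G\,]\in\F_q^{k\times mn}$, the horizontal concatenation of $m$ copies of $G$. This matrix still has rank $k$. The claim is that $\taufile{F_i}(G^{(m)})$ and $\taufile{F_i}(G)$ have the same distribution for both $i$. The key step is a coupling of the two draw processes.

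For the coupling, write each column index of $G^{(m)}$ as a pair $(c,j)$ with $c\in[m]$ and $j\in[n]$, so that column $(c,j)$ equals $g_j$. Define the projection $\phi(c,j)=j$. If $\xi'_1,\xi'_2,\ldots$ are i.i.d.\ uniform on $[m]\times[n]$, then $\xi_r\coloneqq\phi(\xi'_r)$ are i.i.d.\ uniform on $[n]$, since each $j$ has exactly $m$ preimages. For every $t$, the span of the first $t$ drawn columns of $G^{(m)}$ equals $\langle g_{\xi_r} : r\le t\rangle$, because drawn vectors coincide draw by draw. The event $F_i\subseteq V_t$ is therefore the same in both processes. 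It follows that $\taufile{F_i}(G^{(m)})=\taufile{F_i}(G)$ pathwise under the coupling, and in particular the two expectations agree. This holds for both files simultaneously, so the pair $(E_1,E_2)$ is preserved.

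One could alternatively verify the claim through Lemma~\ref{lem:exp_formula_subset} by relating $\alpha$-counts of $G^{(m)}$ to those of $G$. That route is messy, since subsets can repeat copies of a column, and the coupling avoids it entirely. The only point needing care is the measure-theoretic justification that $\phi$ pushes the uniform distribution forward to the uniform distribution and preserves independence, which is immediate because $\phi$ is $m$-to-one. I expect no real obstacle. The inclusion $\Lambda_{n,k}\subseteq\Lambda_{mn,k}$ then follows directly from the definition \eqref{eq:achievability_region}.
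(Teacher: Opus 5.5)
Your proposal is correct and follows essentially the same route as the paper: repeat $G$ $m$ times, so that each column type $g_j$ is still drawn with probability $1/n$ and the retrieval-time distribution of each file is unchanged. Your explicit coupling via $\phi(c,j)=j$ is just a more formal version of the paper's observation that recovery depends only on which column types are drawn, not on which copy.
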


\begin{IEEEproof}
Given a generator matrix $G \in \F_q^{k \times n}$ achieving $(E_1, E_2)$, consider the repeated generator matrix $G^{(m)} = [G | G | \cdots | G] \in \F_q^{k \times mn}$ consisting of $m$ identical copies of $G$. 
Each column $g_j$ of $G$ appears $m$ times in $G^{(m)}$, so under uniform sampling from $mn$ columns, each column type $g_j$ is drawn with total probability $m/(mn) = 1/n$, identical to sampling from $G$. Since recovery of $F_i$ depends only on which column types are drawn and not which physical copy, the entire retrieval time distribution of $G^{(m)}$ is identical to that of $G$, giving $E_i(G^{(m)}) = E_i(G)$ for both $i$.
\end{IEEEproof}

\begin{remark}
Proposition~\ref{prop:monotone_mn} shows that $\Lambda_{n,k} \subseteq \Lambda_{mn,k}$ for any integer $m \ge 1$. The stronger statement that $\Lambda_{n,k} \subseteq \Lambda_{n+1,k}$ is geometrically natural and consistent with all our examples, but establishing it requires a more careful analysis and is left for future work. We note that the stronger statement does not hold for $s_1=s_2=1$ (see Remark~\ref{rem:monotone_counterexample}), but we conjecture it holds once $\max\{s_1,s_2\} \ge 2$
\end{remark}

\begin{conjecture}\label{conj:monotone}
For any $n \ge k$, and any partition $s_1 + s_2 = k$ with $\max\{s_1,s_2\} \ge 2$,  $\Lambda_{n,k}(s_1,s_2) \subseteq \Lambda_{n+1,k}(s_1,s_2)$ in the sense that for every $(E_1,E_2) \in \Lambda_{n,k}(s_1,s_2)$ there exists a a rank-$k$ matrix with $n+1$ columns achieving $(E_1',E_2')$ with $E_i' \le E_i$ for both $i \in \{1,2\}$.
\end{conjecture}

\begin{remark}\label{rem:monotone_counterexample}
The condition $\max\{s_1,s_2\} \ge 2$ in Conjecture~\ref{conj:monotone} is essential and cannot be removed. When $s_1 = s_2 = 1$, equivalently $k = 2$, the inclusion can fail. For example, take $s_1 = s_2 = 1$, $n = 3$, and the code with generator matrix
\[
    G \;=\; \begin{pmatrix} 1 & 1 & 0 \\ 0 & 0 & 1 \end{pmatrix},
\]
whose columns are $e_1, e_1, e_2$. It achieves $(E_1, E_2) = \bigl(\tfrac{3}{2}, 3\bigr)$, and no rank-$2$ matrix with $4$ columns achieves both $E_1 \le \tfrac{3}{2}$ and $E_2 \le 3$, so $\Lambda_{3,2}(1,1) \not\subseteq\Lambda_{4,2}(1,1)$. The failure relies on both files being one-dimensional: a single column places $F_1$
within reach, so concentrating $n-1$ columns on the $e_1$-line makes $E_1$ near-minimal
while forcing $E_2 = n$, and no length-$(n+1)$ matrix can improve both. This has no analog when $\max\{s_1,s_2\} \ge 2$, since recovering the larger file requires spanning a subspace of dimension at least $2$, which no single column achieves. 
\end{remark}

Given two achievable points at the same length $n$, one can form their block concatenation: if $G^A, G^B \in \F_q^{k \times n}$ achieve $(E_1^A, E_2^A)$ and $(E_1^B, E_2^B)$ respectively, then $[G^A | G^B] \in \F_q^{k \times 2n}$ is a valid code of length $2n$. Recovery of $F_i$ from $[\,G^A \mid G^B\,]$ succeeds whenever the drawn columns span $F_i$, which can happen using columns from $G^A$ alone, from $G^B$ alone, or from a combination of both. In particular, using only the $G^A$ columns, recovery of $F_i$ requires drawing $n_A$-type columns, each of which arrives with probability $n/(2n) = 1/2$, giving an upper bound $E_i([\,G^A \mid G^B\,]) \le 2E_i^A$, and symmetrically $\le 2E_i^B$. The concatenated code therefore achieves a point no worse than $2\min(E_i^A, E_i^B)$ in each coordinate, but its exact location in the achievability region depends on the interplay between the two codes and is difficult to characterize in general. Whether the concatenation always achieves a point strictly between $A$ and $B$ in the Pareto sense remains open.

\subsection{Asymptotic Achievability}
\begin{theorem}\label{thm:asymptotic_ub}
Fix any sequence of allocations $(n_1^{(n)}, n_2^{(n)})$ with $n_1^{(n)} + n_2^{(n)} = n$, $n_i^{(n)} \ge s_i$, and $n_i^{(n)}/n \to \rho_i$ as $n \to \infty$ for some $\rho_1, \rho_2 > 0$ with $\rho_1 + \rho_2 = 1$. Then the file-dedicated MDS code with this allocation satisfies:
\begin{equation}\label{eq:asymptotic_Ei}
    E_i(G(n_1^{(n)}, n_2^{(n)})) \;\to\; \frac{s_i}{\rho_i} 
    \quad \text{as } n \to \infty,
\end{equation}
and consequently $s_1/E_1 + s_2/E_2 \to 1$. Therefore, for any target point $(E_1^*, E_2^*)$ with $s_1/E_1^* + s_2/E_2^* = 1$ and $E_i^* > s_i$, the allocation $n_i^{(n)} = \lfloor ns_i/E_i^* \rfloor + \epsilon_i^{(n)}$, where $\epsilon_i^{(n)} \in \{0,1\}$ is chosen to ensure $n_1^{(n)} + n_2^{(n)} = n$, achieves $(E_1, E_2) \to (E_1^*, E_2^*)$.
\end{theorem}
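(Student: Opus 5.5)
The plan is to start from the exact formula of Proposition~\ref{prop:dedicated_MDS},
\[
E_i\bigl(G(n_1^{(n)},n_2^{(n)})\bigr) \;=\; \sum_{j=1}^{s_i}\frac{n}{n_i^{(n)}-j+1},
\]
and take the limit term by term. Since $s_i$ is fixed, this is a finite sum of $s_i$ terms. For each fixed $j\in\{1,\ldots,s_i\}$, I would write
\[
\frac{n}{n_i^{(n)}-j+1} \;=\; \frac{1}{n_i^{(n)}/n - (j-1)/n}.
\]
Because $n_i^{(n)}/n\to\rho_i>0$ and $(j-1)/n\to 0$, the denominator tends to $\rho_i$. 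The denominator is also eventually bounded away from zero, since $\rho_i>0$. So each term tends to $1/\rho_i$, and summing the $s_i$ terms gives $E_i\to s_i/\rho_i$, which is \eqref{eq:asymptotic_Ei}.

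By continuity of $x\mapsto s_i/x$ on $(0,\infty)$, it follows that $s_i/E_i\to\rho_i$. Hence $s_1/E_1+s_2/E_2\to\rho_1+\rho_2=1$.

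For the achievability of a target $(E_1^*,E_2^*)$ on the hyperbola, set $\rho_i\coloneqq s_i/E_i^*$. The hypothesis $E_i^*>s_i$ gives $\rho_i\in(0,1)$, and the hyperbola equation gives $\rho_1+\rho_2=1$. Next I check that the prescribed allocation satisfies the hypotheses of the first part.
\begin{itemize}
\item Because $ns_1/E_1^*+ns_2/E_2^*=n$ and each floor loses strictly less than $1$, we have $n-2<\lfloor n\rho_1\rfloor+\lfloor n\rho_2\rfloor\le n$. So the floors sum to $n-1$ or $n$, and the correction $\epsilon_i^{(n)}\in\{0,1\}$ (adding $1$ to one coordinate if needed) restores $n_1^{(n)}+n_2^{(n)}=n$.
\item The perturbation is $O(1)$, so $n_i^{(n)}/n\to\rho_i$.
\item Since $\rho_i>0$, we have $n_i^{(n)}\ge n\rho_i-1\ge s_i$ for all sufficiently large $n$. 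Validity for all large $n$ suffices for the limit.
\end{itemize}
Applying the first part then gives $E_i\to s_i/\rho_i=E_i^*$.

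The only mildly delicate step is this bookkeeping of the floor-plus-correction. One must check that the floors never undershoot by $2$, so that a single $\epsilon_i\in\{0,1\}$ suffices, and that the constraint $n_i\ge s_i$ holds eventually. The analytic limit itself is routine because the sum has a fixed number $s_i$ of terms.
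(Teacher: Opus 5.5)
Your proposal is correct and follows the paper's proof essentially step for step: the exact formula from Proposition~\ref{prop:dedicated_MDS}, the term-by-term limit $n/(n_i^{(n)}-j+1) = 1/(n_i^{(n)}/n-(j-1)/n)\to 1/\rho_i$, and the choice $\rho_i = s_i/E_i^*$ for the target point. Your extra bookkeeping tightens details the paper glosses over by simply taking $n_1^{(n)}=\lfloor n\rho_1\rfloor$ and $n_2^{(n)}=n-n_1^{(n)}$: you check that the floors sum to $n-1$ or $n$, so a single $\epsilon_i^{(n)}\in\{0,1\}$ suffices, and that $n_i^{(n)}\ge s_i$ holds for all large $n$.
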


\begin{IEEEproof}
By Proposition~\ref{prop:dedicated_MDS}, 
$$E_i(G(n_1^{(n)},n_2^{(n)})) = n(\harm{n_i^{(n)}} - 
\harm{n_i^{(n)}-s_i}).$$ Writing this as a sum:
\[
    E_i(G(n_1^{(n)},n_2^{(n)})) 
    \;=\; \sum_{j=1}^{s_i} \frac{n}{n_i^{(n)} - j + 1}.
\]
Since $n_i^{(n)}/n \to \rho_i$, for each fixed 
$j \in \{1,\ldots,s_i\}$:
\[
    \frac{n}{n_i^{(n)} - j + 1} 
    \;=\; \frac{1}{n_i^{(n)}/n - (j-1)/n} 
    \;\to\; \frac{1}{\rho_i},
\]
since $(j-1)/n \to 0$. Summing over $s_i$ terms each converging to $1/\rho_i$ gives $E_i \to s_i/\rho_i$. For any target point $(E_1^*, E_2^*)$ with $s_1/E_1^* + s_2/E_2^* = 1$ and $E_i^* > s_i$, set $\rho_i = s_i/E_i^*$, so that $\rho_1 + \rho_2 = 1$. Define $n_1^{(n)} = \lfloor n\rho_1 \rfloor$ and $n_2^{(n)} = n - n_1^{(n)}$, so that $n_1^{(n)} + n_2^{(n)} = n$ exactly and $n_i^{(n)}/n \to \rho_i$ as $n \to \infty$. By the first part of the theorem, $E_i \to s_i/\rho_i = E_i^*$.
\end{IEEEproof}

\begin{corollary}\label{cor:boundary_dense}
The hyperbolic curve $\{(E_1,E_2) : s_1/E_1 + s_2/E_2 = 1,\, E_i \ge s_i\}$ is the closure of the set of asymptotically achievable points via file-dedicated MDS codes. In particular, Conjecture~\ref{conj:hyperbola} is asymptotically tight and the bound $\le 1$ cannot be replaced by any strictly smaller constant without conditioning on specific values of $n$.
\end{corollary}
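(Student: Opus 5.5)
The corollary has two claims, and I would prove them in turn: (a) the hyperbolic arc is the closure of the points that file-dedicated MDS codes achieve asymptotically, and (b) the bound $\le 1$ cannot be replaced by a smaller constant.

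For (a), I would show two inclusions. First, every point of the arc is a limit of achieved points. Take $(E_1^*,E_2^*)$ with $s_1/E_1^*+s_2/E_2^*=1$ and $E_i^*>s_i$, and apply Theorem~\ref{thm:asymptotic_ub} with $\rho_i=s_i/E_i^*$. The allocation is valid for large $n$: $\rho_i>0$ forces $n_i^{(n)}=\lfloor n\rho_i\rfloor+\epsilon_i^{(n)}\ge s_i$ eventually. The theorem then gives $(E_1,E_2)\to(E_1^*,E_2^*)$. The two endpoints $(s_1,\infty)$ and $(\infty,s_2)$ are not finite points of the arc; the finite arc points with $E_i=s_i$ exactly are its limits as $E_{3-i}\to\infty$, so they lie in the closure as well.

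Second, every asymptotic limit lies on the arc. Take any convergent sequence of dedicated-MDS operating points $(E_1^{(n)},E_2^{(n)})$ with $n\to\infty$ and a finite limit. Pass to a subsequence along which $n_i^{(n)}/n\to\rho_i\in[0,1]$. A finite limit of $E_i$ forces $\rho_i>0$, since $E_i\ge n\,s_i/n_i$. Theorem~\ref{thm:asymptotic_ub} then gives the limit $(s_1/\rho_1,s_2/\rho_2)$ with $\rho_1+\rho_2=1$, which satisfies $s_1/E_1+s_2/E_2=1$ and $E_i=s_i/\rho_i\ge s_i$. So the set of limits sits on the arc, and since the arc is closed, taking closures finishes (a).

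For (b), suppose a constant $c<1$ satisfied $s_1/E_1+s_2/E_2\le c$ for all codes with $n$ unrestricted. Theorem~\ref{thm:asymptotic_ub} exhibits codes for which this quantity tends to $1$, so for large $n$ it exceeds $c$, a contradiction. I should phrase this carefully: for each fixed $n$ the supremum over $\Lambda_{n,k}$ may well be below $1$, since Corollary~\ref{cor:pure_file_hyperbola} together with the finite-$n$ excess $n(\harm{n_i}-\harm{n_i-s_i})>n s_i/n_i$ suggests it is. This is exactly the qualifier ``without conditioning on specific values of $n$.''

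I expect the main obstacle to be pinning down what ``closure of the set of asymptotically achievable points'' means, specifically the subsequence-compactness step in the reverse inclusion and the treatment of boundary points with $\rho_i\to0$. I would first resolve it by defining the asymptotically achievable points as the limits of convergent sequences of operating points with $n\to\infty$, and then handle $\rho_i=0$ by the divergence argument above.
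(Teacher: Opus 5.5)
Your proposal is correct and follows the paper's route: the paper reads the corollary off Theorem~\ref{thm:asymptotic_ub} (every arc point is a limit of file-dedicated MDS operating points, and $s_1/E_1+s_2/E_2\to 1$ rules out any constant $c<1$), and your subsequence argument, which uses $E_i\ge n s_i/n_i$ to exclude $\rho_i=0$, supplies the reverse inclusion that the paper leaves implicit. Two small corrections: the arc has no point with $E_i=s_i$ (that would force $s_{3-i}/E_{3-i}=0$), so your endpoint remark is vacuous rather than needed; and your aside that the fixed-$n$ supremum over $\Lambda_{n,k}$ is below $1$ is unnecessary and is not justified by Corollary~\ref{cor:pure_file_hyperbola}, which covers only pure-file codes (for codes with mixed columns it is the open Conjecture~\ref{conj:hyperbola}, and it can fail when $s_1=s_2=1$, cf.\ Remark~\ref{rem:s1s2_condition}).
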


\subsection{The Limiting Achievability Region}

Theorem~\ref{thm:asymptotic_ub} shows that the full hyperbolic boundary $\{s_1/E_1 + s_2/E_2 = 1,\, E_i \ge s_i\}$ is 
asymptotically achievable by file-dedicated MDS codes. Together 
with Conjecture~\ref{conj:hyperbola}, this gives a complete 
conjectured characterization of the limiting achievability region.

\begin{conjecture}\label{conj:limiting}
Denoting by $\mathrm{cl}(\cdot)$ the topological closure in 
$\mathbb{R}^2$, for any $s_1, s_2$ with $\max\{s_1,s_2\} \ge 2$:
\[
    \mathrm{cl}\!\left(\bigcup_{n \ge k} 
    \Lambda_{n,k}(s_1,s_2)\right)
    \;=\;
    \left\{(E_1,E_2) \in \mathbb{R}_{>0}^2 : \substack{
        \frac{s_1}{E_1} + \frac{s_2}{E_2} \le 1,\;\\
        E_i \ge s_i}
    \right\}.
\]
\end{conjecture}

The inclusion $\supseteq$ is established by Theorem~\ref{thm:asymptotic_ub} where we show that the full hyperbolic boundary is asymptotically achievable by file-dedicated MDS codes. The inclusion $\subseteq$, i.e., no sequence of codes can approach a point strictly outside the hyperbolic region, is equivalent to Conjecture~\ref{conj:hyperbola} applied uniformly over all $n$.


\section{Conclusion}\label{sec:conclusion}

We have introduced and analyzed the block-structured coded retrieval problem for two files, characterizing the achievability region $\Lambda_{n,k}(s_1,s_2)$ of expected retrieval time pairs $(E_1, E_2)$ over all linear codes of length $n$.

Our main contributions are as follows. We derived a family of linear lower bounds via mutual exclusivity of recovery sets, culminating in the tightest polytope cut of Corollary~\ref{cor:tightest_cut}. We then developed a nonlinear bound via column projection (Theorem~\ref{thm:nonlinear_bound}), which for codes with no mixed columns yields the hyperbolic constraint $s_1/E_1 + s_2/E_2 \le 1$ (Corollary~\ref{cor:pure_file_hyperbola}). We computed exact expected retrieval times for the identity code, file-dedicated MDS codes, and systematic global MDS codes and showed that all satisfy the hyperbolic constraint. For the global MDS code, we established dominance by the proportional local MDS allocation (Proposition~\ref{prop:local_dominates_global}), with a proof via convex ordering of hypergeometric distributions that is significantly simpler than, and extends, the argument of~\cite{abraham2024covering}.
 Finally, we characterized the limiting achievability region as $n \to \infty$: the hyperbolic boundary is asymptotically achieved by file-dedicated MDS codes (Theorem~\ref{thm:asymptotic_ub}), and the full limiting region is conjectured to equal the hyperbolic region (Conjecture~\ref{conj:limiting}).

The central open problem is Conjecture~\ref{conj:hyperbola}: 
that $s_1/E_1 + s_2/E_2 \le 1$ holds for every linear code whenever $\max\{s_1,s_2\} \ge 2$. We have established this for all pure-file codes and for global MDS codes, and verified it numerically for a large sample of random generator matrices, but a general proof remains elusive. The key challenge is bounding the joint contribution of mixed columns to both retrieval processes simultaneously, as discussed in Section~\ref{sec:nonlinear_bound}.

Several further directions remain open.

\smallskip
\noindent\textbf{Monotonicity in $n$.} 
We proved that $\Lambda_{n,k} \subseteq \Lambda_{mn,k}$ for any integer $m \ge 1$ via column duplication (Proposition~\ref{prop:monotone_mn}). The stronger statement $\Lambda_{n,k} \subseteq \Lambda_{n+1,k}$ for $\max\{s_1,s_2\}\ge 2$ is geometrically natural and consistent with all our examples, but requires more careful analysis and is left for future work (Conjecture~\ref{conj:monotone}).

\smallskip
\noindent\textbf{Non-divisible code lengths.} 
The dominance of proportional local MDS over global MDS (Proposition~\ref{prop:local_dominates_global}) requires $k \mid n$. For general $n$, the relationship between the floor/ceiling allocations $n_i \in \{\lfloor ns_i/k \rfloor, \lceil ns_i/k \rceil\}$ and the global MDS code requires careful analysis and is left for future work.

\smallskip
\noindent\textbf{Beyond file-dedicated codes.}  Example~\ref{ex:hybrid_n8} shows that for specific parameters, codes with mixed columns can achieve points in $\Lambda_{n,k}(s_1,s_2)$ not achievable by any file-dedicated code. Whether such points are dominated by other non-dedicated codes, and what the full Pareto boundary looks like for fixed $n$, remain open. The concatenation of two codes of the same length provides achievable points at length $2n$, but characterizing the exact location of these points in the achievability region is difficult in general.

\smallskip
\noindent\textbf{The multi-file case.} The natural generalization to $f > 2$ files introduces an achievability region in $\mathbb{R}^f$. The file-dedicated construction generalizes directly, providing $f$-dimensional operating points, but the structure of the achievability region, the appropriate generalization of the hyperbolic bound, and the analogue of Conjecture~\ref{conj:hyperbola} for $f > 2$ files all remain open.

\section*{Acknowledgment}
The author thanks the Bolzano.app project at Charles University for a helpful observation that led to Remark~\ref{rem:monotone_counterexample}.


\bibliographystyle{IEEEtran}
\bibliography{refs.bib}

\newpage
\begin{appendix}
\textit{Proof of Lemma~\ref{lem:hyp_representation}}:
First, we rewrite $E_i(G^n_\mathrm{Global})$ as a single sum. Since
$\binom{s}{s_i} = 0$ for $s < s_i$ and
$\harm{n} - \harm{n-k} = \sum_{s=0}^{k-1}\frac{1}{n-s}$,
Proposition~\ref{prop:MDS_Ei} gives
\begin{equation}\label{eq:glo_regrouped}
    E_i(G^n_\mathrm{Global})
    = n\sum_{s=0}^{k-1}\frac{1}{n-s}
      \left(1 - \frac{\binom{s}{s_i}}{\binom{n}{s_i}}\right).
\end{equation}
Next, we convert this sum into an expectation over $X_G$. By Vandermonde's identity, $$\binom{n}{s_i} - \binom{s}{s_i}
= \sum_{j=0}^{s_i}\binom{s}{s_i-j}\binom{n-s}{j} - \binom{s}{s_i-0}\binom{n-s}{0}=\sum_{j=1}^{s_i}\binom{s}{s_i-j}\binom{n-s}{j},$$ and
$\frac{1}{n-s}\binom{n-s}{j} = \frac{1}{j}\binom{n-s-1}{j-1}$. Substituting
both into~\eqref{eq:glo_regrouped} yields
\begin{equation}\label{eq:after_vandermonde}
    E_i(G^n_\mathrm{Global})
    = n\sum_{j=1}^{s_i}\frac{1}{j}\sum_{s=0}^{k-1}
      \frac{\binom{s}{s_i-j}\binom{n-s-1}{j-1}} {\binom{n}{s_i}}.
\end{equation}
To interpret the inner sum, arrange the $n$ columns in uniformly random order; then $X_G$ is distributed as the number of systematic columns of $F_i$ among the first $k$ positions, since the first $k$ positions form a uniformly
random $k$-subset. The summand in~\eqref{eq:after_vandermonde} is the probability that position $s+1$ holds a systematic column of $F_i$ with exactly $s_i-j$ such columns before it, i.e., that the $(s_i-j+1)$-th systematic column of $F_i$ occupies position $s+1$. Indeed, out of the $\binom{n}{s_i}$ equally likely placements of the systematic columns of $F_i$ among the $n$ positions, the placements realizing this event are counted by choosing $s_i-j$ of them among the first $s$ positions and the remaining $j-1$ among the last $n-s-1$ positions.
Summing over $s = 0,\ldots,k-1$, the inner sum equals $\Pr[X_G \ge s_i-j+1]$. Substituting $r = s_i-j+1$
into~\eqref{eq:after_vandermonde} gives
\[
    E_i(G^n_\mathrm{Global})
    = n\sum_{r=1}^{s_i}\frac{\Pr[X_G \ge r]}{s_i-r+1}
    = n\,\E\!\left[\sum_{r=1}^{X_G}\frac{1}{s_i-r+1}\right]
    = n\,\E[f(X_G)].
\]

Lastly, we treat the local code. By Proposition~\ref{prop:dedicated_MDS},
$E_i(G(n_1,n_2)) = n\sum_{s=0}^{s_i-1}\frac{1}{n_i-s}$. Since
$\binom{s}{s_i} = 0$ for every $s \le s_i - 1$, each summand may be multiplied by $1 - \binom{s}{s_i}/\binom{n_i}{s_i} = 1$, giving
\[
    E_i(G(n_1,n_2))
    = n\sum_{s=0}^{s_i-1}\frac{1}{n_i-s}
      \left(1 - \frac{\binom{s}{s_i}}{\binom{n_i}{s_i}}\right).
\]
The sum has the same form as the sum in~\eqref{eq:glo_regrouped}, with $(n_i, s_i)$ in place of $(n, k)$; note that the prefactor $n$ is unchanged, as it plays no role in the computation. Repeating the computation above with these parameters yields
$E_i(G(n_1,n_2)) = n\,\E[f(X_L)]$.
$\blacksquare$

\textit{Proof of Lemma~\ref{lem:hyp_convex_order}}: 
First, the two variables have the same mean. Writing a hypergeometric random variable as a sum of indicators, one per drawn element, each indicator has an expectation equal to the fraction of designated elements in the population,
hence
\[
    \E[X_G] = k\cdot\frac{s_i}{n} = \frac{ks_i}{n},
    \qquad
    \E[X_L] = s_i\cdot\frac{s_i}{n_i} = \frac{ks_i}{n},
\]
where the last equality holds precisely because $n_i = \frac{ns_i}{k}$.

If $n = k$, then $n_i = s_i$, so in both experiments the chosen subset is the entire population and $X_G = X_L = s_i$ deterministically; the claim then holds with equality. Assume therefore that $n \ge 2k$; then both variables are supported on all of
$\{0,\ldots,s_i\}$.

Next, we analyze the sign pattern of the difference between the two probability mass functions,
\[
    d(x) \;\coloneqq\; \Pr[X_G = x] - \Pr[X_L = x],
    \qquad x \in \{0, \ldots, s_i\}.
\]
By the symmetry of the hypergeometric distribution, we may write
$\Pr[X_G = x] = \binom{k}{x}\binom{n-k}{s_i-x}/\binom{n}{s_i}$; for $X_L$ the
two symmetric forms coincide. The increments of the log-likelihood ratio
$\ell(x) \coloneqq \log\Pr[X_G = x] - \log\Pr[X_L = x]$ are then, for
$x \in \{0, \ldots, s_i-1\}$,
\[
    \ell(x+1) - \ell(x)
    = \log\frac{k-x}{s_i-x}
    + \log\frac{(n_i-s_i)-(s_i-x)+1}{(n-k)-(s_i-x)+1}.
\]
The first term is increasing in $x$, since $\frac{k-x}{s_i-x} = 1 + \frac{k-s_i}{s_i-x}$ and $s_i < k$. The second term is also increasing in $x$: writing it as $\log\frac{A-y}{B-y}$ with $A = n_i - s_i$, $B = n - k$, and $y = s_i - x - 1$, the ratio is decreasing in $y$ since $A < B$, and $y$ is decreasing in $x$. Hence the increments of $\ell$ are nondecreasing.

We claim that, as $x$ runs from $0$ to $s_i$, the sign sequence of $d$ consists of nonnegative values, followed by negative values, followed again by nonnegative values, where the middle block is nonempty and at least one value in each outer block is strictly positive; we abbreviate this pattern by $+,-,+$. First, since $\ell(x) < 0$ exactly when $d(x) < 0$, the set $\{x : d(x) < 0\}$ is a set of consecutive integers: if $\ell$ were negative at two points and nonnegative at some point between them, then some increment of $\ell$ would be positive and a later one negative, contradicting the monotonicity of the increments. Therefore, the negative values of $d$ form a single block. Second, no block can cover the whole range, and the blocks cannot vanish in a way that leaves at most one sign change. Indeed, since both probability mass functions sum to $1$, we have $\sum_{x} d(x) = 0$, so $d$ must take both signs unless it vanishes identically. Moreover, if $d$ had exactly one sign change, say nonnegative up to some point and negative beyond, then the partial sums $\sum_{y \le x} d(y) = F_{X_G}(x) - F_{X_L}(x)$ would be nonnegative for all $x$, that is, $F_{X_G} \ge F_{X_L}$ everywhere. Since $\E[X] = \sum_{x=0}^{s_i-1}(1 - F_X(x))$ for any random variable on $\{0,\ldots,s_i\}$, this would force $\E[X_G] \le \E[X_L]$ with equality only if the two distributions coincide, and similarly for the mirrored pattern. As the means are equal, either the distributions coincide, in which case the claim of the lemma is immediate, or the sign sequence of $d$ is $+,-,+$, that is, $d$ changes sign exactly twice. In the latter case, since $\E[X_L] = \E[X_G]$, condition (3.A.57) of \cite[Theorem~3.A.44]{shaked2007stochastic} yields
$\E[\varphi(X_L)] \le \E[\varphi(X_G)]$ for every convex function $\varphi$. $\blacksquare$
\end{appendix}
\end{document}